\declaretheorem[name=Theorem]{thm}
\declaretheorem[name=Proposition,sibling=thm]{prop}
\declaretheorem[name=Lemma,sibling=thm]{lem}
\declaretheorem[name=Conjecture,sibling=thm]{conj}
\declaretheorem[name=Definition,sibling=thm,style=definition]{defn}
\declaretheorem[name=Corollary,sibling=thm]{cor}
\declaretheorem[name=Observation,sibling=thm]{obs}
\declaretheorem[name=Problem]{prob}
\declaretheorem[name=Assumption]{ass}
\declaretheorem[name=Consequence,sibling=ass]{consq}
\renewcommand{\int}{\ensuremath{\textup{\textsf{int}}}}
\newcommand{\conv}{\operatorname{conv}}
\newcommand{\xc}{\operatorname{xc}}
\newcommand{\R}{\mathbb{R}}
\newcommand{\Z}{\mathbb{Z}}
\newcommand{\Q}{\mathbb{Q}}
\newcommand{\N}{\mathbb{N}}
\newcommand{\ocp}{\operatorname{ocp}}
\newcommand{\STAB}{\operatorname{STAB}}
\newcommand{\stab}{\STAB}
\newcommand{\zerovec}{\mathbf{0}}
\newcommand{\onevec}{\mathbf{1}}
\newcommand{\EP}{Erd\H{o}s-P\'osa}
\newcommand{\surf}{\mathbb{S}}
\newcommand{\cut}{\text{\LeftScissors}}
\newcommand{\maxdet}{\Delta}
\newcommand{\sub}{P}
\newcommand{\slack}{Q}
\renewcommand{\ge}{\geqslant}
\renewcommand{\le}{\leqslant}
\title[Stable sets in bounded genus, bounded OCP graphs]{The stable set problem in graphs with bounded genus and bounded odd cycle packing number}
\author{Michele Conforti\textsuperscript{1}} 
\author{Samuel Fiorini\textsuperscript{2}}
\author{Tony Huynh\textsuperscript{3}} 
\author{Gwena\"el Joret\textsuperscript{4}} 
\author{Stefan Weltge\textsuperscript{5}}
\address[1]{Universit\`a degli Studi di Padova}
\address[2,3,4]{Universit\'{e} libre de Bruxelles}
\address[5]{Technische Universit\"at M\"unchen}
\begin{document}

\begin{abstract}
Consider the family of graphs without $ k $ node-disjoint odd cycles, where $ k $ is a constant.
Determining the complexity of the stable set problem for such graphs $ G $ is a long-standing problem.
We give a polynomial-time algorithm for the case that $ G $ can be further embedded in a (possibly non-orientable) surface of bounded genus.
Moreover, we obtain polynomial-size extended formulations for the respective stable set polytopes.

To this end, we show that $2$-sided odd cycles satisfy the Erd\H{o}s-P\'osa property in graphs embedded in a fixed surface.
This extends the fact that odd cycles satisfy the Erd\H{o}s-P\'osa property in graphs embedded in a fixed orientable surface (Kawarabayashi \& Nakamoto, 2007).

Eventually, our findings allow us to reduce the original problem to the problem of finding a minimum-cost non-negative integer circulation of a certain homology class, which turns out to be efficiently solvable in our case.
\end{abstract}

\maketitle

% ---
% ---
% ---
% ---
% ---

\section{Introduction}

The \emph{odd cycle packing number} of a graph $G$, denoted by $\ocp(G)$, is defined as the maximum number of node-disjoint odd cycles in $G$. Determining the complexity of the (maximum weight) stable set problem in graphs $G$ with bounded odd cycle packing number is a long-standing problem. It is known that the problem admits a PTAS whenever $ \ocp(G) $ is bounded by a constant. In fact, the stable set problem in graphs $G$ without an odd clique minor admits a PTAS, as shown by Tazari~\cite{tazari12}. Moreover, Bock, Faenza, Moldenhauer, and Ruiz-Vargas~\cite{BFMR14} state that the problem even admits a PTAS if $ \ocp(G) = O(\sqrt{|V(G)| / \log \log |V(G)|}) $. However, exact polynomial-time algorithms are only known for the cases $ \ocp(G) = 0 $ (bipartite graphs) and $ \ocp(G) = 1 $ (recently due to Artmann, Weismantel and Zenklusen~\cite{AWZ17} as explained below).

Let $ M \in \{0,1\}^{E(G) \times V(G)}$ denote the edge-node incidence matrix of $G$ and recall that the stable set problem can be written as determining $\max \, \{w^\intercal x \mid M x \leqslant \onevec,\ x \in \{0,1\}^{V(G)}\}$. Denoting by $\maxdet(M)$ the largest absolute value that the determinant of a square sub-matrix of $M$ can take, it is well-known that $\maxdet(M) = 2^{\ocp(G)}$ (see \cite{GKS95}). In particular, incidence matrices $ M $ of graphs with $ \ocp(G) \le 1$ satisfy $\maxdet(M) \le 2$. A main result in the recent work of Artmann \emph{et al.}~\cite{AWZ17} is that integer programs of this type can be solved in strongly polynomial time.

In light of recent work linking the complexity and structural properties of integer programs to the magnitude of their sub-determinants~\cite{Tardos86,DF94,VC09,BDEHN14,EV17,AWZ17,PSW}, it is tempting to believe that integer programs with bounded sub-determinants can be solved in polynomial time. This would imply in particular that the stable set problem on graphs with $\ocp(G) \leqslant k$ is polynomial for every fixed $k$. This remains open for $k \geqslant 2$.

Unfortunately, the approach of~\cite{AWZ17} provides limited insight on which properties of graphs with $ \ocp(G) \le 1 $ are relevant to derive efficient algorithms for graphs with higher odd cycle packing number. In contrast, Lov\'asz (see Seymour~\cite{seymour95}) gave a characterization of graphs without two node-disjoint odd cycles. The first published proof of Lov\'asz's theorem is due to Slilaty~\cite{slilaty07}. His proof uses matroid theory and extends Lov\'asz's theorem to signed graphs. Kawarabayashi \& Ozeki~\cite{KO13} later gave a short, purely graph-theoretical proof. A main implication is the following.

\begin{thm}[Lov\'asz, cited in~\cite{seymour95}] \label{thm:Lovasz}
Let $G$ be an internally $4$-connected graph without two node-disjoint odd cycles. Then
\begin{enumerate}[(i)]
    \item there is a set $X$ of at most three nodes such that $G - X$ is bipartite, or
    \item $G$ can be embedded in the projective plane such that every face is bounded by an even cycle.
\end{enumerate}
\end{thm}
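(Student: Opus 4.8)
The plan is to reduce the structural dichotomy to a single planarity question about an auxiliary graph obtained by ``cutting $G$ along an odd cycle''. First, if $G$ is bipartite then outcome~(i) holds with $X=\emptyset$, so assume $G$ is non-bipartite and fix a \emph{shortest} odd cycle $C$. Such a $C$ is induced: a chord would split $C$ into an even arc and an odd arc, and the odd arc together with the chord would be a strictly shorter odd cycle. Moreover, since $G$ has no two node-disjoint odd cycles, every odd cycle meets $V(C)$, so $H:=G-V(C)$ is bipartite; fix a proper $2$-colouring $c$ of $H$. Finally, if $|V(C)|\le 3$ then $X:=V(C)$ witnesses~(i), so from now on $|V(C)|=n\ge 5$.

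In the embedding we are after, $C$ is a one-sided simple closed curve and every face is even; cutting the projective plane along such a curve yields a disk whose boundary circle double-covers $C$. I would therefore pass to the ``doubled'' graph $\widehat G$: keep $H$, replace each $v_i\in V(C)$ by two copies $w_i$ and $w_{i+n}$, join all $2n$ copies into a cycle $\widehat C=w_1w_2\cdots w_{2n}$ in the antipodal order (so that identifying $w_i$ with $w_{i+n}$ recovers $C$), and attach each edge $v_ix$ of $G$ with $x\notin V(C)$ to one of the two copies of $v_i$. Because $n$ is odd, the two copies of $v_i$ get opposite colours in the alternating $2$-colouring of $\widehat C$; hence for each such edge exactly one copy of $v_i$ has colour different from $c(x)$, and attaching $v_ix$ there makes $\widehat G$ bipartite automatically. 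The whole statement then follows from the dichotomy: \emph{either outcome~(i) holds, or $\widehat G$ has a planar embedding in which $\widehat C$ bounds a face}. Indeed, a bipartite planar graph has all faces even, and re-identifying antipodal points of $\widehat C$ glues the outer disk into a projective plane, yielding an embedding of $G$ with all faces even; a short extra argument using that $G$ is $3$-connected upgrades ``even closed walk'' to ``even cycle''.

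To establish the dichotomy I would analyse the bridges of $C$ in $G$. Since $C$ is induced and $G$ is internally $4$-connected, each bridge is either a single vertex adjacent to exactly three vertices of $C$, or a ``fat'' bridge whose attachment set on $C$ has size at least four; the connectivity also bounds how such bridges can overlap. The hypothesis that $G$ has no two node-disjoint odd cycles is what controls the \emph{interaction} of different bridges: if the attachments of two bridges interleaved around $C$, one could route two node-disjoint odd cycles, one through each bridge closed up along complementary arcs of $C$ (with parities tracked via $c$). Bridges that do not interleave can be placed inside the disk bounded by $\widehat C$ without crossings, giving the required planar drawing; and a bridge that genuinely obstructs such a drawing, or forces an odd face, can be converted into a bounded set of vertices -- lying in $V(C)$ together with a constant number of bridge vertices -- meeting every odd cycle, i.e.\ into outcome~(i).

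The main obstacle is precisely this last step: proving that parity (carried by $c$), the absence of two disjoint odd cycles, and internal $4$-connectivity constrain the bridges so tightly that only the two advertised configurations survive -- equivalently, that a Kuratowski obstruction in $\widehat G$ or an unavoidable odd face always yields either two node-disjoint odd cycles or a three-vertex set whose deletion makes $G$ bipartite. I expect this to require a careful and somewhat case-heavy bridge analysis in the spirit of Kawarabayashi--Ozeki; alternatively it may be cleaner to pass to the even-cycle matroid of $G$ and invoke a Whitney-type planarity/representability criterion, as in Slilaty's proof, trading the graph-theoretic case checking for matroid structure theory.
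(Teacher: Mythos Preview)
The paper does not contain a proof of this theorem. Theorem~\ref{thm:Lovasz} is stated as background and attributed to Lov\'asz (cited via Seymour), with the remark that the first published proof is due to Slilaty using matroid methods and that a short graph-theoretical proof was later given by Kawarabayashi and Ozeki. There is therefore no ``paper's own proof'' against which to compare your proposal.

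As for the proposal itself: what you have written is a plan, not a proof, and you say so yourself. The construction of $\widehat G$ by doubling along a shortest odd cycle and the reduction to a planarity question with $\widehat C$ facial is a natural and essentially correct reformulation of outcome~(ii); this is close in spirit to the Kawarabayashi--Ozeki argument. However, the entire content of the theorem lies in the step you label ``the main obstacle'': showing that whenever the bridge structure obstructs the planar drawing of $\widehat G$ (or forces an odd face), one can extract either two node-disjoint odd cycles or a three-vertex odd cycle transversal. You do not carry this out; you only indicate that it should follow from a bridge analysis or, alternatively, from Slilaty's matroidal approach. Until that step is actually done, with internal $4$-connectivity used in a precise way to control how bridges attach and overlap on $C$, the argument is incomplete. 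In short: the reduction is right, but the proof has not started.
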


While in case~(i) the stable set problem can be easily solved in polynomial time, case~(ii) is far more challenging.
In an embedding of the latter type, every odd cycle corresponds to a simple closed curve with a neighborhood homeomorphic to a M\"obius strip.  We call such curves \emph{$1$-sided}.
A cycle is \emph{$ 2 $-sided} if it is not $ 1 $-sided.
Since the projective plane does not contain two disjoint M\"obius strips, it does not contain two disjoint $1$-sided curves. It follows that graphs with an even face embedding in the projective plane do not have two node-disjoint odd cycles.

With this observation, one can construct non-trivial graphs with $\ocp(G) \leqslant k$ by considering non-orientable surfaces of Euler genus $ k $. If a graph $G$ is embedded in a non-orientable surface $\surf$ with Euler genus $k$ in such a way that every odd cycle of $G$ is a $1$-sided simple closed curve in $\surf$, then $\ocp(G) \leqslant k$. Unfortunately, no structural result similar to Theorem~\ref{thm:Lovasz} is known for higher odd cycle packing numbers. However, in view of Theorem~\ref{thm:Lovasz}, embeddings as described above yield the most natural non-trivial class of graphs with $ \ocp(G) \leqslant k $.

This motivates our study of graphs with bounded odd cycle packing number that can be embedded in a surface with bounded Euler genus. The following is our main result.

\begin{thm} \label{thm:main}
Let $ k,g $ be fixed. Then there is a polynomial-time algorithm for finding a maximum-weight stable set in any graph $G$ with $\ocp(G) \leqslant k$ embedded in a (possibly non-orientable) surface with Euler genus at most $g$.
\end{thm}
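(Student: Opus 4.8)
My plan is to reduce the weighted stable set problem to a combinatorial optimization problem over a covering graph, exploiting the structure forced by the embedding. The starting point is the observation — implicit in the discussion above — that in a graph $G$ embedded in a surface $\surf$ of bounded Euler genus with $\ocp(G)\le k$, the odd cycles are severely constrained: there is a bounded number of disjoint odd cycles, and after deleting a bounded-size set of vertices (or contracting along a bounded number of curves), what remains should behave like a bipartite, planar-like instance. So the first step is a preprocessing/decomposition phase: find a bounded-size family of odd cycles realizing $\ocp(G)$, cut the surface along curves associated to these cycles, and thereby produce a bounded number of subinstances each of which is embedded in a surface of strictly smaller complexity, with the odd cycles confined to a controlled interface. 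Iterating this, one arrives at instances that are essentially even-embedded in a surface of bounded genus.

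The second step is to handle the even-embedded case. Here the key is the \EP{}-type statement advertised in the abstract: $2$-sided odd cycles satisfy the \EP{} property in graphs embedded in a fixed surface, which lets us assume that after removing $O_g(1)$ vertices there are \emph{no} $2$-sided odd cycles left, i.e.\ every remaining odd cycle is $1$-sided. Guessing the intersection of the optimal stable set with this bounded hitting set (there are only polynomially many choices), we reduce to a graph in which all odd cycles are $1$-sided simple closed curves. For such a graph, the natural move is to pass to the \emph{orientation double cover} (or more precisely, a suitable cover determined by the $\Z/2$-homology class detecting odd cycles): in the double cover $\widetilde G$, odd cycles of $G$ lift to \emph{paths} between the two copies of their endpoints, while even closed walks lift to closed walks, so $\widetilde G$ is bipartite. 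A stable set of $G$ corresponds to a stable set of $\widetilde G$ that is symmetric under the deck transformation, and maximizing over such symmetric stable sets is where the real work lies.

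The third and most technical step — which I expect to be the main obstacle — is turning the ``symmetric stable set in a bipartite graph embedded in a bounded-genus surface'' problem into something solvable in polynomial time. The plan is to use LP/polyhedral duality: for bipartite graphs the stable set polytope is described by the odd-cut / $b$-matching type inequalities, and the symmetry constraint from the deck transformation should be encodable as a homology constraint. Concretely, I would phrase the dual (a minimum-weight vertex cover, i.e.\ a minimum $s$--$t$ cut type object, respecting the symmetry) as finding a \emph{minimum-cost non-negative integer circulation in a fixed homology class} in the dual graph on the surface — this is exactly the reduction flagged in the abstract. Such circulation problems are solvable in polynomial time when the genus is bounded (the number of relevant homology classes among short circulations is polynomial, and within a class the problem is a min-cost flow), and one recovers an integral optimum, hence a polynomial-size extended formulation for $\stab(G)$ by projecting the circulation polytope. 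Assembling these pieces — decomposition into $O_g(1)$ even-embedded pieces, guessing over the bounded \EP{} hitting set, lifting to the double cover, and solving the homological min-cost circulation — yields the claimed polynomial-time algorithm for fixed $k,g$; the delicate point throughout is controlling how the ``guessing'' and ``cutting'' interact so that only polynomially many configurations arise and each is faithfully captured by a circulation instance.
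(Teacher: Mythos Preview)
Your high-level architecture matches the paper at both ends: hit the $2$-sided odd cycles with a bounded-size transversal via the \EP{} result (your Step~2), guess the optimum on that set, and eventually land on a min-cost non-negative integer circulation in a prescribed homology class on the surface. Your Step~1 (cutting the surface along a maximum odd-cycle packing) is not used in the paper and is not needed once you have Step~2.

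The genuine gap is Step~3. Passing to the orientation double cover $\widetilde G$ does make the graph bipartite, but it does not buy you anything: a $\tau$-symmetric point of $\STAB(\widetilde G)$ is determined by a single value $x(v)=x(v^1)=x(v^2)$ per $v\in V(G)$, and the edge inequalities of $\widetilde G$ collapse exactly to the edge inequalities of $G$. Thus the $\tau$-symmetric slice of $\STAB(\widetilde G)$ is the \emph{fractional} stable set polytope of $G$, and max-weight $\tau$-symmetric stable set in $\widetilde G$ is literally the original problem. Your sentence ``the symmetry constraint from the deck transformation should be encodable as a homology constraint'' is the whole difficulty, and nothing in the proposal explains how to do it.

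The paper's route from ``all odd closed walks are $1$-sided'' to ``circulation in a fixed homology class'' is quite different and does not go through a double cover. First, Nemhauser--Trotter preprocessing forces the node weights to be \emph{edge-induced}, i.e.\ $w = M^\intercal c$ for some $c\ge 0$; this is essential and you do not mention it. With edge-induced weights one may drop the box constraints entirely and optimize over $P(G)=\conv\{x\in\Z^{V}:Mx\le\onevec\}$ instead of $\STAB(G)$ (the vertices of $P(G)$ are still $0/1$). The affine slack map $\sigma(x)=\onevec-Mx$ then sends $P(G)$ isomorphically to a polyhedron $Q(G)\subseteq\R^{E(G)}$. The parity-consistent embedding lets one give the \emph{dual} graph an alternating orientation $D$, and under the identification $\R^{E(G)}\cong\R^{A(D)}$ the set $Q(G)$ is exactly the set of non-negative integer circulations in $D$ homologous to $\onevec$. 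Only then does the min-cost homologous circulation problem appear; it is solved via a cover graph over $\Z_2\times\Z^{g-1}$ and a polynomial number of min-cost flow computations. None of the ingredients Nemhauser--Trotter, $P(G)$, the slack map, or the alternating dual orientation appear in your proposal, and they are what makes the reduction go through.
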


Let us mention that if we do not impose a constant bound on the odd cycle packing number, then the stable set problem is already NP-hard for $ g = 0 $ (planar graphs). In contrast, we do not know whether the above statement is still correct if we consider graphs with bounded odd cycle packing number but arbitrary Euler genus.

Our work also has implications on linear descriptions of stable set polytopes. Recall that an \emph{extended formulation} of a polytope $ P $ is a description of the form $ P = \{ x \mid \exists y : Ax + By \le b \} $ whose \emph{size} is the number of inequalities in $ Ax + By \le b $.

\begin{thm} \label{thm:extfor}
Let $ k,g $ be fixed. Then there is a polynomial-size extended formulation for the stable set polytope of any graph $G$ with $\ocp(G) \leqslant k$ embedded in a surface with Euler genus at most $g$.
\end{thm}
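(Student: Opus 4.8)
The plan is to derive Theorem~\ref{thm:extfor} as a by-product of the algorithm behind Theorem~\ref{thm:main}, by making that algorithm's polyhedral certificate explicit. The standard route is: whenever a maximum-weight stable set can be computed by a polynomial-time algorithm that internally solves a polynomial number of linear (or integer-but-TU-like) programs, one can usually write down a polynomial-size extended formulation that "simulates" the algorithm. Concretely, I would follow the reduction announced at the end of the abstract — reduce the stable set problem on $G$ (with $\ocp(G)\le k$, Euler genus $\le g$) to finding a minimum-cost non-negative integer circulation in a certain class of a homology group — and show that the corresponding circulation polytope has a small extended formulation, then pull this back through the (affine, polynomially-sized) reduction to the stable set polytope.

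First I would recall the structural decomposition underlying Theorem~\ref{thm:main}: after using the bounded-genus embedding together with the \EP{} property for $2$-sided odd cycles, one removes a bounded number of nodes to obtain a graph whose odd cycles are all $1$-sided (an even-face embedding situation), and the stable set problem on such a graph is encoded by a flow/circulation problem on the (dual) surface graph, where the parity/homology constraints capture exactly which cycles are odd. The key point is that, for fixed $k$ and $g$, the number of relevant homology classes is bounded by a constant (the first homology group of a genus-$g$ surface has bounded rank, and only circulations of bounded "total homology weight" are relevant since $\ocp(G)\le k$). Hence the feasible region of the circulation problem decomposes into a constant number of pieces, each of which is a face of an ordinary (integral) network-flow polytope and therefore has a linear-size, box-TU extended formulation. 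Taking the convex hull of a constant number of such polytopes — via the standard disjunctive/Balas union-of-polytopes construction — yields a polynomial-size extended formulation of the circulation problem.

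Next I would handle the bounded set $X$ of removed nodes by branching: for each of the $O(2^{|X|}) = O(1)$ choices of which nodes of $X$ lie in the stable set, the remaining instance is of the even-face type and contributes one polytope with a small extended formulation; again taking the disjunctive union over the constantly many branches gives a polynomial-size extended formulation for $\stab(G)$. Throughout, one must check that the reduction from $\stab(G)$ to the circulation polytope is given by an affine map in polynomially many extra variables (so that extended formulations pull back correctly), and that the weights/objective play no role — extended formulations are about the polytope, not the objective, so it suffices that the same polytope works for all weight vectors, which is automatic once the reduction is purely combinatorial.

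\textbf{Main obstacle.} The hard part will be verifying that the homology-class decomposition genuinely has only a constant number of relevant classes \emph{and} that each class yields a polytope that is (a face of) an integral network-flow polytope, so that Balas's union-of-polytopes theorem applies with polynomial size. In particular, one has to rule out that the integrality of "minimum-cost non-negative integer circulation in a fixed homology class" hides a genuinely integer-programming difficulty: the claim must be that fixing the homology class linearizes the problem into an ordinary min-cost flow, whose polytope is integral and small. Making this precise — i.e., identifying the right constant-sized index set of homology classes, and showing the per-class polytope is integral with a linear-size formulation — is the crux, and it is exactly where the bounds $\ocp(G)\le k$ and Euler genus $\le g$ must both be used. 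Once that is in place, assembling the global extended formulation via disjunction and the affine pullback is routine.
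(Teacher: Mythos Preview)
Your overall architecture matches the paper's: branch over a bounded transversal $X$ and take a Balas union; for each branch, pass to the polyhedron $\slack(G)$ of nonnegative integer circulations in an alternating dual orientation $D$ with a fixed homology class; build a small extended formulation for $\slack(G)$ from flow polytopes; pull back. However, two of your ``routine'' steps hide the actual work.

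First, the step you flag as the obstacle is a genuine gap in your sketch. Fixing the homology class does \emph{not} by itself give an integral min-cost flow polytope, and the number of homology classes in $\Z_2\times\Z^{g-1}$ is not constant. The paper does not argue integrality class-by-class. Instead it proves (Proposition~\ref{prop:subG_01-vertices}) that every vertex of $\sub(G)$, hence of $\slack(G)$, is $0/1$, and then shows (Lemma~\ref{lem:decomposition}) that each such vertex decomposes as a sum of at most $\ell=O(g)$ characteristic vectors of strongly connected Eulerian subgraphs of $D$, each non-homologous to $\zerovec$. Each summand is lifted (Lemma~\ref{lemLift}) to a unit flow in a polynomial-size \emph{cover graph} $\bar D$ whose nodes are pairs $(f,b)$ with $b$ ranging over $\Z_2\times\{-|E(G)|,\dots,|E(G)|\}^{g-1}$; these unit-flow polytopes are the integral pieces. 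The extended formulation for $\slack(G)$ is then a Balas union, over polynomially many tuples $(f_1,b_1),\dots,(f_t,b_t)$ with $t\le\ell$ and $\sum b_i=(1,\zerovec)$, of Minkowski sums of these unit-flow polytopes. Your ``constant number of integral faces'' picture does not capture this; the cover-graph lift and the $O(g)$-term decomposition are the missing ideas.

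Second, your pullback is not ``automatic''. The algorithm's reduction from $\stab(G)$ to $\sub(G)$ (and hence to $\slack(G)$) goes through Nemhauser--Trotter preprocessing to obtain edge-induced weights, which is weight-dependent and thus useless for an extended formulation. The paper replaces this by the purely polyhedral identity $\stab(G)=\sub(G)\cap[0,1]^{V(G)}$ (Proposition~\ref{propStabVsSub}), which is nontrivial---it fails for general $0/1$ constraint matrices---and is proved separately. Without this, you have an extended formulation for $\sub(G)$ (via $\slack(G)$) but no bridge back to $\stab(G)$.
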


While proving Theorem~\ref{thm:main} and Theorem~\ref{thm:extfor}, we establish various intermediate results that may be of independent interest. Among them, we prove the following result.

\begin{thm} \label{EPodd2cycles}
There exists a computable function $f(g,k)$ such that for all graphs $G$ embedded in a surface with Euler genus $g$ and with no $k+1$ node-disjoint $2$-sided odd cycles, there exists $X \subseteq V(G)$ with $|X| \leq f(g,k)$ such that $G-X$ does not contain a $2$-sided odd cycle. 
Furthermore, there is such a set $X$ of size at most $19^{g+1}\cdot k$ if the surface is orientable.
\end{thm}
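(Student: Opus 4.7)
\medskip

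\noindent\textbf{Proof plan.}
The argument splits according to whether the surface is orientable. In the orientable case, every cycle of $G$ is $2$-sided, so the hypothesis reduces to the absence of $k+1$ node-disjoint odd cycles, and the desired bound $19^{g+1}\cdot k$ should follow from the Erd\H{o}s--P\'osa theorem of Kawarabayashi and Nakamoto (2007) for odd cycles in graphs on orientable surfaces, reproving their result with tracked constants as needed to match this exact form.

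For the non-orientable case, I would proceed by induction on the Euler genus $g$, with the orientable case (in particular $g=0$) as the base. Let $G$ be embedded in a non-orientable surface $\Sigma$ of Euler genus $g\ge 1$, with at most $k$ node-disjoint $2$-sided odd cycles. Without loss of generality I may assume the embedding is cellular; otherwise $G$ re-embeds in a surface of strictly smaller Euler genus (possibly orientable), to which the orientable case or a smaller-genus instance of the induction applies. Non-orientability together with cellularity then guarantee a $1$-sided cycle $C$ in $G$. Cutting $\Sigma$ along $C$ and capping the resulting boundary circle with a disk produces a surface $\Sigma'$ of Euler genus $g-1$ into which $G-V(C)$ embeds. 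Since any $2$-sided odd cycle of $G-V(C)$ remains a $2$-sided odd cycle of $G$, the hypothesis on the number of node-disjoint $2$-sided odd cycles is inherited, and the inductive hypothesis yields $X'\subseteq V(G-V(C))$ with $|X'|\le f(g-1,k)$ hitting all $2$-sided odd cycles of $G-V(C)$. Setting $X=X'\cup V(C)$ produces a hitting set for $G$ of size at most $f(g-1,k)+|V(C)|$, which defines $f(g,k)$ recursively.

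The hard part is to bound $|V(C)|$ by a computable function of $g$ and $k$, since a priori the shortest $1$-sided cycle in $G$ can grow with $|V(G)|$. I would attempt a case distinction based on the treewidth of $G$: if $\tw{G}$ exceeds a threshold depending on $g$ and $k$, then the standard excluded-grid/wall machinery should produce a large wall whose interaction with the non-orientability of $\Sigma$ yields $k+1$ node-disjoint $2$-sided odd cycles, contradicting the hypothesis; if $\tw{G}$ is bounded, either a short $1$-sided cycle can be extracted from a tree decomposition, or the hitting set problem can be solved directly by dynamic programming on the decomposition. Closing this case distinction, and carefully controlling the constants so that $f(g,k)$ is indeed computable, is where I expect the bulk of the technical work to reside.
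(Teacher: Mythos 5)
Your plan diverges from the paper's proof in ways that leave genuine gaps.

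\textbf{The orientable case.} You propose to obtain the bound $19^{g+1}\cdot k$ by ``reproving'' Kawarabayashi--Nakamoto with tracked constants. This cannot work: as the paper notes after the statement of the theorem, the Kawarabayashi--Nakamoto bound for orientable surfaces is \emph{exponential} in $k$, whereas $19^{g+1}\cdot k$ is linear in $k$ and linearity is best possible. Tightening constants in their argument will not change the order of growth in $k$; a genuinely different argument is needed. The paper obtains the linear bound by running the same facewidth-based induction it uses in general (Brunet--Mohar--Richter for surface-separating homotopic cycles, plus a short contradiction argument showing that a $2$-sided odd cycle in a large-facewidth orientable embedding would force non-orientability), and by handling the torus base case via a new ``node version'' of the Kawarabayashi--Nakamoto toroidal theorem (Theorem~\ref{th:torus_node}), which is not a constant-tracking exercise either.

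\textbf{The non-orientable inductive step.} You cut along a $1$-sided \emph{cycle of $G$} and remove $V(C)$, which, as you acknowledge, does not give a bounded cost: the shortest $1$-sided cycle of $G$ can have length $\Theta(|V(G)|)$. The paper instead cuts along a \emph{noose} --- a $G$-normal noncontractible simple closed curve --- which meets $G$ in only $t$ nodes where $t$ is the facewidth. The case split is then: if $t$ is bounded by a function of $g,k$, cut the noose, lose $t$ nodes, and recurse; if $t$ is large, use the facewidth to force $k+1$ node-disjoint $2$-sided odd cycles via Brunet--Mohar--Richter, Schrijver's theorem on the torus, de Graaf--Schrijver, and Robertson--Seymour linkage theorems. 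Your proposed treewidth dichotomy does not close the gap. On the bounded-treewidth side, treewidth gives no control on the length of a $1$-sided cycle: a single long $1$-sided cycle already has treewidth $2$, so ``extract a short $1$-sided cycle from a tree decomposition'' is not available, and ``solve the hitting-set problem directly by DP'' would give an algorithm, not the existential bound $|X|\leq f(g,k)$ required by the theorem. On the large-treewidth side, obtaining a wall is routine, but converting a wall into $k+1$ node-disjoint $2$-sided \emph{odd} cycles in a non-orientable embedding is precisely the hard construction; the paper's construction relies on facewidth (not treewidth), on the parity bookkeeping of Lemma~\ref{lem:equivalent_signatures} and properties \eqref{eq:prop1}--\eqref{eq:prop4}, and on the linkage theorem of Robertson and Seymour inside the smaller-genus pieces. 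None of that is supplied by a generic excluded-grid argument.

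\textbf{Missing base case.} You also do not address the projective plane base case, which in the paper needs a separate argument (planar double cover, painted walks, an uncrossing lemma, and a laminar-family extraction) and cannot be folded into the orientable case or the generic induction.
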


 Note that in a graph embedded in an orientable surface, every cycle is $2$-sided. Thus, the above statement extends the fact that odd cycles satisfy the Erd\H{o}s-P\'osa property in graphs embedded in a fixed orientable surface, which was proved by Kawarabayashi \& Nakamoto~\cite{KN_DM}. 
 We also remark that, for a fixed orientable surface, the upper bound in Theorem~\ref{EPodd2cycles} is linear in $k$, while the bound for orientable surfaces given in~\cite{KN_DM} is exponential in $k$. 
 A linear bound is best possible and answers a question from~\cite{KN_DM}.

 On the other hand, odd cycles in a graph embedded in a non-orientable surface do not satisfy the Erd\H{o}s-P\'osa property, as there are projective-planar graphs known as \emph{Escher walls} that do not contain two node-disjoint odd cycles but yet cannot be made bipartite by removing a constant number of nodes. In fact, Escher walls fall in case~(ii) of Theorem~\ref{thm:Lovasz}. These graphs were shown by Reed~\cite{Reed99} to be the sole obstruction for odd cycles to have the Erd\H{o}s-P\'osa property in general graphs.\footnote{We remark that the result of Kawarabayashi \& Nakamoto~\cite{KN_DM} also follows from Reed~\cite{Reed99} and the fact that the orientable genus of an Escher wall grows with its height, which follows from \cite{Mohar97}.}  

Another aspect of our work requires the study of the unbounded polyhedron
\[
    \sub(G) := \conv \{ x \in \Z^{V(G)} \mid Mx \le \onevec \},
\]
where $ M $ is again the edge-node incidence matrix of $ G $. We investigate several properties of this polyhedron and its relationship to the stable set polytope $ \stab(G) := \conv \{ x \in \{0,1\}^{V(G)} \mid Mx \le \onevec \} $ of $ G $. For instance, we show that every vertex of $ \sub(G) $ is still a $ 0/1 $-point. Furthermore, we establish the rather surprising identity $ \stab(G) = \sub(G) \cap [0,1]^{V(G)} $, which is false for general $ 0/1 $-matrices $ M $. We elaborate on these findings in Section~\ref{secDropConstraints} and Section~\ref{secEF}.

\subsection*{Organization of the paper}

We start our paper by recalling important notions and facts about surfaces and graphs embedded in surfaces in Section~\ref{secBasics}. In Section~\ref{sec:algorithm_overview} we describe our algorithm on a high level and state the relevant intermediate results. All steps of the algorithm are then explained in Sections~\ref{secNice}--\ref{secAlgorithm}. Proofs of all intermediate results are established in this part, except for Theorem~\ref{EPodd2cycles}.  The proof of Theorem~\ref{EPodd2cycles} is presented in the Section~\ref{secEP}, which completes the proof of Theorem~\ref{thm:main}. We discuss how to speed up parts of our algorithm in Section~\ref{sec:FPT}. In the final Section~\ref{secEF}, we demonstrate that our approach yields a polynomial-size extended formulation for stable set polytopes of graphs with bounded odd cycle packing number and bounded Euler genus, and hence establish the proof of Theorem~\ref{thm:extfor}.

\section{Basics on surfaces and graph embeddings}
\label{secBasics}

In this section, we recall important notions about surfaces and graphs embedded in surfaces. We refer the reader to the monograph by Mohar and Thomassen~\cite{MoharThom} for background material.

Throughout this paper, a \emph{surface} $ \surf $ is a non-empty compact connected Hausdorff topological space in which every point has a neighborhood that is homeomorphic to the plane $ \R^2 $. Let $\surf (h,c)$ be the surface obtained by cutting $2h+c$ disjoint open disks from the sphere and gluing $h$ cylinders and $c$ M\"obius strips onto the boundaries of these disks.  By the classification theorem for surfaces, every surface is homeomorphic to $\surf (h,c)$, for some $h$ and $c$.  The \emph{Euler genus} of a surface $\surf \cong \surf(h,c)$ is $2h+c$.

A simple closed curve in $\surf$ is said to be \emph{$1$-sided} if it has a neighborhood that is homeomorphic to a M\"obius strip, and \emph{$2$-sided} if it has a neighborhood that is homeomorphic to a cylinder. Every simple closed curve in $\surf$ is either $1$-sided or $2$-sided.
A surface $\surf$ is \emph{orientable} if all the simple closed curves in $\surf$ are $2$-sided.  Again by the classification theorem, a surface is uniquely determined by its Euler genus and whether or not it is orientable.

Consider a (finite, simple, undirected) graph $G$ embedded in $\surf$. The embedding maps the node set of $G$ to a set of distinct points of $\surf$ and the edge set of $G$ to a collection of internally disjoint simple open curves in $\surf$. For simplicity, we use $G$ both for the abstract graph and for the closed subset of $\surf$ that is the union of all points corresponding to nodes of $G$ and curves corresponding to edges of $G$.

The \emph{faces} of the embedded graph $G$ are the connected components of the open set $\surf \setminus G$. We denote the collection of faces of $G$ by $F(G)$. Each face $f$ of $G$ is a connected open set whose \emph{boundary} $\partial f$ is a subgraph of $G$.

A curve in $\surf$ is \emph{$G$-normal} if it only meets $G$ in nodes.  Its \emph{length} is the number of nodes of $G$ it intersects.  
A \emph{noose} is a $G$-normal, noncontractible, simple closed curved in $\surf$.
If $\surf$ is not the sphere, the \emph{facewidth} of the embedded graph $G$ is defined as the length of a shortest noose in $\surf$. Facewidth is not defined for graphs embedded in the sphere.

The embedding of $G$ in $\surf$ is \emph{cellular} if every face is homeomorphic to an open disk. Provided that $G$ is $2$-connected and $G$ is cellularly embedded in $\surf$ with facewidth at least $2$, the boundary $\partial f$ of each face $f$ is a cycle, see~\cite[Proposition 5.5.11]{MoharThom}.

Assume that $G$ is connected and cellularly embedded in $\surf$. Letting $g$ denote the Euler genus of $\surf$, a crucial fact is that
\begin{equation}
\label{eq:Euler_genus}
g = 2 + |E(G)| - |V(G)| - |F(G)|\,.
\end{equation}

A \emph{walk} in $G$ is a sequence $W = (v_0, e_1, v_1, \ldots, e_k, v_k)$ of nodes $v_i$ and edges $e_i$ such that for all $i \in [k]$, the edge $e_i$ has ends $v_{i-1}$ and $v_i$.
Let $W = (v_0, e_1, v_1, \ldots, e_k, v_k)$ be a walk.  The \emph{length} of $W$ is $k$, and $W$ is \emph{odd} or \emph{even} according to the parity of its length.  The \emph{ends} of $W$ are the nodes $v_0$ and $v_k$, which are respectively called the \emph{initial node} and the \emph{terminal node}. The walk $W$ is \emph{closed} if its ends coincide (that is, $v_0 = v_k$).  
The \emph{multiplicity} of node $v$ in $W$ is the number of indices $i$ such that $v_i = v$. In case $W$ is closed and $v = v_0 = v_k$, we subtract $1$ to define the multiplicity of $v$. A node is said to be \emph{repeated} if its multiplicity is larger than $1$. A walk is \emph{simple} if none of its nodes is repeated.

Let $W$ be a closed walk.  By regarding $W$ as an Eulerian multigraph, we can partition the edges of $W$ as $C_1 \cup \dots \cup C_\ell$, where each $C_i$ is a cycle.  We say that $W$ is \emph{$1$-sided} if the number of $1$-sided cycles among $C_1, \dots , C_\ell$ is odd; otherwise, $W$ is \emph{$2$-sided}.  We will see later that this definition does not depend on the choice of $C_1, \dots , C_\ell$.  For every face of $G$ one has a \emph{facial (closed) walk} that can be computed for instance using Edmonds' face traversal procedure~\cite{Edmonds60}. Facial walks of cellularly embedded graphs are always $2$-sided~\cite[Lemma 4.1.3]{MoharThom}.

\section{Algorithm overview}
\label{sec:algorithm_overview}

We start our paper by giving an overview of our algorithm, which consists of a sequence of polynomial-time reductions between different problems. Recall that we consider the following problem, in which we regard $ k $ and $ g $ as constants.

\begin{prob}
    \label{probOriginal}
    Given a graph $ G $ with $\ocp(G) \leqslant k$ embedded in a (possibly non-orientable) surface of Euler genus $ g $ and node weights $ w : V(G) \to \Q $, compute a maximum $ w $-weight stable set in $ G $.
\end{prob}

As a first step, we will apply some important preprocessing to obtain a more structured instance. The desired properties obtained after this step are summarized in the following definitions.

\begin{defn}
    Let $ G $ be a graph. We say that node weights $ w : V(G) \to \Q $ are \emph{edge-induced} (with respect to $ G $) if there exist non-negative edge weights $ w' : E(G) \to \Q_{\ge 0} $ such that $ w(v) = \sum_{e \in \delta(v)} w'(e) $ holds for all $ v \in V(G) $.
\end{defn}

\begin{defn}
    An embedding of a graph $G$ in a surface is \emph{parity-consistent} if it is cellular and if every odd closed walk is $ 1 $-sided.
\end{defn}

Notice that if a non-bipartite graph has a parity-consistent embedding in a surface $ \surf $, then $ \surf $ must be non-orientable. 

\begin{defn}
    A graph $G$ is said to satisfy the \emph{standard assumptions} if $G$ is $2$-connected, non-bipartite and embedded in a surface of Euler genus $g$ with a parity-consistent embedding. 
\end{defn}

Notice that if $G$ is parity-consistent embedded in a surface $\surf$ of Euler genus $g$, then every odd cycle of $G$ is $1$-sided.  Since $\surf$ contains at most $g$ disjoint $1$-sided curves, this implies $\ocp(G) \leqslant g$.  As we will show in Section~\ref{secNice}, we can efficiently reduce Problem~\ref{probOriginal} to the following, more restrictive problem.

\begin{prob}
    \label{probNice}
    Given a graph $G$ satisfying the standard assumptions and edge-induced node weights $ w : V(G) \to \Q $, compute a maximum $ w $-weight stable set in $ G $.
\end{prob}

We remark that one essential ingredient for reducing Problem~\ref{probOriginal} to Problem~\ref{probNice} will be Theorem~\ref{EPodd2cycles}, which we prove in Section~\ref{secEP}.

Recall that the stable set problem essentially asks for maximizing the linear function $ w(x) := \sum_{v \in V(G)} w(v) x(v) $ over $ \stab(G) $. In Section~\ref{secDropConstraints}, we will argue that Problem~\ref{probNice} is actually equivalent to the previous optimization task in which we replace $ \stab(G) $ by the unbounded polyhedron $ \sub(G) $.

\begin{prob}
    \label{probSubG}
    Given a graph $G$ satisfying the standard assumptions and edge-induced node weights $ w : V(G) \to \Q $, maximize $ w(x) $ over
    \[
        P(G) = \conv \{ x \in \Z^{V(G)} \mid x(v) + x(w) \le 1 \text{ for all } vw \in E(G) \}\,.
    \]
\end{prob}

By applying a suitable affine transformation from node space to edge space, we will show in Section~\ref{secSlackmap} that the above problem is equivalent to the following task. Here, for a walk $ W = (v_0,e_1,v_1,\dotsc,e_\ell,v_\ell) $ in $ G $ and a vector $ y \in \R^{E(G)} $, we use the notation
\[
    \omega_W(y) := \sum_{i=1}^\ell (-1)^{i-1} y(e_i)\,.
\]

\begin{prob}
    \label{probImageOfSubG}
    Given a graph $G$ satisfying the standard assumptions and non-negative edge costs $ c : E(G) \to \Q_{\ge 0} $, minimize $ c(y) := \sum_{e \in E(G)} c(e) y(e) $ over
    \[
        \slack(G) := \{ y \in \Z^{E(G)} : y \ge \zerovec, \, \omega_C(y) \text{ is odd}, \, \omega_W(y) = 0 \text{ for all even closed walks } W \}\,,
    \]
    where $ C $ is any fixed odd cycle in $ G $.
\end{prob}

We will see later that $ \slack(G) $ does not depend on the choice of $ C $.
Admittedly, the above description of the polyhedron $ \slack(G) $ is rather technical. As next steps, we will derive alternative descriptions of $ \slack(G) $ that appear to be more useful. To this end, we will equip the dual graph of $ G $ with a certain orientation. Note that the facial walk around a face $ f $ induces an order of the corresponding dual edges incident to the dual node $ f $.

\begin{defn}
    \label{defnDualOrientation}
    Let $G$ be a graph satisfying the standard assumptions. An orientation of the edges of the dual graph of $ G $ is called \emph{alternating} if in the local cyclic order of the edges incident to each dual node $ f $, the edges alternatively leave and enter $ f $.
\end{defn}

We will see that such an orientation always exists (and therefore it can be easily computed), provided that $ G $ satisfies the standard assumptions. Recall that there is a one-to-one correspondence between the edges of $ G $ and its dual graph. Thus, if $ D $ is an alternating orientation of the dual graph of $ G $, then we can view any vector $ y \in \R^{E(G)} $ also as a vector in the arc space $ \R^{A(D)} $ of $ D $. With this identification, we will see that the polyhedron $ \slack(G) $ coincides with the set of non-negative integer circulations in $ D $ that satisfy a few extra constraints. In fact, we will see that we can efficiently compute even closed walks $ W_1,\dotsc,W_{g-1} $ in $ G $ such that
\begin{align}
    \label{eqAlternativeDescriptionQ}
    \slack(G) = \conv \{ y \in \Z^{A(D)}_{\ge 0} \mid \ & y \text{ is a circulation in } D, \\
    \nonumber
    & \omega_C(y) \text{ is odd}, \, \omega_{W_1}(y) = \dotsb = \omega_{W_{g - 1}}(y) = 0 \}\,.
\end{align}

These claims will be established in Section~\ref{secDualOrientation}. For the sake of notation, let us bundle the above objects $ D, C, W_1,\dotsc,W_{g-1} $ using the following definition.

\begin{defn}
    \label{defnRepresentationSystem}
    Let $ G $ be a graph satisfying the standard assumptions. A \emph{dual representation} of $ G $ is an alternating orientation $ D $ of the dual graph of $ G $ together with the map $ \omega : \Z^{A(D)} \to \Z_2 \times \Z^{g - 1} $ defined via
    \[
        \omega(y) := (\omega_C(y) \ (\mathrm{mod} \, 2), \omega_{W_1}(y), \dotsc, \omega_{W_{g-1}}(y)) \in \Z_2 \times \Z^{g-1}\,,
    \]
    where $ C $ is an odd cycle in $ G $ and $ W_1,\dotsc,W_{g-1} $ are even closed walks in $ G $ that satisfy~\eqref{eqAlternativeDescriptionQ}.
\end{defn}

Note that if $ (D, \omega) $ is a dual representation of $ G $, then~\eqref{eqAlternativeDescriptionQ} reads
\[
    \slack(G) = \conv \{ y \in \Z^{A(D)}_{\ge 0} \mid y \text{ is a circulation in } D, \, \omega(y) = (1,\zerovec) \}\,.
\]

A final outcome of Section~\ref{secDualOrientation} will be the following:

\begin{prop}
    \label{propRepresentationSystem}
    Given a graph $G$ satisfying the standard assumptions, we can compute a dual representation of $G$ in polynomial time, provided that $ g $ is fixed.
\end{prop}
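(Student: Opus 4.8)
The plan is to construct the four pieces of a dual representation — the alternating orientation $D$, the odd cycle $C$, and the even closed walks $W_1,\dots,W_{g-1}$ — one at a time, verifying that each admits a polynomial-time algorithm under the standing assumption that $g$ is fixed. First I would address the existence and computation of an alternating orientation $D$ of the dual graph. Since $G$ satisfies the standard assumptions, the embedding is cellular and parity-consistent, so every facial walk is $2$-sided and hence (by the classification of closed walks into $1$- and $2$-sided) consists of an even number of $1$-sided cycles; more to the point, since every odd closed walk is $1$-sided, every facial walk, being $2$-sided, is even. Thus every dual node $f$ has even degree, which is exactly the parity condition needed for the edges around $f$ to be $2$-colored alternately "leaving/entering". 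The obstruction to assembling these local choices into a global orientation is a cocycle condition over $\Z_2$; I would argue that it vanishes precisely because the dual graph, with the cyclic orders inherited from the primal embedding, has all faces of even facewidth-type length — equivalently, one reduces to checking that a certain $\Z_2$-labelling is a coboundary, which can be done by a spanning-tree argument (fix the orientation on a spanning tree of the dual arbitrarily, then propagate; consistency around fundamental cycles is the even-degree condition). This is all linear-time once the rotation system is available.

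Next I would produce the odd cycle $C$: since $G$ is non-bipartite, breadth-first search from any node yields an odd closed walk, from which an odd cycle is extracted in linear time. The genuinely substantive part is computing the even closed walks $W_1,\dots,W_{g-1}$ that make~\eqref{eqAlternativeDescriptionQ} hold. Here the idea is homological: the map $\omega$ is meant to detect the homology class (over $\Z$ in the "even part" and over $\Z_2$ in the one twisted coordinate coming from orientability) of a circulation $y$ viewed in $D$. Concretely, the cycle space relevant to circulations in $D$ modulo the $\omega_W = 0$ conditions for even closed walks $W$ should have dimension $g$ (one $\Z_2$ coordinate plus $g-1$ integer coordinates), matching the first homology of the surface; so I would fix a cellular homology basis of $\surf$ and represent $g-1$ of its generators (the ones detected integrally) by explicit even closed walks in $G$. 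To find these walks algorithmically, I would use the fact that $G$ is cellularly embedded: contract a spanning tree of $G$ and pick a spanning tree of the dual among the remaining edges; the $g$ leftover edges give, via fundamental cycles, a homology basis, and each such cycle can be routed as a closed walk in $G$; parity can be fixed by, if necessary, concatenating with the odd cycle $C$ (an even closed walk times an even number of copies, or adjusting with two copies of an odd piece) to land in the even class. All of this is polynomial — indeed linear — in the size of $G$, with the dependence on $g$ entering only through the number of basis walks.

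Finally I would verify that the constructed data actually satisfies~\eqref{eqAlternativeDescriptionQ}, i.e.\ that the right-hand side equals $\slack(G)$ as defined in Problem~\ref{probImageOfSubG}. One inclusion is immediate from the definitions once one knows (to be established in Section~\ref{secDualOrientation}) that $\slack(G)$ consists of circulations in $D$; the reverse inclusion amounts to showing that the linear conditions "$\omega_W(y)=0$ for all even closed walks $W$" are implied by the finitely many conditions "$\omega_{W_1}(y)=\dots=\omega_{W_{g-1}}(y)=0$" together with "$y$ is a circulation", which is the homological statement that $W_1,\dots,W_{g-1}$ span (the integral part of) the appropriate cohomology. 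I expect the main obstacle to be precisely this last point: carefully setting up the correspondence between $\omega_W$ and a well-defined homology/cohomology pairing — in particular checking that $\omega_W(y)$ depends only on the homology class of the closed walk $W$ when $y$ is a circulation, and that the twisted $\Z_2$ coordinate interacts correctly with parity-consistency — and then confirming the basis has exactly $g-1$ integral generators plus the single $\Z_2$ generator. The algorithmic content (BFS for $C$, spanning-tree propagation for $D$, tree–cotree decomposition for the $W_i$) is routine; the bookkeeping of the homology is where care is needed, and I would lean on~\eqref{eq:Euler_genus} and the standard tree–cotree/Euler-characteristic count to pin down the dimension.
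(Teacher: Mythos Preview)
Your proposal is workable in spirit but takes a noticeably different route from the paper, and the part on the alternating orientation has a real gap as stated. The paper dispatches Proposition~\ref{propRepresentationSystem} by pointing to Lemmas~\ref{lemDualAlternatingOrientation} and~\ref{lemQofGWithCirculations}. For the orientation, the paper does \emph{not} argue via a $\Z_2$-cocycle on the dual: instead it uses the signature machinery. One fixes local orientations at the primal nodes, obtaining a signature $\Sigma$; parity-consistency plus Lemma~\ref{lemAuxSignatureIsWholeEdgeSet} lets one flip local orientations so that $\Sigma = E(G)$, i.e.\ the local orientations at the two ends of \emph{every} primal edge are inconsistent. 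In that situation each primal edge acquires a canonical transverse direction, which \emph{is} the alternating orientation of the dual (Figure~\ref{figDualOrientation}). Your spanning-tree-in-the-dual propagation idea is plausible, but the sentence ``consistency around fundamental cycles is the even-degree condition'' conflates a node condition (even degree of dual vertices) with a cycle condition (vanishing of the obstruction cocycle on fundamental cycles of the dual); these are not the same thing, and you have not actually identified why the obstruction vanishes. The paper's signature argument is both shorter and supplies the missing ingredient you would eventually need.

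For the walks $W_1,\dots,W_{g-1}$ the contrast is sharper. You set up a homology/cohomology pairing and a tree--cotree basis, then worry (rightly) about parity adjustments and about whether $\omega_W$ depends only on the homology class of $W$. The paper avoids all of this with a two-line dimension count: letting $L$ be the affine subspace cut out by $\omega_W(y)=0$ for all even closed walks and $Z$ the circulation space of $D$, one has $L = \sigma(\R^{V(G)})$ by Lemma~\ref{lemImageNodeSpace}, hence $\dim L = |V(G)|$, while $\dim Z = |E(G)| - |F(G)| + 1 = |V(G)| + g - 1$ by Euler's formula~\eqref{eq:Euler_genus}. Since $L \subseteq Z$ (facial walks are even and alternation turns $\omega_W=0$ into flow conservation), exactly $g-1$ of the even-walk equations suffice to cut $Z$ down to $L$, and the explicit $1$-tree construction from the proof of Lemma~\ref{lemImageNodeSpace} produces them. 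No homology pairing, no parity fixes, no basis verification. Your approach would ultimately work, but it rederives structure that the paper gets for free from linear algebra.
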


Although the constraint $ \omega(y) = (1,\zerovec) $ in the description of $ \slack(G) $ still looks technical, it turns out that the map $ \omega $ allows us to distinguish between \emph{homology classes} of integer circulations in $ D $. We will give a formal definition of this notion in Section~\ref{secCirc}. We will see that a non-negative integer circulation $ y $ of $ D $ is in $ \slack(G) $ if and only if it is homologous to the all-one circulation $ \onevec \in \Z^{A(D)} $, that is,
\[
    \slack(G) = \conv \{ y \in \Z^{A(D)}_{\ge 0} \mid y \text{ is a circulation in } D, \, y \text{ is homologous to } \onevec \}\,.
\]
These claims will be established in Section~\ref{secCirc}. As a consequence, Problem~\ref{probImageOfSubG} turns out to be equivalent to:
\begin{prob}
    \label{probCirculationHomologous}
    Given a graph $G$ satisfying the standard assumptions and non-negative edge costs $ c : E(G) \to \Q_{\ge 0} $, find a minimum $ c $-cost non-negative integer circulation in $ D $ that is homologous to $ \onevec \in \Z^{A(D)} $, where $ D $ is a dual orientation of $ G $.
\end{prob}
Finally, in Section~\ref{secAlgorithm} we will show that the above problem can be efficiently solved (provided that $ g $ is a constant) by reducing it to a sequence of min-cost flow problems.

\section{Standard assumptions and edge-induced weights}
\label{secNice}

In this section, we describe a procedure to reduce Problem~\ref{probOriginal} to Problem~\ref{probNice} that can be implemented to run in polynomial time, provided that $ g $ and $ k $ are constants. In the procedure, we will maintain a list of properties that we record as assumptions.

Let $ G $ be a graph with $\ocp(G) \leqslant k$ embedded in a surface of Euler genus $ g $ and let $ w : V(G) \to \Q $ be any node weights.

\begin{ass}
    \label{assOddCycles1Sided}
    Every odd closed walk in $ G $ is 1-sided.
\end{ass}

By Theorem~\ref{EPodd2cycles}, there is a set $ X $ of $ f(g) \cdot k$ nodes such that every odd cycle in 
$ G - X $ is 1-sided. By Lemma~\ref{EPodd2walks}, there is a set $Y$ of at most $ g $ nodes such that $G - X - Y$ has no $2$-sided odd closed walk. We can perform simple enumeration to find $X \cup Y$ and to reduce the original problem to a constant number of stable set problems in subgraphs of $ G - X -Y $. A more sophisticated algorithm is presented in Section~\ref{sec:FPT}. $ \diamond $

\begin{ass}
    \label{ass2Connected}
    $ G $ is $ 2 $-connected.
\end{ass}

If $ G $ is not $ 2 $-connected, it is an easy exercise to reduce the original problem to the computation of maximum-weight stable sets in at most $ 2|V(G)| $ subgraphs of $ G $ that are 2-connected. Clearly, any subgraph still satisfies Assumption~\ref{assOddCycles1Sided}. $ \diamond $

\begin{ass}
    \label{assEdgeInduced}
    $ w $ is edge-induced.
\end{ass}

Solve the linear program $ \max \{ w(x) \mid Mx \le \onevec, \, x \in [0,1]^{V(G)} \} $ to obtain an optimal solution $ x^* $. Let $ V_i := \{ v \in V(G) \mid x^*(v) = i \} $ for $ i \in \{0,1\} $. Nemhauser \& Trotter~\cite{NT74} showed that there exists a maximum weight stable set $ S^* $ with $ S^* \cap (V_0 \cup V_1) = V_1 $. Thus, we can delete the nodes in $ V_0 \cup V_1 $ and may assume that $ 0 < x^*(v) < 1 $ holds for all $ v \in V(G) $.

This implies that $ x^* $ is also optimal for $ \max \{ w(x) \mid Mx \le \onevec \} $. By LP-duality, this means that $ w $ (treated as a vector in $ \R^{V(G)} $) is a conic combination of the rows of $ M $, which yields the claim.

Note that the deletion of nodes might create a graph that violates Assumption~\ref{ass2Connected} (but that still satisfies Assumption~\ref{assOddCycles1Sided}). In this case, we repeat with the previous step. Since nodes get deleted whenever we repeat this process, we have to spend at most $ |V(G)| $ iterations on this. $ \diamond $

\begin{ass}
    \label{assNonBipartite}
    $ G $ is non-bipartite.
\end{ass}

It is well-known that if $ G $ is bipartite, then the weighted stable set problem can be solved in polynomial time, which means that we are done in this case. $ \diamond $

\begin{consq}
    \label{consqNonOrientable}
    $ \surf $ is non-orientable.
\end{consq}

If $ \surf $ is orientable, then every cycle in the embedding of $ G $ is 2-sided. By Assumption~\ref{assOddCycles1Sided} this means that $ G $ contains no odd cycle and hence is bipartite, a contradiction to Assumption~\ref{assNonBipartite}. $ \diamond $

\begin{ass}
    \label{assMinGenus}
    $ G $ cannot be embedded in a surface of Euler genus $ g - 1 $.
\end{ass}

For each fixed surface, one can efficiently compute whether a graph embeds into it and if yes, compute such an embedding, see~\cite{KMR}. Thus, we can check whether $ G $ can be embedded into a surface of Euler genus $ g - 1 $. If yes, we embed $ G $ into that surface and repeat all steps described above. Note that the total process will be repeated at most $ g $ times. $ \diamond $

\begin{consq} \label{consqCellular}
    The embedding of $ G $ is cellular.
\end{consq}

Assumption~\ref{assMinGenus} means that $ G $ is minimum-genus embedded. It is well-known that such embeddings satisfy the above property, see~\cite{PPTV}. (See also \cite[Propositions 3.4.1 and 3.4.2]{MoharThom}.) $ \diamond $

\section{Dropping constraints}
\label{secDropConstraints}

Next, we argue that Problem~\ref{probNice} is equivalent to Problem~\ref{probSubG}. To this end, we will make use of the following result.

\begin{prop} \label{prop:subG_01-vertices}
Every vertex of $ \sub(G) $ is a $ 0/1 $-point.
\end{prop}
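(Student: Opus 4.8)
The plan is to prove the contrapositive in a strong form: any integer point $x^*$ that is feasible for $Mx\le\onevec$ but has some coordinate outside $\{0,1\}$ is the midpoint of two \emph{distinct} integer feasible points, hence not a vertex of $\sub(G)$. This suffices, because $\sub(G)$ is a rational polyhedron (Meyer's theorem), and each of its vertices is an integer point satisfying $Mx\le\onevec$: a vertex cannot be a proper convex combination of points of $\sub(G)$, so it must coincide with one of the integer points generating the hull. So it remains to establish the averaging statement.

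Given such an $x^*$, I would introduce the \emph{clamped} point $\bar x\in\{0,1\}^{V(G)}$ defined by $\bar x(v)=1$ if $x^*(v)\ge 1$ and $\bar x(v)=0$ if $x^*(v)\le 0$. The key elementary observation is that $\bar x$ is feasible: an edge inequality $\bar x(u)+\bar x(v)\le 1$ can fail only if $\bar x(u)=\bar x(v)=1$, i.e.\ $x^*(u),x^*(v)\ge 1$, which would force $x^*(u)+x^*(v)\ge 2$, contradicting $Mx^*\le\onevec$. Hence $\bar x\in\stab(G)\subseteq\sub(G)$.

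Now set $d:=x^*-\bar x\in\Z^{V(G)}$, so that $x^*=\frac{1}{2}\big((x^*+d)+(x^*-d)\big)$ with $x^*-d=\bar x\in\sub(G)$. It remains to check that $x^*+d=2x^*-\bar x$ is feasible as well. Writing $\phi(t):=2t$ for integers $t\le 0$ and $\phi(t):=2t-1$ for integers $t\ge 1$ (so that $(x^*+d)(v)=\phi(x^*(v))$), feasibility of $x^*+d$ amounts to $\phi(x^*(u))+\phi(x^*(v))\le 1$ for every edge $uv$, and this follows from a one-line sign check using only $x^*(u)+x^*(v)\le 1$: if both values are $\le 0$ the sum is $2(x^*(u)+x^*(v))\le 0$; if exactly one is $\ge 1$ the sum is $2(x^*(u)+x^*(v))-1\le 1$; both $\ge 1$ is impossible. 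Thus $x^*+d\in\sub(G)$, and if $x^*$ is not a $0/1$-point then $d\neq\zerovec$, so $x^*+d\neq x^*-d$ and $x^*$ is not a vertex.

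The only genuine idea here is the choice of the two points to average — clamp $x^*$ to $[0,1]$ to obtain $\bar x$, then reflect $\bar x$ through $x^*$ — after which both feasibility verifications reduce to immediate sign case-analyses; I do not expect any serious obstacle. The one routine point to handle cleanly is the standard fact that a vertex of the integer hull $\sub(G)$ is itself an integer point satisfying $Mx\le\onevec$.
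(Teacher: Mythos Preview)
Your proof is correct and takes a genuinely different route from the paper's. Both arguments show that a non-$0/1$ integer feasible point $x^*$ is the midpoint of two distinct integer feasible points, but they construct those two points differently. The paper looks at the subgraph of \emph{tight} edges, picks a connected component $G'$ containing a node with value outside $\{0,1\}$, observes that $G'$ is bipartite with node values alternating between some $\alpha\ge 2$ and $1-\alpha$, and perturbs by the $\{-1,0,1\}$-vector $r$ that is $+1$ on one side of the bipartition, $-1$ on the other, and $0$ outside $G'$; feasibility of $x^*\pm r$ then follows from a short case analysis on which endpoints lie in $V'_\alpha$, $V'_{1-\alpha}$, or outside $V'$.

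Your clamp-and-reflect construction is more global and arguably more elementary: you never need to identify tight edges or connected components, and the feasibility checks for $\bar x$ and $2x^*-\bar x$ are immediate sign case-splits using only $x^*(u)+x^*(v)\le 1$. A small bonus of your approach is that one of the two averaging points, namely $\bar x$, is already a $0/1$-point (a stable set), whereas in the paper's construction neither $x^*+r$ nor $x^*-r$ need be. On the other hand, the paper's local perturbation keeps the direction in $\{-1,0,1\}^{V(G)}$ and is the more classical ``move along a tight face'' argument, which generalizes readily to other polyhedral settings. Both proofs are short; yours is a clean alternative.
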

\begin{proof}
Let $x^*$ be a vertex of $\sub(G)$. Observe that $ x^* $ is integer. Let $ E(x^*) := \{ vw \in E \mid x^*(v) + x^*(w) = 1 \} $ be the set of ``tight'' edges, and let $ G' = (V', E') $ be a connected component of the graph $ (V,E(x^*)) $. Then $ G' = (V', E') $ is bipartite, and there exists an integer $ \alpha \ge 1 $ such that  $V'_{\alpha} := \{ v \in V' \mid  x^*(v) = \alpha \} $ and $ V'_{1-\alpha} := \{ v \in V' \mid x^*(v) = 1 - \alpha \} $ is a bipartition of $G'$.  

Assume now that $ x^* \notin \{0,1\}^V $. Then we can assume that $ \alpha \ge 2 $. Let $ r \in \Z^V $ be the vector such that $ r(v) := 1 $ for all $ v \in V'_{\alpha} $, $r(v) := -1 $ for all $ v \in V'_{1 - \alpha} $, and $ r(v) := 0 $ for all $ v \in V \setminus V' $. We see that both $ x^* + r $ and $ x^* - r $ are integer points in $ \sub(G) $, contradicting that $ x^*$  is a vertex.
\end{proof}

In order to see that Problem~\ref{probNice} and Problem~\ref{probSubG} are equivalent, it suffices to show the following.
\begin{prop}
    \label{propDropConstraints}
    Let $ G $ be a graph with no bipartite connected component and let $ w : V(G) \to \R $ be edge-induced node weights. Then
    \[
        \max \left\{ w(x) \mid Mx \le \onevec, \, x \in \{0,1\}^{V(G)} \right\} = \max \left\{ w(x) \mid Mx \le \onevec, \, x \in \Z^{V(G)} \right\}\,,
    \]
    where $ M \in \{0,1\}^{E(G) \times V(G)} $ is an edge-node incidence matrix of $ G $.
\end{prop}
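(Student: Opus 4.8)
The plan is to show that the optimal value on the right-hand side is attained at a $0/1$-point, from which the equality is immediate (the $\le$ direction being trivial since $\{0,1\}^{V(G)} \subseteq \Z^{V(G)}$). First I would observe that the right-hand maximum, if finite, is attained at a vertex of $\sub(G)$, and by Proposition~\ref{prop:subG_01-vertices} every such vertex is a $0/1$-point; this would finish the argument modulo the issue that $\sub(G)$ is unbounded, so I must first rule out that $w(x)$ is unbounded above on $\sub(G)$. Since $w$ is edge-induced, write $w(v) = \sum_{e \in \delta(v)} w'(e)$ with $w' \ge \zerovec$; then for any $x$ with $Mx \le \onevec$ we have $w(x) = \sum_{v} w(v)x(v) = \sum_{e=vw} w'(e)\,(x(v)+x(w)) \le \sum_e w'(e) < \infty$, so the objective is bounded above and the maximum is attained.

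The only remaining gap is that a priori the maximum could be attained only on an unbounded face, i.e. the maximum could be achieved but not at a vertex, or $\sub(G)$ could have no vertices at all (if it contains a line). So the key step is to argue that $\sub(G)$ is pointed (has a vertex). For this I would use the hypothesis that $G$ has no bipartite connected component: if $\sub(G)$ contained a line with direction $r \in \R^{V(G)} \setminus \{\zerovec\}$, then $r$ and $-r$ would both lie in the recession cone, forcing $r(v)+r(w) = 0$ for every edge $vw$; picking any component $V'$ with $r \not\equiv 0$ on it, the sign pattern of $r$ would give a proper $2$-colouring of that component, contradicting non-bipartiteness. Hence $\sub(G)$ is pointed, the bounded objective $w(x)$ attains its maximum at a vertex of $\sub(G)$, and that vertex is $0/1$ by Proposition~\ref{prop:subG_01-vertices}. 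Therefore the right-hand maximum equals $\max\{w(x) \mid Mx \le \onevec,\ x \in \{0,1\}^{V(G)}\}$, as claimed.

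The main obstacle is this pointedness step: it is where the "no bipartite component" hypothesis is essential, and it mirrors exactly the mechanism in the proof of Proposition~\ref{prop:subG_01-vertices} (tight edges span bipartite subgraphs on which an integer recession direction alternates $\pm 1$). An alternative that avoids discussing vertices of an unbounded polyhedron directly: take any maximizer $x^* \in \Z^{V(G)}$ of the right-hand side, and if $x^* \notin \{0,1\}^{V(G)}$ apply the rounding argument of Proposition~\ref{prop:subG_01-vertices} verbatim — the vector $r$ supported on a non-bipartite... wait, on a connected component $V'$ of the tight-edge graph where some coordinate of $x^*$ lies outside $\{0,1\}$ — then both $x^* \pm r$ lie in $\sub(G)$, and since $w$ is edge-induced one checks $w(x^* + r) + w(x^* - r) = 2w(x^*)$, so one of $x^* \pm r$ is again optimal; iterating (each step strictly reduces $\sum_v \min\{x^*(v)^2, (1-x^*(v))^2\}$ or some similar potential, or simply reduces $\|x^*\|_1$-type measure) terminates at a $0/1$ optimum. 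I would present whichever of these two routes is cleaner; the vertex-based argument is shorter given Proposition~\ref{prop:subG_01-vertices} is already available.
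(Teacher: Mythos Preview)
Your proposal is correct and follows essentially the same route as the paper: bound $w(x)$ above using the edge-induced representation $w(x)=\sum_{e=vw} w'(e)(x(v)+x(w))\le \sum_e w'(e)$, argue that $\sub(G)$ is pointed because no component is bipartite, and conclude via Proposition~\ref{prop:subG_01-vertices} that the maximum is attained at a $0/1$ vertex. The only cosmetic difference is that the paper states pointedness by noting that $M$ has full column rank (equivalent to your lineality-space argument), and your alternative ``direct rounding'' route is unnecessary once the vertex-based argument is in hand.
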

\begin{proof}
    It suffices to show that $ \max \{ w(x) \mid x \in \sub(G) \} $ exists and that it is attained in a $ 0/1 $-point of $ \sub(G) $.

    Since $ w $ is edge-induced, there exist edge costs $ c : E(G) \to \R_{\ge 0} $ such that $ w(v) = \sum_{e \in \delta(v)} c(e) $ holds for all $ v \in V(G) $. For any vector $ x \in \R^{V(G)} $ that satisfies $ Mx \le \onevec $ we have
    \[
        w(x) = \sum_{v \in V(G)} w(v) x(v) = \sum_{vw \in E(G)} c(vw)(\underbrace{x(v) + x(w)}_{\le 1}) \le \sum_{e \in E(G)} c(e)\,.
    \]
    This means that $ \sup \{ w(x) \mid x \in \sub(G) \} $ is finite and hence the maximum is attained.

    Since no connected component of $ G $ is bipartite, $ M $ has full column-rank. This implies that $ \{ x \in \R^{V(G)} \mid Mx \le \onevec \} $ is pointed, and so is $ \sub(G) $. Hence, $ \max \{ w(x) \mid x \in \sub(G) \} $ is attained in a vertex of $ \sub(G) $, which, by Proposition~\ref{prop:subG_01-vertices} is a $ 0/1 $-point.
\end{proof}

\section{From node space to edge space}
\label{secSlackmap}

In this section, we will show that Problem~\ref{probSubG} is equivalent to Problem~\ref{probImageOfSubG}. To this end, consider the affine map $ \sigma : \R^{V(G)} \to \R^{E(G)} $ defined via
\[
    \sigma(x) := \onevec - Mx,
\]
where $ M $ is again a node-edge incidence matrix of $ G $. In other words, a vector $ x \in \R^{V(G)} $ is mapped to $ y = \sigma(x) \in \R^{E(G)} $ where $ y(vw) = 1 - x(v) - x(w) $ for every edge $ vw \in E(G) $.

Suppose that $ G $ is connected and non-bipartite. In this case, $ M $ has full column-rank and hence $ \sigma $ is injective. In particular, $ \sigma $ defines a bijection between $ \sub(G) $ and $ \sigma(\sub(G)) $. In this case it is clear that there exists an affine objective function assigning each vector in $ \sigma(\R^{E(G)}) $ the weight of its preimage in $ \R^{V(G)} $ under $ w $. This means that Problem~\ref{probSubG} is equivalent to optimizing a linear function over $ \sigma(\sub(G)) $. While the new objective function may not be unique, there is a natural choice for it in case of edge-induced weights $ w $:

\begin{obs}
    If the node weights $ w : \R^{V(G)} \to \R $ are edge-induced by $ c : E(G) \to \R_{\ge 0} $, then we have $ w(x) = c(E(G)) - c(y) $ for all $ x \in \R^{V(G)} $ and $ y = \sigma(x) $.
\end{obs}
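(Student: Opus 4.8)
The proof is a one-line computation that reuses the manipulation already carried out in the proof of Proposition~\ref{propDropConstraints}. The plan is as follows. Fix $ x \in \R^{V(G)} $ and set $ y := \sigma(x) $, so that $ y(vw) = 1 - x(v) - x(w) $ for every edge $ vw \in E(G) $. Expanding $ w(x) = \sum_{v \in V(G)} w(v) x(v) $ and substituting $ w(v) = \sum_{e \in \delta(v)} c(e) $, I would interchange the order of summation to regroup the terms edge by edge, obtaining
\[
    w(x) = \sum_{vw \in E(G)} c(vw)\bigl(x(v) + x(w)\bigr)\,.
\]
This step is purely formal — a reordering of a finite sum — and in particular does not use that $ x $ is a $ 0/1 $-vector, so it is valid on all of $ \R^{V(G)} $.

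Next I would expand $ c(y) = \sum_{e \in E(G)} c(e) y(e) $ and plug in $ y(vw) = 1 - x(v) - x(w) $; the sum splits as $ \sum_{vw \in E(G)} c(vw) $ minus $ \sum_{vw \in E(G)} c(vw)\bigl(x(v) + x(w)\bigr) $, where the first term is $ c(E(G)) $ and the second equals $ w(x) $ by the displayed identity. Hence $ c(y) = c(E(G)) - w(x) $, and rearranging gives the claim. There is no genuine obstacle here; the only point worth making explicit is that, since $ \sigma $ and both objective functions are defined on all of $ \R^{V(G)} $, the resulting identity is an identity of affine functions on $ \R^{V(G)} $, not merely an equality on $ \sub(G) $.
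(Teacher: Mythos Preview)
Your proposal is correct and is precisely the intended argument; the paper states this as an Observation without proof, and your computation is the natural unpacking of the identity $\sum_v w(v)x(v) = \sum_{vw} c(vw)(x(v)+x(w))$ already used in the proof of Proposition~\ref{propDropConstraints}.
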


Thus, maximizing $ w $ over $ \sub(G) $ is equivalent to minimizing $ c $ over $ \sigma(\sub(G)) $. Note also that $ c $ is non-negative. It remains to show the following.

\begin{prop}
    \label{propImageSubG}
    Let $ G $ be a connected graph and let $ C $ be an odd cycle in $ G $. Then
    \begin{align*}
        \sigma(\sub(G)) & = \conv \{ y \in \Z^{E(G)}_{\ge 0} \mid \omega_C(y) \text{ is odd}, \, \omega_W(y) = 0 \text{ for all even closed walks } W \} \\
        & = \slack(G) \,.
    \end{align*}
\end{prop}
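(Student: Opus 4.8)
The plan is to prove the two claimed equalities separately. The second equality, $\conv\{y\in\Z^{E(G)}_{\ge0}\mid \omega_C(y)\text{ odd},\ \omega_W(y)=0\text{ for all even closed walks }W\}=\slack(G)$, is essentially definitional: $\slack(G)$ was defined as exactly this set (before taking convex hull), so one only needs to observe that the set $\{y\in\Z^{E(G)}_{\ge0}\mid \omega_C(y)\text{ odd},\ \omega_W(y)=0\ \forall\,W\text{ even closed}\}$ equals its own convex hull intersected with the integers, i.e. that it is the integer-point set of a rational polyhedron whose vertices are integral — or more simply, just record that $\slack(G)$ as written \emph{is} the convex hull of that integer set, so the two displayed lines agree by unwinding notation. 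I would also need the earlier-promised remark that $\slack(G)$ does not depend on the choice of odd cycle $C$; but here $C$ is fixed on both sides, so nothing is required.

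The substance is the first equality, $\sigma(\sub(G))=\conv\{y\in\Z^{E(G)}_{\ge0}\mid \omega_C(y)\text{ odd},\ \omega_W(y)=0\ \forall\,W\text{ even closed}\}$. Since $\sigma$ is affine, $\sigma(\sub(G))=\sigma(\conv\{x\in\Z^{V(G)}\mid Mx\le\onevec\})=\conv\,\sigma(\{x\in\Z^{V(G)}\mid Mx\le\onevec\})$, so it suffices to prove the set equality at the level of integer points:
\[
    \sigma(\{x\in\Z^{V(G)}\mid Mx\le\onevec\})=\{y\in\Z^{E(G)}_{\ge0}\mid \omega_C(y)\text{ odd},\ \omega_W(y)=0\text{ for all even closed walks }W\}\,.
\]
For the inclusion ``$\subseteq$'': given integral $x$ with $Mx\le\onevec$, set $y=\sigma(x)$, so $y(vw)=1-x(v)-x(w)\in\Z$ and $y\ge\zerovec$ since $Mx\le\onevec$. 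For any closed walk $W=(v_0,e_1,v_1,\dots,e_\ell,v_\ell=v_0)$, writing $e_i=v_{i-1}v_i$, the alternating sum telescopes: $\omega_W(y)=\sum_{i=1}^\ell(-1)^{i-1}(1-x(v_{i-1})-x(v_i))$. The $x$-terms cancel in pairs along the walk (each $x(v_i)$ appears with opposite signs from $e_i$ and $e_{i+1}$, and $v_0=v_\ell$ closes the cycle), leaving $\omega_W(y)=\sum_{i=1}^\ell(-1)^{i-1}\cdot 1$, which is $0$ if $\ell$ is even and $1$ if $\ell$ is odd. Hence $\omega_W(y)=0$ for every even closed walk and $\omega_C(y)=1$ is odd. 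For the reverse inclusion ``$\supseteq$'': given integral $y\ge\zerovec$ with $\omega_C(y)$ odd and $\omega_W(y)=0$ for every even closed walk, I must produce integral $x$ with $\sigma(x)=y$, equivalently $x(v)+x(w)=1-y(vw)$ for all $vw\in E(G)$. Fix a spanning tree $T$ of $G$ (connected!), pick a root $r$, set $x(r)$ to be determined, and propagate $x$ along $T$ via $x(v):=1-y(e)-x(u)$ when $e=uv\in T$ with $u$ the parent; this defines $x$ uniquely once $x(r)$ is chosen, and every tree edge is satisfied by construction. For a non-tree edge $e=vw$, the tree path from $v$ to $w$ together with $e$ forms a cycle $C_e$; the constraint $x(v)+x(w)=1-y(e)$ is equivalent, after substituting the propagated values, to the condition $\omega_{C_e}(y)\equiv$ (length of $C_e$) $\bmod 2$ with the correct integer value — more precisely, the telescoping identity above run in reverse shows the edge constraint holds iff $\omega_{C_e}(y)=0$ when $C_e$ is even and $\omega_{C_e}(y)=1$ when $C_e$ is odd. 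For even $C_e$ this is given. For odd $C_e$: then $C_e$ and the fixed odd cycle $C$ are both odd, so some $\Z_2$-combination — concretely, the closed walk $C_e$ followed by $C$ (joined by a path and its reverse, which contributes $0$ to $\omega$) — is an even closed walk $W$, giving $\omega_{C_e}(y)+\omega_C(y)=\omega_W(y)=0$ (up to a sign bookkeeping from traversal direction), hence $\omega_{C_e}(y)=\omega_C(y)$, which is odd; one then checks the integer value is exactly $1$ using $\omega_C(y)$ odd together with a parity/sign argument, or chooses $x(r)$ to absorb a global shift. Finally, once all edge equations hold, $y=\sigma(x)=\onevec-Mx\ge\zerovec$ forces $Mx\le\onevec$, so $x\in\{x\in\Z^{V(G)}\mid Mx\le\onevec\}$.

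The main obstacle is the reverse inclusion, and within it the bookkeeping of \emph{signs} in the alternating sums $\omega_{C_e}$ and $\omega_W$ when cycles are concatenated: $\omega_W$ depends on the traversal direction and on which vertex is the ``start'' (shifting the start of an even closed walk by one reverses all signs), so I must be careful that ``$\omega_W(y)=0$ for all even closed walks'' is used in a parametrization-independent way — fortunately $0$ is sign- and start-invariant, which is exactly why the hypothesis is clean. The one genuinely delicate point is pinning down not just the parity but the \emph{exact integer value} $1$ of $\omega_{C_e}(y)$ for odd non-tree cycles; I expect this to come out automatically once $x(r)\in\Z$ is chosen consistently (the freedom in $x(r)$ is a single global $\pm1$-type shift, and because $G$ is non-bipartite — it contains the odd cycle $C$ — this freedom is rigid enough that exactly one choice works, or rather the parity constraint $\omega_C(y)$ odd is precisely what makes a consistent choice exist). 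I would organize the write-up so that this value-not-just-parity issue is handled once and for all by deriving the general telescoping identity $\omega_W(\onevec-Mx)=\sum_i(-1)^{i-1}\bigl(1-x(v_{i-1})-x(v_i)\bigr)=\bigl(\text{length of }W \bmod 2\text{, as an element of }\{0,1\}\bigr)$ for \emph{every} closed walk $W$ and every integral $x$, making both inclusions fall out of this single computation plus the spanning-tree propagation.
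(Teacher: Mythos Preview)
Your overall architecture is right: reduce to the integer-point equality $\sigma(A)=B$ and prove two inclusions via a telescoping computation of $\omega_W(\onevec-Mx)$. But the identity you want to organize everything around is false for odd closed walks. For a closed walk $W=(v_0,e_1,\dots,e_\ell,v_\ell)$ with $v_\ell=v_0$ one has
\[
\omega_W(\onevec-Mx)=\sum_{i=1}^{\ell}(-1)^{i-1}\bigl(1-x(v_{i-1})-x(v_i)\bigr)
=\begin{cases}0 & \ell\text{ even},\\ 1-2x(v_0) & \ell\text{ odd},\end{cases}
\]
not ``$1$'' in the odd case: the two occurrences of $x(v_0)$ (from $e_1$ and from $e_\ell$) carry the \emph{same} sign when $\ell$ is odd, so they do not cancel. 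This is not a cosmetic slip, because you explicitly plan to make both inclusions ``fall out of this single computation''. In the forward direction nothing is lost, since $1-2x(v_0)$ is odd for integral $x$. In the reverse direction, however, your consistency condition for an odd fundamental cycle $C_e$ is not ``$\omega_{C_e}(y)=1$'' but rather an equation that \emph{determines} the free parameter $x(r)$; the hypothesis ``$\omega_C(y)$ odd'' is exactly what forces that determined value to be an integer, and the hypothesis ``$\omega_W(y)=0$ for all even closed walks'' is what makes all odd fundamental cycles agree on it (since any two odd cycles concatenate to an even closed walk). Your sketch gestures at this (``chooses $x(r)$ to absorb a global shift'') but then reverts to the incorrect identity in the final paragraph.

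The paper's proof avoids this bookkeeping by separating the two issues. It first proves (as a standalone lemma) that $\sigma(\R^{V(G)})=\{y\mid \omega_W(y)=0\text{ for all even closed walks}\}$, which is a clean affine-subspace statement using only the even case of the telescoping and a dimension count via a spanning odd $1$-tree. Given $y\in B$, this lemma hands you some $x\in\R^{V(G)}$ with $\sigma(x)=y$; nonnegativity of $y$ gives $Mx\le\onevec$, and then the single computation $\omega_C(y)=1-2x(v_0)$ along the fixed odd cycle $C$, together with ``$\omega_C(y)$ odd'', yields $x(v_0)\in\Z$, after which integrality propagates along edges by connectivity. Your tree-propagation construction is essentially the explicit inverse of that lemma, so the two arguments are close cousins; the paper's packaging just makes the role of the odd-cycle hypothesis (integrality of $x$, not a value constraint on $\omega$) transparent.
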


Recall that for a walk $ W = (v_0,e_1,v_1,\dotsc,e_\ell,v_\ell) $ in $ G $ and a vector $ y \in \R^{E(G)} $, we defined
\[
    \omega_W(y) = \sum_{i=1}^\ell (-1)^{i-1} y(e_i)\,.
\]

A \emph{$1$-tree} is a graph that is a tree plus one extra edge. A $1$-tree is said to be \emph{odd} if its unique cycle is odd. The following basic observation will be used several times below.

\begin{obs} \label{obs:walks}
    Let $G$ be a connected non-bipartite graph, let $T$ be a spanning odd $1$-tree of $G$. For every edge $e \in E(G) \setminus E(T)$ there is both an odd closed walk $W_e^{\mathrm{odd}}$ in $T + e$ that contains $e$ exactly once and an even closed walk $W_e^{\mathrm{even}}$ in $T + e$ that contains $e$ exactly once.
\end{obs}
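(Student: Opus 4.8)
The plan is to build both closed walks explicitly from the unique cycle of the $1$-tree $T$ and the path structure that $T$ provides. Fix an edge $e = uv \in E(G) \setminus E(T)$. Since $T$ is a spanning tree plus one extra edge, it is connected, so there is a (unique) simple path $P_e$ in $T$ from $u$ to $v$; together with $e$ this path closes up to a closed walk $W_e^{(1)} := P_e + e$ in $T + e$ that uses $e$ exactly once. This handles one of the two required parities; we must produce a walk of the other parity, also using $e$ exactly once, and here the ``odd $1$-tree'' hypothesis enters.

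The key step is to exploit the unique cycle $O$ of $T$, which by assumption is odd. I would proceed as follows. Pick any node $p$ on the cycle $O$. Since $T$ is connected, there is a path $Q$ in $T$ from (an endpoint of $P_e$, say) $u$ to $p$. Now consider the closed walk that starts at $u$, traverses $Q$ to reach $p$, goes once around the odd cycle $O$ back to $p$, traverses $Q$ in reverse back to $u$, and then traverses $P_e + e$ back to $u$. Concatenating these pieces gives a closed walk contained in $T + e$ that uses $e$ exactly once (the only place $e$ can appear is in the $P_e + e$ portion, and $P_e \subseteq T$ avoids $e$). Its length is $2|E(Q)| + |E(O)| + |E(P_e)| + 1$, which has parity $|E(O)| + |E(P_e)| + 1 \pmod 2$, i.e. the opposite parity to $W_e^{(1)} = P_e + e$ since $|E(O)|$ is odd. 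So exactly one of $W_e^{(1)}$ and this second walk is odd and the other is even; call them $W_e^{\mathrm{odd}}$ and $W_e^{\mathrm{even}}$ accordingly. In both cases $e$ is used exactly once, as required.

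The only point that needs a little care — and the one I would expect to be the main (very minor) obstacle — is verifying that the second walk really uses $e$ \emph{exactly} once and not zero times or more: this is immediate because $e \notin E(T)$, so every occurrence of $e$ in any walk in $T + e$ must come from an explicit traversal of $e$, and in our construction the detour $Q$, the cycle $O$, and the path $P_e$ all lie inside $T$, leaving the single traversal of $e$ in the $P_e + e$ segment as the unique occurrence. One should also note that all these walks are genuinely walks in the multigraph sense (repeated nodes and repeated edges of $T$ are allowed), which is exactly what the definition of walk in Section~\ref{secBasics} permits, so no simplicity is claimed or needed. This completes the argument.
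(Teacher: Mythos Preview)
Your argument is correct, and the paper states this result as an Observation without proof, so there is no proof to compare against; your explicit construction is exactly the kind of elementary verification the authors leave to the reader.

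One tiny inaccuracy worth fixing: the simple path $P_e$ in $T$ from $u$ to $v$ need not be \emph{unique}, since $T$ is a $1$-tree rather than a tree and two nodes may be joined by two simple paths when both lie on (or are separated by) the unique cycle $O$. This does not affect your proof at all---you only need \emph{some} path $P_e \subseteq T$, and the parity computation goes through verbatim---so simply drop the parenthetical ``(unique)''.
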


In the proof of Proposition~\ref{propImageSubG} we will make use of the following lemma.

\begin{lem}
    \label{lemImageNodeSpace}
    For any connected non-bipartite graph $ G $ we have
    \[
        \sigma(\R^{V(G)}) = \{ y \in \R^{E(G)} \mid \omega_W(y) = 0 \text{ for all even closed walks } W \}\,.
    \]
\end{lem}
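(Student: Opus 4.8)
The plan is to prove the two inclusions separately, and the heart of the matter is characterizing the image of the incidence map $\sigma$ by a system of linear equations indexed by even closed walks. For the inclusion ``$\subseteq$'', I would argue directly: if $y = \sigma(x) = \onevec - Mx$, then for any walk $W = (v_0, e_1, v_1, \dots, e_\ell, v_\ell)$ we have $y(e_i) = 1 - x(v_{i-1}) - x(v_i)$, so the alternating sum telescopes,
\[
    \omega_W(y) = \sum_{i=1}^\ell (-1)^{i-1}\bigl(1 - x(v_{i-1}) - x(v_i)\bigr) = \Bigl(\sum_{i=1}^\ell (-1)^{i-1}\Bigr) - x(v_0) - (-1)^{\ell-1} x(v_\ell) - \sum_{i=1}^{\ell-1}\bigl((-1)^{i-1} + (-1)^{i}\bigr) x(v_i).
\]
The inner sum vanishes termwise since $(-1)^{i-1} + (-1)^i = 0$. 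When $W$ is an even closed walk, $\ell$ is even so the constant sum is $0$, and $v_0 = v_\ell$ with $(-1)^{\ell-1} = -1$, so the two boundary terms cancel as well, giving $\omega_W(y) = 0$. This establishes $\sigma(\R^{V(G)}) \subseteq \{y : \omega_W(y) = 0 \text{ for all even closed walks } W\}$.

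For the reverse inclusion, suppose $y \in \R^{E(G)}$ satisfies $\omega_W(y) = 0$ for every even closed walk $W$; I must produce $x \in \R^{V(G)}$ with $\sigma(x) = y$. Fix a spanning odd $1$-tree $T$ of $G$ (which exists since $G$ is connected and non-bipartite). First I would show that the system $\onevec - Mx = y$ restricted to the edges of $T$ has a unique solution $x$: the odd unicyclic graph $T$ has full column rank incidence matrix (its cycle space is trivial over $\R$ once we exploit oddness — equivalently, an odd $1$-tree is not bipartite, so its edge-node incidence matrix has full column rank), so there is exactly one $x$ with $1 - x(u) - x(v) = y(uv)$ for all $uv \in E(T)$. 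It then remains to check that this same $x$ satisfies $1 - x(u) - x(v) = y(uv)$ for every non-tree edge $e = uv \in E(G) \setminus E(T)$. Here I would invoke Observation~\ref{obs:walks}: let $W_e^{\mathrm{even}}$ be an even closed walk in $T + e$ using $e$ exactly once. Splitting $\omega_{W_e^{\mathrm{even}}}(y)$ into the contribution of $e$ and the contribution of the tree edges of the walk, and using that $\sigma(x)$ already agrees with $y$ on $T$ together with the ``$\subseteq$'' direction applied to the even closed walk $W_e^{\mathrm{even}}$ inside $T+e$ (whose $\sigma(x)$-alternating sum is $0$), one deduces $\pm(y(e) - (1 - x(u) - x(v))) = \omega_{W_e^{\mathrm{even}}}(y) = 0$, hence $y(e) = 1 - x(u) - x(v)$ as desired. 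Thus $\sigma(x) = y$.

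The main obstacle — really the only subtle point — is the bookkeeping in the last step: one must be careful that the edge $e$ appears in $W_e^{\mathrm{even}}$ at an index of a definite parity so that its coefficient in the alternating sum is $\pm y(e)$ and does not get cancelled, and that the remaining terms of $\omega_{W_e^{\mathrm{even}}}(y)$ are exactly the alternating sum along the $T$-path closing up $e$, on which $y$ and $\sigma(x)$ coincide. The cleanest way to handle this is to note that for a closed walk the alternating sum $\omega_W$ is well-defined up to sign (independent of the chosen starting vertex/direction only up to the global sign $(-1)^{?}$), and to just compute $\omega_{W_e^{\mathrm{even}}}(y) - \omega_{W_e^{\mathrm{even}}}(\sigma(x))$, which equals a nonzero multiple of $y(e) - \sigma(x)(e)$ since $y$ and $\sigma(x)$ agree on all tree edges and $e$ occurs exactly once; both terms being $0$ forces $y(e) = \sigma(x)(e)$. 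Everything else is routine linear algebra.
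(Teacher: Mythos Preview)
Your proof is correct, and the ``$\subseteq$'' direction is exactly the paper's telescoping computation. The reverse inclusion, however, is argued differently. The paper does not construct the preimage $x$ explicitly; instead it counts dimensions: since $\sigma$ is injective, $\dim(\sigma(\R^{V(G)})) = |V(G)|$, and on the other side the even closed walks $W_e^{\mathrm{even}}$ for $e \in E(G)\setminus E(T)$ give $|E(G)|-|V(G)|$ linearly independent equations, forcing $\dim(Y) \le |V(G)|$; together with the inclusion this yields equality. Your route uses the same ingredients (the spanning odd $1$-tree and the walks $W_e^{\mathrm{even}}$) but in a constructive way: you solve the square full-rank system on $E(T)$ to produce $x$, and then use $\omega_{W_e^{\mathrm{even}}}(y)-\omega_{W_e^{\mathrm{even}}}(\sigma(x)) = \pm(y(e)-\sigma(x)(e))$ to propagate the equality to non-tree edges. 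The dimension-counting argument is a bit shorter and avoids the sign bookkeeping you flag at the end, while your argument has the virtue of actually producing the preimage, which is useful if one later needs the map $\sigma^{-1}$ explicitly (as indeed happens in the proof of Proposition~\ref{propImageSubG}).
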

\begin{proof}
    Let $ Y $ denote the subspace on the right-hand side of the above equation. To see that $ \sigma(\R^{V(G)}) \subseteq Y $ let $ x \in \R^{V(G)} $ and set $ y := \sigma(x) $. For every even walk $W = (v_0,e_1,v_1, \ldots, e_{2k},v_{2k})$, we have
    \begin{align*}
        \omega_W(y)
        &= \sum_{i=1}^{2k} (-1)^{i-1} y(e_i)\\
        &= \sum_{i=1}^{2k} (-1)^{i-1} (1 - x(v_{i-1}) - x(v_{i}))\\
        &= x(v_{2k}) - x(v_0)\,.
    \end{align*}
    If $W$ is closed, then $v_{2k} = v_0$ and thus $\omega_W(y) = 0$. Hence, we obtain $ y \in Y $.

    Next, we show that $ \dim(Y) \le \dim(\sigma(\R^{V(G)})) $, which yields $ \sigma(\R^{V(G)}) = Y $. To this end, recall that $ \sigma $ is injective since $ G $ is connected and non-bipartite, and hence $ \dim(\sigma(\R^{V(G)})) = \dim(\R^{V(G)}) = |V(G)| $. Thus, it remains to show $ \dim(Y) \le |V(G)| $.

    Let $T$ be a spanning odd $1$-tree of $G$. By Observation~\ref{obs:walks}, for each edge $ e \in E(G) \setminus E(T) $ there is an even closed walk $W_e^{\mathrm{even}}$ in $T + e$ traversing $e$ exactly once.

    In this way we can construct a family of even closed walks $ \mathcal{W} $ such that for every edge $ e \in E(G) \setminus E(T) $ there is exactly one walk in $ \mathcal{W} $ that contains it. Thus, the corresponding equations $ \omega_W(y) = 0 $, $ W \in \mathcal{W} $ are linearly independent and hence
    \[
        \dim(Y) \le |E(G)| - |\mathcal{W}| = |E(G)| - (|E(G) \setminus E(T)|) = |E(T)| = |V(G)|\,. \qedhere
    \]
\end{proof}

We leave to the reader to check that Lemma~\ref{lemImageNodeSpace} is actually true for all graphs.

\begin{proof}[Proof of Proposition~\ref{propImageSubG}]
    It suffices to show that the two sets
    \begin{align*}
        A & := \{ x \in \Z^{V(G)} \mid Mx \le \onevec \}, \\
        B & := \{ y \in \Z^{E(G)} \mid y \ge \zerovec, \, \omega_C(y) \text{ is odd}, \, \omega_W(y) = 0 \text{ for all even closed walks } W \}
    \end{align*}
    satisfy $ \sigma(A) = B $.

    To see that $ \sigma(A) \subseteq B $, let $ x \in A $ and consider $ y := \sigma(x) = \onevec - Mx $. It is clear that $ y \ge \zerovec $ and by Lemma~\ref{lemImageNodeSpace} we also have $ \omega_W(y) = 0 $ for every even closed walk $ W $. Furthermore, we have
    \[
        \omega_C(y) \equiv \sum_{vw \in E(C)} y(vw) = \sum_{vw \in E(C)} (1 - x(v) - x(w)) = |C| - \sum_{v \in V(C)} 2 x(v) \equiv 1 \pmod 2\,.
    \]
    To see that $ B \subseteq \sigma(A) $, let $ y \in B $. By Lemma~\ref{lemImageNodeSpace} we know that there exists a vector $ x \in \R^{V(G)} $ with $ \sigma(x) = y $. Since $ y $ is non-negative, we clearly have $ Mx \le \onevec $. It remains to show that $ x $ is integer. To this end, let $e_1 = v_0v_1$, $e_2 = v_1v_2$, \ldots, $e_{2k+1} = v_{2k}v_{2k+1}$ denote the edges of $C$, where $v_{2k+1} = v_0$. We have
    \[
        \sum_{i=1}^{2k+1} (-1)^{i-1} y(e_i) = \sum_{i=1}^{2k+1} (-1)^{i-1} (1 - x(v_{i-1}) - x(v_{i})) = 1 - 2x(v_0).
    \]
    Since the left-hand side is odd, we conclude that $x(v_0)$ is integer. Since $ 1 - x(v) - x(w) = y(vw) $ is an integer for all edges $vw$ of $G$ and since $ G $ is connected, this implies that $x(v)$ is integer for all nodes $v$ of $G$.
\end{proof}

\section{Towards a circulation problem}
\label{secDualOrientation}

So far, we have seen that we can efficiently reduce Problem~\ref{probOriginal} to Problem~\ref{probImageOfSubG}, in which we are given a graph $G$ satisfying the standard assumptions and a linear objective with non-negative coefficients that we want to minimize over the polyhedron
\begin{align*}
    \slack(G) & = \sigma(\sub(G)) \\
    & = \conv \{ y \in \Z^{E(G)}_{\ge 0} \mid \omega_C(y) \text{ is odd}, \, \omega_W(y) = 0 \text{ for all even closed walks } W \}\,,
\end{align*}
where $ C $ is any odd cycle in $ G $. The aim of this section is to provide an alternative description of $ \slack(G) $ in the sense of Equation~\eqref{eqAlternativeDescriptionQ}. To this end, let us consider the dual graph of $ G $, which we will equip with a certain orientation:

\begin{lem}
    \label{lemDualAlternatingOrientation}
    Given a graph $G$ satisfying the standard assumptions, there exists an alternating orientation $ D $ of the dual graph of $ G $.
\end{lem}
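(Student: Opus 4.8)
The plan is to construct the alternating orientation $D$ of the dual graph directly from the parity-consistent embedding of $G$. Recall that the dual node $f$ corresponds to a face of $G$, and the facial walk around $f$ is a closed walk that is even (facial walks of cellular embeddings are $2$-sided, but we need more: we will use that the embedding is parity-consistent, so \emph{every} odd closed walk is $1$-sided, hence every $2$-sided closed walk is even — in particular every facial walk is even). The even length of each facial walk is exactly what makes an alternating orientation locally possible: around a dual node $f$ of even degree (degree $= $ length of the facial walk), we can orient the incident dual edges so that they alternate leave/enter. The real content is to show that these local alternations can be chosen \emph{consistently} across all faces, i.e. that a single global orientation of $E(G^*)$ restricts to an alternating pattern at every dual node simultaneously.

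First I would set up the following combinatorial reformulation. Fix a reference orientation of the dual graph $G^*$ arbitrarily; then any orientation of $G^*$ is encoded by a vector $s \in \Z_2^{E(G^*)} = \Z_2^{E(G)}$ (with $s(e) = 0$ meaning "agree with the reference" and $s(e)=1$ meaning "reversed"). Using the local cyclic order of dual edges around $f$ induced by the facial walk (as mentioned in the paragraph before Definition~\ref{defnDualOrientation}), list the dual edges at $f$ as $e_1^f, \dotsc, e_{d_f}^f$ in cyclic order, where $d_f$ is even. The condition that the orientation alternates leave/enter at $f$ becomes a system of linear equations over $\Z_2$: for each consecutive pair $e_i^f, e_{i+1}^f$ (indices mod $d_f$) we require $s(e_i^f) + s(e_{i+1}^f) = b_i^f$ for a constant $b_i^f \in \Z_2$ determined by the reference orientation (namely $b_i^f$ records whether, under the reference orientation, the two edges already alternate or not at that spot). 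So the whole problem is: show this $\Z_2$-linear system is feasible. Summing the $d_f$ equations around $f$, the left side telescopes to $0$ (each $s(e_i^f)$ appears twice, since $d_f$ is even), so a \emph{necessary} condition for solvability of the local system at $f$ is $\sum_{i=1}^{d_f} b_i^f = 0$; and conversely, since the $d_f$ equations around a single $f$ form a single even cycle of constraints, that condition is also sufficient for the subsystem at $f$ alone. I would check that $\sum_i b_i^f = 0$ holds precisely because the facial walk has even length — this is a small local computation relating the reference orientation to the cyclic rotation.

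The main obstacle is the \emph{global} consistency: each dual edge $e = ff'$ appears in exactly two face constraint-systems (at $f$ and at $f'$), so the equations are coupled, and feasibility of each subsystem separately does not immediately give feasibility of the union. To handle this I would pass to the standard linear-algebra criterion: the system $\{s(e_i^f) + s(e_{i+1}^f) = b_i^f\}$ is feasible over $\Z_2$ iff for every $\Z_2$-linear dependence among the left-hand sides, the same combination of right-hand sides vanishes. The left-hand sides are indexed by "corners" (consecutive dual-edge pairs at a face); a dependence is a subset $S$ of corners such that every dual edge lies in an even number of corners of $S$. Using Euler's formula~\eqref{eq:Euler_genus} and the structure of the dual graph, one identifies a spanning set of such dependences (one "around each face," plus $g$ extra global ones coming from the non-trivial $\Z_2$-homology of $\surf$), and for each I would verify that the corresponding sum of $b_i^f$'s is zero — the per-face ones by the even-facial-walk computation above, and the $g$ global/homological ones by exactly the parity-consistency hypothesis (an odd closed walk being $1$-sided is the statement that the relevant homology class pairs correctly with the $\Z_2$-cohomology class of the reference orientation). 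Since all obstructions vanish, the system is feasible, and any solution $s$ gives the desired alternating orientation $D$. I expect the bookkeeping tying the "$g$ global dependences" to parity-consistency to be the delicate part; everything else is routine $\Z_2$-linear algebra on the surface.
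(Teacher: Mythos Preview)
Your approach is genuinely different from the paper's, and the step you flag as ``delicate'' is in fact where essentially all the content lies; as written, this is a plausible strategy rather than a proof.

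The paper works entirely on the primal side and avoids the $\Z_2$-linear system on corners altogether. It chooses arbitrary local orientations at each node of $G$, obtaining a signature $\Sigma \subseteq E(G)$. Parity-consistency says every odd closed walk is $\Sigma$-odd. A short auxiliary lemma (Lemma~\ref{lemAuxSignatureIsWholeEdgeSet}, proved by a spanning odd $1$-tree argument) shows that for a connected non-bipartite graph this condition forces $\Sigma \triangle \delta(v_1) \triangle \cdots \triangle \delta(v_k) = E(G)$ for some nodes $v_1,\dots,v_k$. Flipping the local orientations at those nodes makes $\Sigma = E(G)$: now \emph{every} edge has inconsistent local orientations at its two endpoints. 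Such a pair of inconsistent rotations canonically picks a transverse direction across the edge, yielding a dual orientation, and a one-picture local check shows it alternates around each face.

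Your global-consistency step---identifying the $g$ homological dependences among the corner-equations and verifying that the corresponding $b$-sums vanish from parity-consistency---is precisely the substance of the lemma, and you have not carried it out (nor made explicit which cochain on the dual corresponds to your reference orientation, so that the pairing with primal parity can actually be computed). The paper's signature/re-signing argument \emph{is} a clean execution of that step, phrased on the primal graph where it becomes a two-paragraph lemma rather than a cohomological computation. If you pushed your approach through, you would end up proving something equivalent to Lemma~\ref{lemAuxSignatureIsWholeEdgeSet} in dual language; the paper's route is shorter because it never needs to name the homology classes at all.
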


Recall that an orientation of the edges of the dual graph of $ G $ is called \emph{alternating} if in the local cyclic ordering of the edges incident to each dual node $ f $, the edges alternatively leave and enter $ f $, see Definition~\ref{defnDualOrientation}. Note that the existence of the alternating orientation $ D $ also implies that we can efficiently compute it.

Before we provide a proof of Lemma~\ref{lemDualAlternatingOrientation}, let us demonstrate that this yields an alternative description of $ \slack(G) $. Recall that the edges of the dual graph are associated to the edges of $ G $. So, given an alternating orientation $ D $ of the dual graph, we can identify points in $ \Z^{E(G)} $ with points in $ \Z^{A(D)} $.

\begin{lem}
    \label{lemQofGWithCirculations}
    Let $G$ be a graph satisfying the standard assumptions, $ C $ be an odd cycle in $ G $, and $ D $ be an alternating orientation of the dual graph of $ G $. Then we can efficiently find even closed walks $ W_1,\dotsc,W_{g-1} $ in $ G $ such that
    \begin{align*}
        \slack(G) = \conv \{ y \in \Z^{A(D)}_{\ge 0} \mid \ & y \text{ is a circulation in } D, \\
        & \omega_C(y) \text{ is odd}, \, \omega_{W_1}(y) = \dotsb = \omega_{W_{g - 1}}(y) = 0 \}\,.
    \end{align*}
\end{lem}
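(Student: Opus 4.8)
The plan is to show that the set $\slack(G) = \sigma(\sub(G))$, which by Proposition~\ref{propImageSubG} is cut out inside $\Z^{E(G)}_{\ge 0}$ by the single parity condition $\omega_C(y)$ odd together with the infinite family of linear equations $\omega_W(y) = 0$ for all even closed walks $W$, can in fact be described using only finitely many of those equations: namely the circulation equations (one per dual node $f$, expressing $\omega_{\partial f}(y) = 0$ for the facial walk around $f$) plus $g-1$ carefully chosen extra ones. The right-hand side of the claimed identity visibly contains $\slack(G)$, since every facial walk is even (facial walks are $2$-sided and, under the standard assumptions, even because the embedding is parity-consistent — an odd closed walk would be $1$-sided) and is closed, so the constraint ``$y$ is a circulation in $D$'' together with $\omega_C$ odd and $\omega_{W_i}=0$ is a subset of the constraints defining $\slack(G)$. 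So the content is the reverse inclusion: every non-negative integer $y$ that is a circulation in $D$ with $\omega_C(y)$ odd and $\omega_{W_i}(y)=0$ for $i=1,\dots,g-1$ must satisfy $\omega_W(y)=0$ for \emph{every} even closed walk $W$.

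First I would make precise the translation ``$y$ is a circulation in $D$ $\iff$ $\omega_{\partial f}(y)=0$ for every face $f$'': because $D$ is an \emph{alternating} orientation, at each dual node $f$ the incident dual edges alternately leave and enter $f$ in the local cyclic order, and this cyclic order is exactly the order in which the edges of $G$ appear along the facial walk $\partial f$; hence the signed sum $\omega_{\partial f}(y) = \sum (-1)^{i-1} y(e_i)$ over the facial walk is precisely $(\text{sum of } y \text{ over arcs leaving } f) - (\text{sum of } y \text{ over arcs entering } f)$, i.e.\ the net flow out of $f$. (This also records why $\partial f$ must be an even closed walk for this to be well-defined as a cyclic alternation — which is guaranteed by the standard assumptions, and should be remarked.) Next I would set up the linear-algebra bookkeeping. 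Consider the quotient: let $Y := \sigma(\R^{V(G)}) = \{y : \omega_W(y)=0 \text{ for all even closed walks } W\}$ from Lemma~\ref{lemImageNodeSpace}, which has dimension $|V(G)|$, so its orthogonal complement (the span of the $\omega_W$) has dimension $|E(G)| - |V(G)|$. The facial equations $\{\omega_{\partial f} = 0 : f \in F(G)\}$ span a subspace of this complement; I would compute its dimension. The facial walks satisfy one linear dependence (each edge lies on exactly two faces, giving $\sum_f \pm \omega_{\partial f} = 0$ after fixing signs — coming from the fact that the boundary of the sum of all faces is zero), and one shows that this is the \emph{only} dependence, so the facial equations span a space of dimension $|F(G)| - 1$. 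By Euler's formula~\eqref{eq:Euler_genus}, $|E(G)| - |V(G)| = g - 2 + |F(G)|$, so the span of all $\omega_W$ has dimension $(g-2+|F(G)|) $ wait — let me recompute: $\dim(\text{span of all } \omega_W) = |E(G)| - |V(G)| = g + |F(G)| - 2$, while the facial equations give dimension $|F(G)| - 1$; the gap is exactly $g - 1$. This is the heart of the argument: one chooses even closed walks $W_1,\dots,W_{g-1}$ whose equations $\omega_{W_i}=0$, together with the facial equations, span the whole $(g + |F(G)| - 2)$-dimensional space of even-closed-walk equations. Concretely I would take a spanning tree of the dual graph $D$; the $g + |F(G)| - 2 - (|F(G)|-1) = g - 1$ non-tree dual edges are in bijection with a basis completion, and each non-tree dual edge corresponds to an edge $e^* = uv$ of $G$ together with a path in the dual tree, which translates back (via planar-type duality reasoning, using that $G$ is connected) into an even closed walk in $G$ — this is where I'd need to be a little careful to produce genuinely \emph{even} closed walks (if a naive candidate is odd, correct it by concatenating with a fixed even closed walk, or by choosing the dual spanning tree to avoid parity issues). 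Then the equations $\{\omega_{\partial f}=0\}_f \cup \{\omega_{W_i}=0\}_{i=1}^{g-1}$ are $|F(G)| - 1 + (g-1) = g + |F(G)| - 2$ linearly independent equations, hence span all of $\{\omega_W = 0 : W \text{ even closed}\}$, so any $y$ satisfying them lies in $Y$, i.e.\ satisfies $\omega_W(y)=0$ for every even closed walk. Combined with $y \ge \zerovec$ and $\omega_C(y)$ odd, this gives $y \in \slack(G)$, proving the reverse inclusion; the efficiency claim follows since a dual spanning tree and the $g-1$ associated walks are computed in polynomial time.

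The main obstacle I anticipate is twofold. The deeper of the two is pinning down the dimension count — in particular proving that the facial equations have \emph{exactly} one linear dependence (equivalently, that the cycle space of $D$ has dimension $|E(D)| - |F(G)| + 1$ in the appropriate sense) and that the span of \emph{all} even-closed-walk equations $\omega_W$ really does have dimension $|E(G)| - |V(G)|$ rather than something smaller. The latter is exactly Lemma~\ref{lemImageNodeSpace} (the orthogonal complement of $Y$ has the right dimension), so that is handed to us; the former is a standard but non-trivial fact about embedded graphs that I would prove via Euler's formula together with the observation that the facial walks generate the full ``cycle space'' of the embedded graph modulo the single global relation. The more technical nuisance is ensuring that the $g-1$ extra walks $W_i$ one extracts from non-tree dual edges can be taken to be \emph{even} closed walks in $G$ (so that $\omega_{W_i} = 0$ is genuinely one of the defining constraints of $\slack(G)$, not merely something implied); I would handle this by a parity-fixing trick, adjusting any offending $W_i$ by concatenating with the facial walk of an incident face or with a fixed small even closed walk, which does not change the span modulo the facial equations. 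Everything else — verifying the trivial inclusion, the circulation-vs-facial-equation dictionary, and polynomial-time computability — is routine.
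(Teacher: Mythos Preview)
Your core argument is correct and is essentially the same dimension count the paper uses, just phrased dually: the paper works directly with the solution spaces $L := \{y : \omega_W(y)=0 \text{ for all even closed } W\} = \sigma(\R^{V(G)})$ and $Z := \{\text{circulations in } D\}$, shows $L \subseteq Z$ via the alternating-orientation dictionary exactly as you do, and then computes $\dim L = |V(G)|$ (from Lemma~\ref{lemImageNodeSpace}) and $\dim Z = |A(D)| - |V(D)| + 1 = |V(G)| + g - 1$ (the standard circulation-space dimension for a connected directed graph, together with Euler's formula). The gap of $g-1$ then immediately yields the existence of $W_1,\dots,W_{g-1}$. Your version takes orthogonal complements and computes instead $\dim(\text{span of all }\omega_W) = |E(G)|-|V(G)|$ and $\dim(\text{span of facial }\omega_{\partial f}) = |F(G)|-1$; this is the same statement, and the ``exactly one linear dependence among the facial equations'' that you flag as an obstacle is nothing but the well-known fact that the cut space of a connected graph on $|F(G)|$ nodes has dimension $|F(G)|-1$, which the paper sidesteps by quoting the complementary fact about the circulation space.

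The one place your proposal goes off the rails is the explicit construction: a spanning tree of the dual graph $D$ has $|E(G)| - (|F(G)|-1) = |V(G)| + g - 1$ non-tree edges, not $g-1$, and a fundamental cycle in $D$ does not naturally correspond to a closed walk in $G$ (in the planar case it would correspond to a primal \emph{cut}). The paper does not attempt any such dual construction; for the efficiency claim it simply points back to the proof of Lemma~\ref{lemImageNodeSpace}, where a spanning odd $1$-tree $T$ of $G$ is used to produce a family of $|E(G)|-|V(G)|$ even closed walks $W_e^{\mathrm{even}}$ whose equations $\omega_{W_e^{\mathrm{even}}}=0$ span the full equation space, and then one extracts any $g-1$ of them that are independent modulo the facial equations by Gaussian elimination. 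Replacing your dual-tree paragraph with this primal construction fixes the only genuine issue.
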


Note that Lemmas~\ref{lemDualAlternatingOrientation} and~\ref{lemQofGWithCirculations} imply Proposition~\ref{propRepresentationSystem}.

\begin{proof}[Proof of Lemma~\ref{lemQofGWithCirculations}]
    Let $ Z $ be the subspace of $ \R^{A(D)} $ of all circulations in $ D $ and consider
    \[
        L := \conv \{ y \in \R^{E(G)} \mid \omega_W(y) = 0 \text{ for all even closed walks } W \}
    \]
    Note that it suffices to show that there exist even closed walks $ W_1,\dotsc,W_{g-1} $ in $ G $ (that can be efficiently computed) such that
    \begin{equation}
        \label{eqSanZeno}
        L = \{ y \in Z \mid \omega_{W_1}(y) = \dotsb = \omega_{W_{g - 1}}(y) = 0 \}\,,
    \end{equation}
    where we identify $ \R^{E(G)} $ with $ \R^{A(D)} $.

    We first claim that $ L \subseteq Z $. To see this, let $ y \in L $ and consider any node $ f $ of $ D $. Recall that $ f $ corresponds to a face of $ G $ and let $ W $ be a facial walk of $ f $. Since $ D $ is alternating and $ W $ is even, we have $ \delta^{\text{in}}(y) - \delta^{\text{out}}(y) = \pm \omega_W(y) = 0 $. Thus, $ y $ is indeed a circulation.

    By Lemma~\eqref{lemImageNodeSpace} we know that $ L $ is equal to $ \sigma(\R^{V(G)}) $ and hence
    \[
        \dim(L) = |V(G)|\,.
    \]
    On the other hand, it is a basic fact that
    \[
        \dim(Z) = |A(D)| - |V(D)| + 1 = |E(G)| - |F(G)| + 1 = |V(G)| + g - 1\,,
    \]
    where the last equality follows from the fact that $ G $ is cellularly embedded into a surface of Euler genus $ g $ and Equation~\eqref{eq:Euler_genus}.

    Thus, we can obtain $ L $ by adding $ g - 1 $ equations (from the definition of $ L $) to the definition of $ Z $. In other words, there must exist even closed walks $ W_1,\dotsc,W_{g-1} $ such that Equation~\eqref{eqSanZeno} holds. Using the construction given in the proof of Lemma~\ref{lemImageNodeSpace} it is also easy to compute even closed walks $ W_1,\dotsc,W_{g-1} $ as required.
\end{proof}

We are almost ready to prove Lemma~\ref{lemDualAlternatingOrientation}.  However, before doing so, we require a more combinatorial definition of $1$-sided and $2$-sided closed walks.  Let $G$ be a graph embedded in a surface $\surf$.  Regardless of the (global) orientability of $\surf$, one can define a \emph{local orientation} around each node $v$ of $G$. We point out that $G$ has $2^{|V(G)|}$ systems of local orientations. If $G$ has minimum degree at least $3$, this translates nicely into the combinatorial notion of a rotation system, see~\cite[Chapters 3 and 4]{MoharThom}. However, here we stick to the topological view to avoid the technicalities arising with nodes of degree $1$ and $2$.

Each choice of local orientations, gives rise to a \emph{signature} $\Sigma \subseteq E(G)$ obtained as follows. If the local orientations at the ends of $e$ are inconsistent (one clockwise and the other counterclockwise), then $e$ is included in $\Sigma$. Otherwise, $e$ is not included in $\Sigma$. See Figure~\ref{figLocalOrientations} for an illustration. For every signature $\Sigma$, a cycle $C$ in $G$ (seen as a simple closed curve in $\surf$) is $1$-sided if and only if $|E(C) \cap \Sigma| \equiv 1 \pmod{2}$. We stress that this holds no matter which local orientations are used to define the signature.  

\begin{figure}
    \begin{center}
        \includegraphics{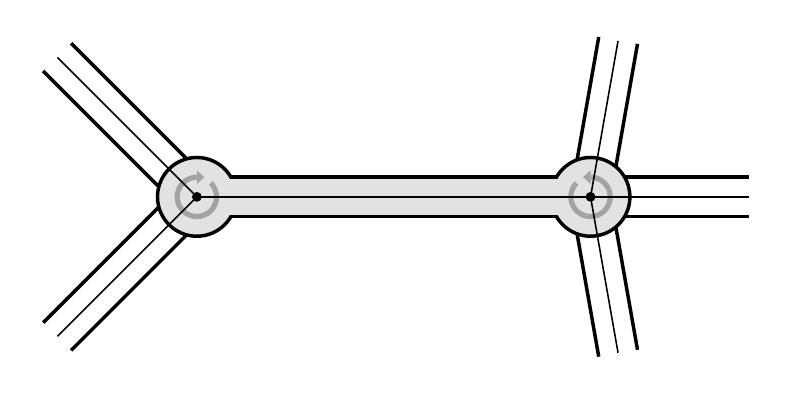}
    \end{center}
    \caption{Using the neighborhood of an edge, one can compare local orientations around its ends. In the above case, the local orientations are inconsistent and hence the edge belongs to $ \Sigma $.}
    \label{figLocalOrientations}
\end{figure}

Given $\Sigma \subseteq E(G)$, a walk $W = (v_0, e_1, v_1, \ldots, e_k, v_k)$ is said to be \emph{$\Sigma$-odd} if $|\{i \in [k] : e_i \in \Sigma\}|$ is odd, and \emph{$\Sigma$-even} otherwise. Let $G$ be embedded in a surface $\surf$, and let $\Sigma$ be a signature of the embedding. We say that a walk in $G$ is \emph{$1$-sided} if it is $\Sigma$-odd, and \emph{$2$-sided} if it is $\Sigma$-even.  When restricted to closed walks, this definition of $1$-sided and $2$-sided agrees with the definition given in Section~\ref{secBasics}.  

Before proving Lemma~\ref{lemDualAlternatingOrientation}, we need the following lemma. Here, we denote the symmetric difference of two sets $ A, B $ by $ A \triangle B $.

\begin{lem}
    \label{lemAuxSignatureIsWholeEdgeSet}
    Let $ G $ be a connected non-bipartite graph and $ \Sigma \subseteq E(G) $ be such that every odd closed walk is $\Sigma$-odd. Then there exist nodes $ v_1,\dotsc,v_k \in V(G) $ such that $ \Sigma \triangle \delta(v_1) \triangle \dotsb \triangle \delta(v_k) = E(G) $.
\end{lem}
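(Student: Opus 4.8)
\emph{Proof proposal.} The plan is to recast the statement as an assertion about edge cuts. Since $\Sigma \subseteq E(G)$ we have $\Sigma \triangle E(G) = E(G)\setminus\Sigma =: \Sigma'$, and for distinct nodes $v_1,\dots,v_k$ one has $\delta(v_1)\triangle\dots\triangle\delta(v_k) = \delta(\{v_1,\dots,v_k\})$. Hence finding $v_1,\dots,v_k$ with $\Sigma \triangle \delta(v_1) \triangle \dots \triangle \delta(v_k) = E(G)$ is the same as finding $S \subseteq V(G)$ with $\delta(S) = \Sigma'$; that is, it suffices to show that $\Sigma'$ is an edge cut of $G$ (if $\Sigma = E(G)$ then $\Sigma' = \emptyset = \delta(\emptyset)$ and we are done). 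So I would devote the whole proof to this.

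I would then invoke the standard fact that $F \subseteq E(G)$ is an edge cut (i.e. $F = \delta(S)$ for some $S$) if and only if $|E(C)\cap F|$ is even for every cycle $C$ of $G$; one proof of the nontrivial direction roots $G$, declares $v \in S$ precisely when some walk from the root to $v$ uses an odd number of edges of $F$, and notes that the even-intersection hypothesis makes this independent of the chosen walk. Applying this with $F := \Sigma'$ and using $|E(C)\cap\Sigma'| = |E(C)| - |E(C)\cap\Sigma|$, the task reduces to proving that
\[
    |E(C)\cap\Sigma| \equiv |E(C)| \pmod 2 \qquad\text{for every cycle } C \text{ of } G,
\]
equivalently that $C$ is $\Sigma$-odd exactly when $C$ is an odd cycle.

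One direction is free: an odd cycle is an odd closed walk, so it is $\Sigma$-odd by hypothesis. For the converse I would argue by contradiction. Suppose an even cycle $C$ is $\Sigma$-odd. As $G$ is non-bipartite it has an odd cycle $C_0$, which is $\Sigma$-odd by hypothesis; as $G$ is connected there is a walk $P$ from a node of $C$ to a node of $C_0$. Let $W$ be the closed walk that runs around $C$, then along $P$, then around $C_0$, then back along $P$ reversed. Its length is $\equiv |C| + |C_0| \equiv 1 \pmod 2$, so $W$ is an odd closed walk and hence $\Sigma$-odd; but counting its edges in $\Sigma$ with multiplicity, the two passes along $P$ cancel modulo $2$, leaving $|E(C)\cap\Sigma| + |E(C_0)\cap\Sigma| \equiv 1+1 \equiv 0 \pmod 2$, so $W$ is $\Sigma$-even --- a contradiction. (The same concatenation trick in fact shows every even closed walk is $\Sigma$-even.) This establishes the displayed congruence for all cycles and finishes the argument.

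I expect the proof to be short, with the one genuinely load-bearing step being this concatenation argument: it is where connectedness and non-bipartiteness of $G$ are used (precisely to build the auxiliary odd closed walk $W$ out of the offending even cycle), and where one must take a little care that the two traversals of $P$ contribute evenly to the $\Sigma$-count. Everything else is bookkeeping together with the classical cut-space/cycle-space duality.
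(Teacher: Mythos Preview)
Your proof is correct and takes a genuinely different route from the paper's.

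The paper argues constructively via a spanning odd $1$-tree $T$: it first picks nodes $v_1,\dots,v_k$ so that the resigned set $\Sigma' := \Sigma \triangle \delta(v_1) \triangle \cdots \triangle \delta(v_k)$ already contains all edges of the spanning tree $T-e'$, and then, for each remaining edge $e$, exhibits an odd closed walk in $T+e$ using $e$ exactly once; since that walk must be $\Sigma'$-odd and all its other edges are in $\Sigma'$, this forces $e \in \Sigma'$ as well. Your argument instead passes through the classical cut-space/cycle-space duality: you observe that the task is equivalent to showing $E(G)\setminus\Sigma$ is a cut, reduce this to the parity condition $|E(C)\cap\Sigma|\equiv |E(C)|\pmod 2$ for all cycles $C$, and handle the only nontrivial case (an even cycle that is $\Sigma$-odd) by concatenating it with an odd cycle via a connecting walk to manufacture a contradictory odd closed walk. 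Your approach is arguably cleaner and makes the role of the hypotheses more transparent (connectedness gives the connecting walk, non-bipartiteness gives the odd cycle $C_0$); the paper's approach is more self-contained in that it does not invoke the cut-space characterization, and it dovetails with the spanning-$1$-tree machinery (Observation~\ref{obs:walks}) used elsewhere in the paper.
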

\begin{proof}
    Let $T$ be a spanning odd $1$-tree of $G$. Let $e'$ denote any edge on the odd cycle of $T$. It is easy to find nodes $ v_1,\dotsc,v_k $ such that $ \Sigma' := \Sigma \triangle \delta(v_1) \triangle \dotsb \triangle \delta(v_k) $ contains every edge of the spanning tree $ T-e' $. Notice that a closed walk is $\Sigma'$-odd if and only if it is $\Sigma$-odd, as can be seen by induction on $k$.

    Now consider each edge $e \in E(G) \setminus E(T-e')$ one by one, starting with $e'$. For each choice of $e$ there is an odd closed walk $W_e^{\mathrm{odd}}$ containing $e$ exactly once for which all the edges except possibly $e$ are known to be in $\Sigma'$. For $e = e'$, we may take $W_e^{\mathrm{odd}}$ to be the unique odd cycle of $T$. For subsequent edges, we invoke Observation~\ref{obs:walks}. Since $W_e^{\mathrm{odd}}$ is odd, it should be $\Sigma$-odd, and thus $\Sigma'$-odd. Since all its edges except possibly one are in $\Sigma'$, this forces $e$ to belong to $\Sigma'$.
\end{proof}

\begin{proof}[Proof of Lemma~\ref{lemDualAlternatingOrientation}]
    For each node of $ G $, we choose a local orientation around it. This defines a signature $ \Sigma \subseteq E(G) $. Since the embedding of $ G $ is parity-consistent, we have that every odd closed walk is $\Sigma$-odd. By Lemma~\ref{lemAuxSignatureIsWholeEdgeSet} there exist nodes $ v_1,\dotsc,v_k \in V(G) $ such that $ \Sigma \triangle \delta(v_1) \triangle \dotsb \triangle \delta(v_k) = E(G) $. This means that if we reverse the local orientations around the nodes $ v_1,\dotsc,v_k $, the induced signature changes to $ \Sigma = E(G) $.

    In this case, the local orientations around the ends of each edge must be inconsistent. This implies a natural orientation $ D $ of the edges of the dual graph as depicted in Figure~\ref{figDualOrientation}. By construction, if we now traverse the facial walk of some face $ f $, then the arcs corresponding to the edges of the walk alternatively enter and leave $ f $. This shows that $ D $ is an alternating orientation in the sense of Definition~\ref{defnDualOrientation}. \qedhere

    \begin{figure}
        \begin{center}
            \includegraphics{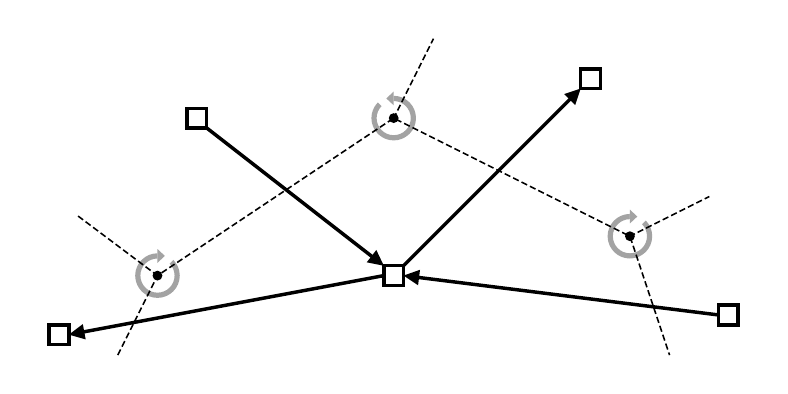}
        \end{center}
        \caption{If the local orientations around the nodes of the graph (dots and dashed edges) are inconsistent along edges, then they induce an orientation of the dual graph (squares and solid arcs).}
        \label{figDualOrientation}
    \end{figure}
\end{proof}

\section{Homologous circulations}
\label{secCirc}

Given a graph $ G $ satisfying the standard assumptions, we have seen that we can efficiently compute a dual representation $ (D, \omega) $ of $ G $ that allows us to describe $ \slack(G) $ via
\[
    \slack(G) = \conv \{ y \in \Z^{A(D)}_{\ge 0} \mid y \text{ is a circulation of } D, \, \omega(y) = (1,\zerovec) \}\,.
\]
Recall that $ D $ is an alternating orientation of the dual of $ G $, and $ \omega $ arises from an odd cycle $ C $ of $ G $, and even closed walks $ W_1,\dotsc,W_{g-1} $ in $ G $ such that
\[
    \omega(y) = (\omega_C(y) \ (\mathrm{mod} \, 2), \omega_{W_1}(y), \dotsc, \omega_{W_{g-1}}(y)) \in \Z_2 \times \Z^{g-1}\,,
\]
see Definition~\ref{defnRepresentationSystem}. In what follows, we will show that the constraint $ \omega(y) = (1,\zerovec) $ has a clean topological interpretation.

To this end, we make use of the \emph{first homology group of $\surf$ over the integers}: The cellular embedding of $ G $ induces a cellular embedding of $ D $. Consider a facial walk of $ D $. Traversing this walk in one (arbitrary) direction, we obtain a circulation in $ \{0,1\}^{A(D)} $, which we call a \emph{facial circulation}. We declare two integer circulations $ y_1, y_2 \in \Z^{A(D)} $ \emph{homologous} if $ y_1 - y_2 $ is a linear combination of facial circulations with integer coefficients. It turns out that $ \omega $ allows us to distinguish between the equivalence classes of integer circulations in $ D $. We remark that the fact that $ \omega $ is linear in each component will be (implicitly) used several times in what follows.

\begin{prop}
    \label{propOmegaDistinguishHomologyClass}
    Let $G$ be a graph satisfying the standard assumptions and let $ (D,\omega) $ be a dual representation of $ G $. For any integer circulations $ y_1, y_2 $ in $ D $ we have $ \omega(y_1) = \omega(y_2) $ if and only if $ y_1 $ is homologous to $ y_2 $.
\end{prop}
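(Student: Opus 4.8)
The plan is to prove both implications by reducing to a dimension/rank count, using the machinery already set up. Fix the dual representation $(D,\omega)$, coming from an odd cycle $C$ and even closed walks $W_1,\dotsc,W_{g-1}$ in $G$. Let $Z \subseteq \R^{A(D)}$ denote the circulation space of $D$, and recall from the proof of Lemma~\ref{lemQofGWithCirculations} that $\dim(Z) = |V(G)| + g - 1$. The key linear-algebraic object is the restriction of the $\Z$-linear map $\omega : \Z^{A(D)} \to \Z_2 \times \Z^{g-1}$ to integer circulations; I want to show that two integer circulations have the same $\omega$-value exactly when their difference lies in the lattice $H$ generated by the facial circulations of $D$.

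For the easy direction ($y_1$ homologous to $y_2$ $\Rightarrow$ $\omega(y_1)=\omega(y_2)$): by linearity of each component of $\omega$ it suffices to check that $\omega$ vanishes on every facial circulation $z$ of $D$. A facial circulation of $D$ corresponds to the facial walk around a node of $D$, i.e.\ around a node $v$ of $G$; concretely $z = \pm(\text{incidence vector of }\delta(v))$ read with signs coming from the alternating orientation. Then $\omega_{W_i}(z)$ is a signed sum over the edges of $W_i$ incident to $v$, and because $W_i$ is a closed walk each visit to $v$ contributes an entering and a leaving edge with opposite signs in $\omega_{W_i}$, so $\omega_{W_i}(z)=0$; similarly $\omega_C(z)$ is even since $C$ is a cycle through $v$ contributing $0$ or $2$ to the count. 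Hence $\omega(z) = (0,\zerovec)$, and the implication follows. (This is essentially the same computation as the one showing $L \subseteq Z$ in Lemma~\ref{lemQofGWithCirculations}, read in the other variable.)

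For the hard direction ($\omega(y_1)=\omega(y_2)$ $\Rightarrow$ homologous): set $z := y_1 - y_2 \in Z$, an integer circulation with $\omega(z) = (0,\zerovec)$; I must write $z$ as an integer combination of facial circulations. First handle the $\Z^{g-1}$-part: the condition $\omega_{W_1}(z) = \dotsb = \omega_{W_{g-1}}(z) = 0$ together with $z \in Z$ says exactly, by Equation~\eqref{eqSanZeno}, that $z \in L = \sigma(\R^{V(G)})$; so $z = \sigma(x)$ for some $x \in \R^{V(G)}$, i.e.\ $z(vw) = 1 - x(v) - x(w)$ on each edge — wait, more precisely $z$ lies in the \emph{linear} span, so $z(vw) = -x(v)-x(w)$; track the affine-vs-linear bookkeeping carefully here. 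Then the remaining parity condition $\omega_C(z)$ even forces, via the odd-cycle computation in the proof of Proposition~\ref{propImageSubG} (the identity $\sum (-1)^{i-1} z(e_i) = -2x(v_0)$ around the odd cycle $C$), that $x$ is \emph{integral}; connectedness of $G$ propagates integrality to all of $x$. Finally, an integral $x \in \Z^{V(G)}$ expresses $z = \sum_{v} x(v) \cdot (\pm\delta(v))$, which is precisely an integer combination of facial circulations of $D$ (matching signs to the alternating orientation). Hence $z \in H$ and $y_1, y_2$ are homologous.

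The main obstacle is the bookkeeping in the hard direction: keeping straight the affine map $\sigma$ versus its linear part when passing to the difference $z = y_1 - y_2$, and making sure the signs in "$z = \sum_v x(v)\,(\pm\delta(v))$ is a sum of facial circulations" genuinely match the orientation $D$ produced by Lemma~\ref{lemDualAlternatingOrientation} (where the signature was arranged to be all of $E(G)$). Once those sign conventions are pinned down, each step is a short computation already rehearsed elsewhere in the paper, so I would state the sign conventions once up front and then run the two implications as above.
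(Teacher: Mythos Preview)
Your proposal is correct and follows essentially the same route as the paper: both directions hinge on the fact that facial circulations of $D$ are exactly the stars $\chi^{\delta(v)}$, so $\omega$ vanishes on them (easy direction), and conversely any integer circulation with $\omega = (0,\zerovec)$ lies in $\sigma(\R^{V(G)})$ and can be written as an integer combination of these stars (hard direction).

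The one place you and the paper diverge is precisely the affine-versus-linear bookkeeping you flag. The paper does \emph{not} pass to the linear part of $\sigma$; it keeps the affine map, so the preimage $x$ satisfying $\sigma(x)=z$ lands in $\Z^{V(G)} + \tfrac{1}{2}\onevec$ (half-integral, not integral), and then expresses $x$ as an integer \emph{affine} combination of $\tfrac{1}{2}\onevec$ and $\tfrac{1}{2}\onevec - e_v$, whose $\sigma$-images are $\zerovec$ and $\chi^{\delta(v)}$. Your alternative---use the linear part $x \mapsto -Mx$, get an honestly integral preimage $x'$, and write $z = -\sum_v x'(v)\,\chi^{\delta(v)}$---is equally valid and arguably tidier, but then the odd-cycle identity you invoke is $\omega_C(z) = -2x'(v_0)$, not the $1 - 2x(v_0)$ from Proposition~\ref{propImageSubG}. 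Commit to one convention throughout; as written, your step ``forces $x$ to be integral'' is false for the affine $\sigma$ and only becomes true once you have actually switched to $-M(\cdot)$.
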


\begin{proof}
    Let $ y_1, y_2 $ be integer circulations and set $ y := y_1 - y_2 $. Suppose first that $ y_1 $ is homologous to $ y_2 $, meaning that $ y $ is a linear combination of facial circulations with integer coefficients. Notice that facial circulations in $D$ correspond to stars $\delta(v) = \delta_G(v)$ in $G$. Now $C$ contains an even number of edges (zero or two) in each star $\delta(v)$. This shows that $\omega_C$ is zero modulo $2$ on all facial circulations and thus $\omega_C(y) = 0 $.

    Similarly, for $i \in [g-1]$, each $W_i$ enters and leaves a node $ v \in V(G) $ the same number of times, and hence for each star $ \delta(v) $, half of its edges appear with a positive sign and half of its edges with a negative sign in $ \omega_{W_i} $, respectively. Thus $\omega_{W_i}$ is identically zero on all facial circulations. This shows $\omega_{W_i}(y) = 0 $ for all $ i \in [g-1] $ and hence $ \omega(y) = (0,\zerovec) $. This means that $ \omega(y_1) = \omega(y_2) $.

    Now suppose that $ \omega(y_1) = \omega(y_2) $. This means that $ \omega(y) = (0,\zerovec) $. In other words, $\omega_C(y)$ is even and $\omega_{W_i}(y)$ is zero for all $i \in [g-1]$. From the proof of Lemma~\ref{lemImageNodeSpace} we see that there exists a point $ x \in \Z^{V(G)} + \frac{1}{2}\mathbf{1}$ such that $\sigma(x) = y$. We can write $x$ as an affine combination of the points $\frac{1}{2}\mathbf{1}$ and $\frac{1}{2} \mathbf{1} - \mathbf{e}_v$ for $v \in V(G)$, with integer coefficients. Applying $\sigma$ to this integer affine combination, we see that $y$ is an integer combination of facial circulations. This means that $ y_1 $ is homologous to $ y_2 $.
\end{proof}

Recall that a non-negative integer circulation $ y $ in $ D $ is contained in $ \slack(G) $ if and only if $ \omega(y) = (1,\zerovec) $. Furthermore, note that the all-one circulation $ \onevec = \sigma(\zerovec) $ is contained in $ \slack(G) $. Thus, we obtain:

\begin{cor}
    Let $G$ be a graph satisfying the standard assumptions and let $ (D,\omega) $ be a dual representation of $ G $. Then we have
    \[
        \slack(G) = \conv \{ y \in \Z^{A(D)}_{\ge 0} \mid y \text{ is a circulation in } D, \, y \text{ is homologous to } \onevec \}\,.
    \]
\end{cor}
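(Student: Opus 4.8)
The plan is to read the corollary off from the dual-representation description of $\slack(G)$ established in Section~\ref{secDualOrientation} together with the homology criterion of Proposition~\ref{propOmegaDistinguishHomologyClass}. Concretely, combining Lemma~\ref{lemDualAlternatingOrientation} and Lemma~\ref{lemQofGWithCirculations} and identifying $\R^{E(G)}$ with $\R^{A(D)}$, we have
\[
    \slack(G) = \conv\{\, y \in \Z^{A(D)}_{\ge 0} \mid y \text{ is a circulation in } D,\ \omega(y) = (1,\zerovec) \,\}\,,
\]
so it suffices to prove that, among integer circulations $y$ in $D$, the condition $\omega(y) = (1,\zerovec)$ is equivalent to $y$ being homologous to $\onevec$.

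First I would pin down the homology class of $\onevec$ by checking that $\omega(\onevec) = (1,\zerovec)$. This is a one-line computation from the definition of $\omega$ in Definition~\ref{defnRepresentationSystem}: since $C$ is an odd cycle, $\omega_C(\onevec) = |E(C)|$ is odd, and for each even closed walk $W_i$ the alternating signs $(-1)^{j-1}$ cancel in pairs, so $\omega_{W_i}(\onevec) = 0$. (One also sees directly that $\onevec \in \slack(G)$, since $\onevec = \sigma(\zerovec)$ with $\zerovec \in \sub(G)$ and $\slack(G) = \sigma(\sub(G))$ by Proposition~\ref{propImageSubG}; in particular $\onevec$ is a non-negative integer circulation in $D$.)

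Then I would apply Proposition~\ref{propOmegaDistinguishHomologyClass} to the pair of integer circulations $y$ and $\onevec$: for any integer circulation $y$ in $D$, one has $\omega(y) = \omega(\onevec) = (1,\zerovec)$ if and only if $y$ is homologous to $\onevec$. Hence
\[
    \{\, y \in \Z^{A(D)}_{\ge 0} \mid y \text{ circulation in } D,\ \omega(y) = (1,\zerovec)\,\} = \{\, y \in \Z^{A(D)}_{\ge 0} \mid y \text{ circulation in } D,\ y \text{ homologous to } \onevec\,\}\,,
\]
and taking convex hulls of these two (equal) sets yields exactly the asserted identity for $\slack(G)$.

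I do not anticipate any genuine obstacle here: the substantive content — the reduction of $\slack(G)$ to non-negative integer circulations (Lemma~\ref{lemQofGWithCirculations}) and the fact that $\omega$ is a complete invariant of the homology class (Proposition~\ref{propOmegaDistinguishHomologyClass}) — is already in hand. The only point requiring a moment's care is verifying that $\onevec$ lies in the correct homology class, i.e. that $\omega(\onevec) = (1,\zerovec)$; once that is settled the corollary is immediate.
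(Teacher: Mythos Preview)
Your proposal is correct and follows essentially the same argument as the paper: note that $\onevec = \sigma(\zerovec) \in \slack(G)$, hence $\omega(\onevec) = (1,\zerovec)$, and then apply Proposition~\ref{propOmegaDistinguishHomologyClass} to conclude that $\omega(y) = (1,\zerovec)$ is equivalent to $y$ being homologous to $\onevec$. One small slip in your direct computation: $\omega_C(\onevec) = \sum_{i=1}^{|E(C)|} (-1)^{i-1} = 1$, not $|E(C)|$, though the conclusion that it is odd is of course unaffected.
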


This shows that Problem~\ref{probImageOfSubG} is indeed equivalent to Problem~\ref{probCirculationHomologous}. Let us remark that since $ D $ is cellularly embedded in $\surf$, the first homology group of $\surf$ over the integers, $H_1(\surf; \Z)$, is the quotient of the group of all integer circulations $y$ in $ D $ by the subgroup of circulations that are homologous to the zero circulation\footnote{We point out that, in order to define the homology group $H_1(\surf; \Z)$ in this way, $D$ can be any orientation of an undirected graph that is cellularly embedded in $\surf$. The resulting group depends only on $\surf$, and not on the specific directed graph $D$.}. It is well-known that, since $\surf$ is non-orientable with Euler genus $g$, $H_1(\mathbb{S};\Z)$ is isomorphic to $\Z_2 \times \Z^{g-1}$~\cite[Chapter 8.3]{Armstrong}. It is not hard to see that $ \omega(y) = (1,\zerovec) $ actually means that the homology class of $ y $ is equal to $ (1,\zerovec) $ in $H_1(\mathbb{S};\Z)$. However, we will not need this in our algorithm.

\section{Solving Problem~\ref{probCirculationHomologous} in polynomial time}
\label{secAlgorithm}

In this section we provide a polynomial-time algorithm to solve Problem~\ref{probCirculationHomologous}, for constant Euler genus $g$. We mention that a similar problem was already solved by Chambers, Erickson and Nayyeri~\cite{CEN12}. However, since it is not clear to us that their algorithm can be adapted to our setting (see Section~\ref{sec:FPT} for further details), we describe a self-contained algorithm here.  Our approach consists of a sequence of circulation problems in auxiliary graphs. 

As a first step, we require the following topological lemma. 

\begin{lem} \label{lem:disjointhomology}
    Let $D$ be a directed graph cellularly embedded in a surface $\surf$. If $C_1, \dotsc, C_\ell$ are pairwise node-disjoint directed cycles of $D$ whose corresponding whose homology classes are pairwise distinct, then $ \ell \leqslant 6g(\surf)$.
\end{lem}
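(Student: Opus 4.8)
\textbf{Proof proposal for Lemma~\ref{lem:disjointhomology}.}

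The plan is to show that node-disjoint directed cycles realizing pairwise distinct homology classes cannot be too many, by cutting the surface along them and invoking the Euler-characteristic bound on how much cutting a genus-$g$ surface can produce. First I would recall that since $C_1,\dots,C_\ell$ are pairwise node-disjoint, they are in particular pairwise disjoint as simple closed curves in $\surf$. I would then distinguish the curves according to whether they are null-homologous, noncontractible but in the same homology class, etc.; but since the hypothesis already says the homology classes are pairwise distinct, at most one $C_i$ is null-homologous, and I may discard it at the cost of replacing $\ell$ by $\ell-1$, so assume every $C_i$ is non-null-homologous (hence noncontractible). Cutting $\surf$ along the disjoint curves $C_1,\dots,C_\ell$ yields a (possibly disconnected) surface-with-boundary $\surf'$ with $2\ell$ boundary circles (two copies of each $C_i$, counting a $1$-sided $C_i$ as contributing a single boundary circle of twice the length — I would handle $1$-sided and $2$-sided curves uniformly by passing to boundary components), and with Euler characteristic $\chi(\surf') = \chi(\surf)$, since cutting along a circle does not change Euler characteristic. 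Capping each boundary circle with a disk gives a closed surface $\widehat\surf$ with $\chi(\widehat\surf) = \chi(\surf) + (\text{number of boundary circles})$.

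The key step is then a counting argument on the components of $\surf'$ that forces $\ell$ to be small. If two of the curves, say $C_i$ and $C_j$, were homologous in $\surf$, then $C_i \cup C_j$ would bound a subsurface, and removing it would be reflected in $\surf'$ having a component whose boundary consists only of copies of $C_i$ and $C_j$ — this is exactly the configuration the distinctness hypothesis forbids. More precisely, I would argue that since no nontrivial subset of the $[C_i]$ sums to zero in $H_1(\surf;\Z)$ (as they are distinct and, if one wished, one could first reduce to a $\Z_2$-independent or carefully chosen subfamily), cutting along all of them keeps $\surf'$ "connected enough": each component of $\surf'$ that is a disk or sphere would contribute a relation among the homology classes of its boundary curves, contradicting distinctness after discarding the one null-homologous curve. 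Controlling the number of such low-complexity components via $\sum_{\text{components }S}\chi(S) = \chi(\surf')=\chi(\surf) = 2-g$ (in the non-orientable normalization, $\chi(\surf)=2-g$; I would state it in terms of $g(\surf)$ to match the lemma) and the fact that each $\chi(S)\le 2$ with equality only for spheres, together with the boundary-count bookkeeping, yields a linear bound $\ell \le 6g(\surf)$, the constant $6$ absorbing the distinction between $1$-sided and $2$-sided curves and the discarded null-homologous curve.

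The main obstacle I anticipate is bookkeeping the $1$-sided curves cleanly: cutting along a $1$-sided simple closed curve produces a single boundary circle (of twice the combinatorial length) rather than two, and changes the surface's orientability class of the relevant piece, so the naive "two boundary circles per curve" count must be adjusted. I would deal with this by first replacing each $1$-sided $C_i$ by the boundary of a small M\"obius neighborhood (a $2$-sided curve homologous to $2[C_i]$) or, more robustly, by working throughout with $H_1(\surf;\Z_2)$ for the combinatorial counting and only at the end translating back — the factor-$6$ slack in the statement is generous enough to accommodate the loss from passing between $\Z$ and $\Z_2$ coefficients and from the orientability adjustments. A secondary, more routine obstacle is that $\surf'$ need not be connected, so I must argue about each component separately and sum; the Euler-characteristic additivity makes this straightforward once the per-component claim (a disk/sphere component forces a homology relation among its boundary curves) is in place. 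Throughout I would lean on the classification of surfaces and the standard fact that cutting along a simple closed curve preserves Euler characteristic, both available from \cite{MoharThom}, so no genuinely new topological input is needed beyond careful accounting.
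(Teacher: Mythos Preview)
Your cut-and-count Euler-characteristic strategy is the natural way to prove a Malni\v{c}--Mohar type bound from scratch, and it differs from the paper's route, which simply quotes \cite[Proposition~3.1]{MM92} as a black box together with Epstein's theorem that two disjoint freely homotopic simple closed curves cobound an annulus. The paper argues: at most one $C_i$ is contractible; among the rest, no two are freely homotopic (else they cobound a cylinder by Epstein, hence are homologous, contradicting distinctness), so after identifying each $\gamma_i$ with a possible $\gamma_j^{-1}$ partner (at most a factor~$2$), one is left with a family to which the Malni\v{c}--Mohar bound $\max\{1,3(g-1)\}$ applies directly.

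Your sketch, however, has a genuine gap at the point where you write ``since no nontrivial subset of the $[C_i]$ sums to zero in $H_1(\surf;\Z)$ (as they are distinct\ldots)''. Pairwise distinctness of the classes does \emph{not} prevent relations among three or more of them, and your proposed fix of ``reduc[ing] to a $\Z_2$-independent \ldots\ subfamily'' only bounds the size of that subfamily (by $\dim H_1(\surf;\Z_2)=g$), not $\ell$ itself. Concretely, after cutting along the $C_i$, a planar component with $b\ge 3$ boundary circles yields a single relation $\sum \pm[C_{i_r}]=0$ among $b$ classes, which is perfectly compatible with pairwise distinctness; and more importantly, annulus and M\"obius-band components have $\chi=0$, so they contribute nothing to the Euler-characteristic sum and must be ruled out \emph{separately} before the inequality $\sum_S \chi(S)=\chi(\surf)$ can bound anything. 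Ruling out annulus components is exactly where the pairwise-non-homotopic condition (and hence Epstein's result, or its easy cobounding-a-cylinder consequence) enters: an annulus component forces its two boundary curves to be homologous up to sign, so distinctness leaves only the ``$[C_i]=-[C_j]$'' possibility --- precisely the factor~$2$ the paper absorbs by grouping $\gamma$ with $\gamma^{-1}$. Without this step your argument does not yield any finite bound on $\ell$, regardless of the slack in the constant~$6$.
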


To prove Lemma~\ref{lem:disjointhomology}, we use a result of Malni\v{c} and Mohar~\cite[Proposition 3.1]{MM92}, which requires a few more definitions.  Let $\gamma_0$ and $\gamma_1$ be closed curves in a surface $\surf$.  We say that $\gamma_0$ and $\gamma_1$ are \emph{freely homotopic} if there is a continuous function $h:[0,1] \times [0,1] \to \surf$ such that for all $t \in [0,1]$, $h(0,t)=\gamma_0(t)$ and $h(1,t)=\gamma_1(t)$; and for each $s \in [0,1]$, $h(s,0)=h(s,1)$.  The closed curve $\gamma_0^{-1}$ is the curve given by $\gamma_0^{-1}(t) := \gamma_0(1-t)$ for all $t \in [0,1]$. 
\begin{lem}[{\cite[Proposition 3.1]{MM92}}]
    \label{lem:MM92}
     Let $\gamma_1,\dotsc,\gamma_\ell$ be disjoint noncontractible simple closed curves in a surface $\surf$.  If for all distinct $i,j \in [\ell]$, $\gamma_i$ is not freely homotopic to $\gamma_j$ nor $\gamma_j^{-1}$, then $ \ell \leqslant \max \{1, 3(g(\surf)-1)\}$.
\end{lem}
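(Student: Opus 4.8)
The plan is to deduce Lemma~\ref{lem:disjointhomology} from Lemma~\ref{lem:MM92} by passing from the combinatorial setting of directed cycles in $D$ to the topological setting of disjoint noncontractible simple closed curves, and by controlling how many distinct homology classes can collapse into a single free-homotopy class (or its inverse). The two obstructions to applying Lemma~\ref{lem:MM92} directly are: (a) the cycles $C_1,\dots,C_\ell$, viewed as simple closed curves in $\surf$, need not be pairwise disjoint as \emph{curves} — they are only node-disjoint in $D$, so their images may share interior points of edges; and (b) Lemma~\ref{lem:MM92} rules out \emph{free homotopy}, whereas we are given distinct \emph{homology} classes, and several free-homotopy classes can have the same homology class while, conversely, one free-homotopy class has a single homology class (so the direction we need is the harmless one, but we must handle the $\gamma$ versus $\gamma^{-1}$ issue and contractible curves).

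First I would address (a). Since $D$ is cellularly embedded and the $C_i$ are node-disjoint directed cycles, each $C_i$ traverses a set of edges, and no two $C_i$, $C_j$ share a node; hence as subsets of $\surf$ the curves $C_i$ and $C_j$ meet only possibly along common edges. But a directed edge of $D$ is used by at most one of the node-disjoint cycles (a cycle entering the interior of an edge must enter it at an endpoint, which is a node), so in fact the $C_i$ are already pairwise disjoint as closed subsets of $\surf$, except that we should perturb each $C_i$ slightly off the $1$-skeleton of $D$ into the interiors of incident faces to make them genuinely disjoint simple closed curves in the topological sense; this perturbation changes neither the homology class nor the free-homotopy class. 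So we may assume $\gamma_1,\dots,\gamma_\ell$, the perturbed curves, are pairwise disjoint simple closed curves with $[\gamma_i] = [C_i]$ in $H_1(\surf;\Z)$.

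Next I would discard the contractible ones: at most one $C_i$ can be null-homologous (two distinct indices cannot both give the zero class), so after removing at most one curve all remaining $\gamma_i$ are noncontractible (noncontractible is weaker than nonzero in homology, so this is fine). Now I want to bound the number of distinct homology classes among pairwise disjoint noncontractible simple closed curves. Group the $\gamma_i$ into free-homotopy classes, identifying a class with its inverse: Lemma~\ref{lem:MM92} says there are at most $\max\{1, 3(g(\surf)-1)\}$ such groups. Within one group, all curves are freely homotopic to $\gamma$ or to $\gamma^{-1}$ for a fixed $\gamma$; free homotopy determines the homology class, and reversing orientation negates it, so a single group realizes at most $2$ distinct homology classes (namely $\pm[\gamma]$). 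Hence the number of distinct homology classes among the $\gamma_i$ is at most $2\max\{1,3(g(\surf)-1)\}$, and adding back the possibly-discarded null-homologous curve gives $\ell \le 2\max\{1,3(g(\surf)-1)\} + 1 \le 6g(\surf)$, where the last inequality is a routine check (for $g \ge 1$; note $g(\surf) \ge 1$ since a sphere has no noncontractible curves, so the case $\ell \ge 2$ forces $g(\surf)\ge 1$, and $\ell \le 1 \le 6g(\surf)$ trivially otherwise).

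The main obstacle I expect is the bookkeeping in step (a): making precise that node-disjointness of directed cycles in a cellularly embedded graph really does yield \emph{disjoint} simple closed curves after an innocuous perturbation, and checking that the perturbation preserves homology and free-homotopy classes. The homology-versus-free-homotopy passage (step controlling factor $2$) is conceptually easy but must be stated carefully, since it is the place where the constant $6$ (rather than $3$) enters. Everything else is arithmetic: verifying $2\max\{1,3(g-1)\}+1 \le 6g$ for all integers $g \ge 1$.
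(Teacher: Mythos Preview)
Your proposal is correct and follows essentially the same route as the paper's proof of Lemma~\ref{lem:disjointhomology}: discard the at most one null-homologous curve, group the remaining noncontractible curves by free homotopy (identifying $\gamma$ with $\gamma^{-1}$), observe that each group contributes at most two homology classes, and bound the number of groups via Lemma~\ref{lem:MM92}. Two minor remarks: your worry in step~(a) is unnecessary since node-disjoint cycles in an embedded graph are automatically edge-disjoint and hence disjoint as point sets in $\surf$; and where you use directly that freely homotopic curves are homologous, the paper instead invokes Epstein's theorem that disjoint freely homotopic simple closed curves cobound a cylinder---your shortcut is valid and slightly more economical.
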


Lemma~\ref{lem:disjointhomology} now follows easily.  

\begin{proof}[Proof of Lemma~\ref{lem:disjointhomology}] For $i \in [\ell]$ we denote by $\gamma_i : [0,1] \to \surf$ the simple closed curve corresponding to $C_i$ and by $y_i$ the characteristic vector of $C_i$. By assumption, there is at most one index $i \in [\ell]$ such that $y_i$ is homologous to zero.  Therefore, at most one of the curves $\gamma_1, \dots, \gamma_\ell$ is contractible.  Next, if $\gamma_i$ and $\gamma_j$ are disjoint and freely homotopic, then $\gamma_i \cup \gamma_j$ bounds a cylinder in $\surf$ by a result of Epstein~\cite{Epstein66}. In particular, $y_i$ and $y_j$ are homologous.  Therefore, there is a set of indices $I \subseteq [\ell]$ such that the curves in $\{\gamma_i \mid i \in I\}$ are noncontractible and for all distinct $i,j \in I$, $\gamma_i$ is not freely homotopic to $\gamma_j$ nor $\gamma_j^{-1}$, and $\ell \leq 2|I|+1$. After renumbering, we can assume that $I = \{1,\ldots,|I|\}$. Now apply Lemma~\ref{lem:MM92} to $\gamma_1$, \ldots, $\gamma_{|I|}$.
\end{proof}

Next, we observe that optimal solutions (which we may assume to be vertices of $ \slack(G) $) can be decomposed in the following way.

\begin{lem} \label{lem:decomposition}
    Let $G$ be a graph satisfying the standard assumptions and let $ (D,\omega) $ be a dual representation of $ G $. Every vertex $ y $ of $ \slack(G) $ can be written as
    \[
        y = y_1 + \dotsb + y_\ell
    \]
    with $ \ell = O(g) $, where $ y_1,\dotsc,y_\ell $ are characteristic vectors of strongly connected (directed) Eulerian subgraphs of $ D $ that are non-homologous to $ \zerovec $.
\end{lem}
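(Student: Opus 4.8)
The plan is to use a standard flow-decomposition argument, then bound the number of pieces using the structure of vertices of $\slack(G)$ together with Lemma~\ref{lem:disjointhomology}. First I would recall that $y$ is a non-negative integer circulation in $D$, so by the classical decomposition of circulations into directed cycles, we may write $y = z_1 + \dotsb + z_m$ where each $z_i$ is the characteristic vector of a directed cycle $C_i$ of $D$ (with multiplicity). Grouping the cycles by homology class, we obtain $y = y_1 + \dotsb + y_\ell$ where $y_j := \sum_{C_i \text{ in class } j} z_i$, and $\ell$ is the number of distinct homology classes appearing. Each $y_j$ is a non-negative integer circulation, hence the characteristic vector (more precisely, the edge-multiplicity vector) of an Eulerian subgraph; by further merging cycles sharing a node within a class we may take each $y_j$ to be \emph{strongly connected} and Eulerian, while keeping $\sum_j y_j = y$ and the homology class of each $y_j$ fixed (adding cycles within a fixed class keeps the class fixed since $\omega$ is linear). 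Since $y \in \slack(G)$ has overall homology class $(1,\zerovec) \neq (0,\zerovec)$, at least one $y_j$ is non-homologous to $\zerovec$, but a priori some $y_j$ could be null-homologous, so I would fold any such $y_j$ into a non-null piece, which is possible as long as $\ell \ge 2$; the net effect is that we may assume every $y_j$ is non-homologous to $\zerovec$ (with $\ell$ only decreasing).

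The remaining task is the bound $\ell = O(g)$, and this is where the hypothesis that $y$ is a \emph{vertex} of $\slack(G)$ must be used. The key claim is that at a vertex $y$, the pieces $y_1,\dotsc,y_\ell$ can be chosen to be \emph{node-disjoint} simple directed cycles with pairwise distinct homology classes; once that is established, Lemma~\ref{lem:disjointhomology} gives $\ell \le 6g$ immediately. To prove this claim I would argue by contradiction: if the decomposition of $y$ into directed cycles contains two cycles $C, C'$ in the \emph{same} homology class, or more generally if after grouping by class two of the pieces $y_j, y_{j'}$ are non-disjoint or one of them is not a single simple cycle, then one can exhibit a nonzero integer vector $r$ supported on the edges used by $y$ such that both $y + r$ and $y - r$ are non-negative integer circulations in $\slack(G)$ (same homology class as $y$, since $\omega(r)=(0,\zerovec)$), contradicting that $y$ is a vertex. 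The natural candidate for $r$ is the difference of two such same-class pieces (or sub-pieces): if $C, C'$ are directed cycles with $\omega$-image equal, then $r := z_C - z_{C'}$ satisfies $\omega(r) = (0,\zerovec)$, $r$ is a circulation, and for suitable positive integer scaling $y \pm r$ stay non-negative because both $z_C$ and $z_{C'}$ appear in the decomposition of $y$ with positive multiplicity. This forces the decomposition to use each homology class at most once; a similar perturbation (splitting a non-simple Eulerian piece into two Eulerian sub-circulations that are either both null-homologous — impossible since we arranged none are — or of different classes, or rerouting) forces each piece to be a single simple directed cycle, and disjointness follows from minimality of the support of a vertex.

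The main obstacle I anticipate is making the last paragraph fully rigorous: "each homology class used at most once, each piece a simple disjoint cycle" is intuitively clear from the vertex condition, but carefully producing the perturbation vector $r$ in all cases — especially ruling out that two pieces share nodes, and handling the case where a would-be perturbation is null-homologous as a circulation but still needs $y\pm r \ge \zerovec$ — requires a careful case analysis of the support graph of $y$. One clean way to organize it: let $H$ be the subgraph of $D$ on the arcs $e$ with $y(e) > 0$; if $H$ (with multiplicities) is not a disjoint union of simple cycles, then $H$ contains either a node of total degree $\ge 4$ or two cycles meeting at a node, and in either situation one can split off a circulation $r \neq \zerovec$ with $\mathrm{supp}(r) \subseteq \mathrm{supp}(y)$, $r \le y$ coordinate-wise on its positive part and $-r \le$ (something) — concretely, take $r$ to be the difference of two cycles through a shared node — and then among the $O(g)$-to-one many cycles, two must be homologous by Lemma~\ref{lem:disjointhomology} applied after the fact, giving the perturbation directly. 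I would present the vertex argument first to get "support of $y$ is a disjoint union of simple directed cycles, pairwise non-homologous unless null", then invoke Lemma~\ref{lem:disjointhomology} to conclude $\ell = O(g)$, absorbing at most one null-homologous cycle into a neighbor as in the first paragraph.
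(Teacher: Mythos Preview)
Your plan has a genuine gap at the ``key claim'' that the pieces can be taken to be \emph{node-disjoint simple directed cycles}. This is not a consequence of the vertex condition, and your proposed perturbation arguments do not establish it. Concretely: once you know (correctly) that any cycle decomposition $y=\sum_i \chi^{C_i}$ has pairwise distinct $\omega$-values, you still have no control over whether the $C_i$ share nodes. If the Eulerian support of $y$ has a node of in/out-degree $\geqslant 2$, then \emph{every} cycle decomposition has two cycles through that node; swapping at such a node produces a new decomposition $C_1',C_2',C_3,\ldots$ that again passes through the same node, and $\omega(C_1'),\omega(C_2')$ need not coincide with any previous $\omega(C_j)$, so no perturbation $r$ with $\omega(r)=(0,\zerovec)$ is forced. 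You acknowledge this as an ``obstacle,'' but the suggested fixes (``among the $O(g)$-to-one many cycles two must be homologous'') are circular: you are trying to \emph{prove} the $O(g)$ bound, not assume it.

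The missing idea, which is what the paper does, is to stop trying to make the pieces themselves disjoint and instead take a \emph{maximal node-disjoint subfamily} $I$ of the cycles $C_1,\ldots,C_K$. Since all $\omega(\chi^{C_i})$ are distinct, Lemma~\ref{lem:disjointhomology} applies to this subfamily and gives $|I|=O(g)$. By maximality, every remaining cycle meets some $C_i$ with $i\in I$, so you can partition all cycles into $|I|$ groups and set $y_i$ to be the sum over a group; each $y_i$ is then strongly connected Eulerian, and non-homologous to $\zerovec$ because otherwise $y'=y_i$ itself gives a perturbation with $y\pm y'\geqslant\zerovec$. Two further points: you should begin by invoking Proposition~\ref{prop:subG_01-vertices} to get $y\in\{0,1\}^{A(D)}$, which makes the cycles edge-disjoint and the non-negativity checks in the perturbation arguments immediate; and your first-paragraph ``group by homology class, then merge'' is vacuous once you know the classes are distinct, so it cannot by itself produce strongly connected pieces.
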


\begin{proof}
By Proposition~\ref{prop:subG_01-vertices} we have $y \in \{0,1\}^{A(D)}$. Thus we can decompose $y$ as the sum of the characteristic vectors of a certain number of edge-disjoint directed cycles in $D$, say
\[
    y = \sum_{i=1}^K \chi^{C_i}\,.
\]
In what follows, we will make use of the following claim: \textit{There is no non-zero integer circulation $ y' $ in $ D $ with $ \omega(y') = \zerovec $ such that $ y + y' $ and $ y - y' $ are non-negative.} Indeed, otherwise $ y + y' $ and $ y - y' $ are contained in $ \slack(Q) $. This means that $ y $ is a convex combination of two other points in $ Q(D) $, a contradiction to the fact that $ y $ is a vertex.

Note that this implies that $ \omega(\chi^{C_i}) \ne \omega(\chi^{C_j}) $ for each $ i,j \in [K] $, $ i \ne j $, otherwise $ y' := \chi^{C_i} - \chi^{C_j} $ would contradict the above claim.

Let $I$ denote a maximal subset of $[K]$ such that the directed cycles in $\{C_i \mid i \in I\}$ are node-disjoint. By Lemma~\ref{lem:MM92}, we have that $|I| = O(g)$. We can partition the cycles of the decomposition in $|I|$ collections $\mathcal{C}_i$, $i \in I$ such that each cycle in $ \mathcal{C}_i $ has a common node with $ C_i $. After reindexing if necessary, we can assume that $I = \{1,2,\dotsc,|I|\}$. Now we can define $y_i := \sum_{j : C_j \in \mathcal{C}_i} \chi^{C_j}$. Observe that each $ y_i $ is a non-empty strongly connected Eulerian subgraph of $ D $, and that $ y = y_1 + \dotsb + y_{|I|} $.

Finally, we have that no $ y_i $ can be homologous to $ \zerovec $, otherwise $ y' := y_i $ would contradict the above claim.
\end{proof}

Motivated by Lemma~\ref{lem:decomposition}, we will perform a sequence of min-cost flow computations in the following auxiliary graph.

\begin{defn}
    \label{defnCoverGraphAlgorithm}
    Let $ G $ be a graph satisfying the standard assumptions and let $ (D,\omega) $ be a dual representation of $ G $. We define the corresponding \emph{cover graph} $ \bar{D} $ as the directed graph with nodes
    \begin{align*}
        V(\bar{D}) := \{ (f, p) \mid f \in V(D), \, b \in \Z_2 \times \{-|E(G)|, \dotsc, |E(G)|\}^{g-1} \}, \,
    \end{align*}
    and arcs
    \begin{align*}
        A(\bar{D}) := \{ ((f, b), (f', b')) \mid \ & (f, b), (f', b') \in V(D), \, (f, f') \in A(D), \\
        & b + \omega(\chi^{\{(f, f')\}}) = b' \}\,.
    \end{align*}
    Given arc costs $ c : A(D) \to \R $, we define arc costs $ \bar{c} : A(\bar{D}) \to \R $ via
    \[
        \bar{c}((f, b), (f', b')) := c(f, f')\,.
    \]
    Finally, for any $ \bar{y} \in \Z^{A(\bar{D})} $ we define its \emph{projection} to be the vector $ y \in \Z^{A(D)} $ given by
    \[
        y(f, f') := \sum_{b, b' :  ((f, b), (f', b')) \in A(\bar{D})} \bar{y}((f, b), (f', b'))\,,
    \]
    and we say that $ y $ \emph{lifts} to $ \bar{y} $.
\end{defn}

\begin{obs}
    \label{obsProjectionCost}
    If $ y $ is the projection of $ \bar{y} $, then $ c(y) = \bar{c}(\bar{y}) $.
\end{obs}

\begin{obs}
    \label{obsProjection}
    Every non-negative integer unit flow from $ (f, \zerovec) $ to $ (f, b) $ in $ \bar{D} $ projects to a non-negative integer circulation $ y $ in $ D $ with $ \omega(y) = b $.
\end{obs}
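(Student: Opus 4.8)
The plan is to verify, in turn, the three properties claimed of the projection $y$ of a unit flow $\bar y$ from $(f,\zerovec)$ to $(f,b)$ in $\bar D$: that $y$ is a non-negative integer vector, that $y$ is a circulation in $D$, and that $\omega(y) = b$. Non-negativity and integrality are immediate, since by Definition~\ref{defnCoverGraphAlgorithm} each coordinate $y(f,f')$ is a finite sum of coordinates of $\bar y$, all of which are non-negative integers by hypothesis.

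For the circulation property I would argue via excesses. For $v \in V(\bar D)$, write $\mathrm{exc}_{\bar y}(v) := \sum_{\bar a \in \delta^{\mathrm{in}}(v)} \bar y(\bar a) - \sum_{\bar a \in \delta^{\mathrm{out}}(v)} \bar y(\bar a)$. Because $\bar y$ is a unit flow from $(f,\zerovec)$ to $(f,b)$, this excess equals $-1$ at $(f,\zerovec)$, $+1$ at $(f,b)$, and $0$ at every other node of $\bar D$. The map $\bar D \to D$ underlying the projection is a digraph homomorphism that is the identity on the first coordinate, so the excess of $y$ at any node $f' \in V(D)$ is $\sum_{b'} \mathrm{exc}_{\bar y}(f',b')$; this is $0$ for $f' \neq f$ and $(-1)+(+1) = 0$ for $f' = f$. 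Hence $y$ has zero excess everywhere and is a circulation in $D$.

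The only step with any content is the identity $\omega(y) = b$, which I would prove by a potential (telescoping) argument. Define $\phi(f,b) := b \in \Z_2 \times \Z^{g-1}$. By the definition of $A(\bar D)$, every arc $\bar a = ((f,b),(f',b'))$ of $\bar D$ lies over a unique arc $a = (f,f')$ of $D$, and $\omega(\chi^{\{a\}}) = b' - b = \phi(\mathrm{head}\,\bar a) - \phi(\mathrm{tail}\,\bar a)$. Using that each component of $\omega$ is an additive function on $\Z^{A(D)}$ and that $y(a) = \sum_{\bar a \text{ over } a} \bar y(\bar a)$ by definition of the projection,
\[
    \omega(y) = \sum_{a \in A(D)} y(a)\,\omega(\chi^{\{a\}}) = \sum_{\bar a \in A(\bar D)} \bar y(\bar a)\bigl(\phi(\mathrm{head}\,\bar a) - \phi(\mathrm{tail}\,\bar a)\bigr).
\]
Regrouping the last sum by endpoints turns it into $\sum_{v \in V(\bar D)} \phi(v)\,\mathrm{exc}_{\bar y}(v)$, which by the excess computation above equals $\phi(f,b) - \phi(f,\zerovec) = b - \zerovec = b$. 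The only point requiring a little care is that $\omega$ is valued in the group $\Z_2 \times \Z^{g-1}$ rather than in a real vector space, so the telescoping has to be read componentwise with the first coordinate taken modulo $2$; since each component is additive this causes no difficulty. I do not expect any genuine obstacle here: the statement is essentially bookkeeping, which is consistent with it being recorded as an observation.
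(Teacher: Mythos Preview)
Your argument is correct. The paper does not give a proof of this statement at all: it is recorded as an observation and left to the reader, so there is no ``paper's proof'' to compare against. Your verification---checking non-negativity and integrality coordinatewise, summing excesses over fibres to get the circulation property, and using the potential $\phi(f',b') = b'$ to telescope $\omega(y)$ down to $b$---is exactly the intended unpacking, and your remark that the $\Z_2$ component must be read additively modulo $2$ is the only point that needed explicit care.
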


\begin{obs}
    \label{obsProjectionOfCirculation}
    Every integer circulation in $ \bar{D} $ projects to an integer circulation $ y $ in $ D $ with $ \omega(y) = \zerovec $.
\end{obs}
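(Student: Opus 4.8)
The plan is to check two things in turn: first, that the projection $y$ of an integer circulation $\bar y$ in $\bar D$ is itself an integer circulation in $D$; and second, that this circulation satisfies $\omega(y)=\zerovec$.

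For the first point, I would fix a node $f\in V(D)$ and recall from Definition~\ref{defnCoverGraphAlgorithm} that the copies of $f$ in $\bar D$ are precisely the nodes $(f,b)$, that an arc of $\bar D$ leaving some $(f,b)$ projects to an arc of $D$ leaving $f$, and that an arc of $\bar D$ entering some $(f,b)$ projects to an arc of $D$ entering $f$. Summing the defining formula for $y$ over the relevant arcs then shows that the net flow of $y$ at $f$ equals the sum, over all labels $b$, of the net flow of $\bar y$ at the copy $(f,b)$. Each of these summands is $0$ because $\bar y$ is a circulation in $\bar D$, so $y$ is a circulation in $D$; integrality of $y$ is immediate from its definition.

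For the second point, I would use that each coordinate of $\omega$ is $\Z$-linear, so that $\omega$ is a group homomorphism $\Z^{A(D)}\to\Z_2\times\Z^{g-1}$, and write $\omega(y)=\sum_{a\in A(D)}y(a)\,\omega(\chi^{\{a\}})$. Expanding $y(a)$ as the sum of $\bar y$ over the arcs of $\bar D$ projecting to $a$, and using that an arc $\big((f,b),(f',b')\big)$ of $\bar D$ projects to $(f,f')$ and satisfies $\omega(\chi^{\{(f,f')\}})=b'-b$ by the very definition of $A(\bar D)$, this becomes
\[
    \omega(y)=\sum_{((f,b),(f',b'))\in A(\bar D)}\bar y\big((f,b),(f',b')\big)\,(b'-b)\,.
\]
Now I would regard the assignment $(f,b)\mapsto b$ as a potential on $V(\bar D)$ valued in the $\Z$-module $\Z_2\times\Z^{g-1}$ and telescope: reorganizing the right-hand side by nodes of $\bar D$, the coefficient of the potential of a node $w$ is exactly the net flow of $\bar y$ at $w$, which vanishes since $\bar y$ is a circulation. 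Hence $\omega(y)=\zerovec$.

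I do not expect a genuine obstacle: the whole argument is bookkeeping on top of Definition~\ref{defnCoverGraphAlgorithm} and the linearity of $\omega$. The only point deserving a little care is that the telescoping identity above takes place in $\Z_2\times\Z^{g-1}$ rather than in $\Z$, but since each component of $\omega$ is $\Z$-linear this is harmless. If one prefers to avoid potentials, an equivalent route is to decompose $\bar y$ into directed cycles of $\bar D$; each such cycle returns to its starting copy $(f,b)$, so the accumulated label change around it — that is, the $\omega$-value of its projection — is $\zerovec$, and summing over the cycles again yields $\omega(y)=\zerovec$; this is essentially the same computation underlying Observation~\ref{obsProjection}.
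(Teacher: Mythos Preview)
The paper states this as an observation and gives no proof; your argument correctly supplies the routine details the authors omit, and the potential/telescoping computation is exactly the natural verification one would expect. One small caveat on your alternative route: an arbitrary integer circulation in $\bar D$ need not decompose into \emph{directed} cycles with non-negative multiplicities, so if you go that way you should decompose into $\Z$-linear combinations of oriented (fundamental) cycles of the underlying undirected graph of $\bar D$; the telescoping of labels still works along such cycles with the appropriate signs, so the conclusion is unaffected.
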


\begin{lem}
    \label{lemLift}
    Let $ G $ be a graph satisfying the standard assumptions and let $ (D,\omega) $ be a dual representation of $ G $. Every characteristic vector $ y \in \{0,1\}^{A(D)} $ of a strongly connected Eulerian subgraph of $ D $ with $ \omega(y) \ne \zerovec $ lifts to the characteristic vector of a non-negative integer unit flow from $ (f, \zerovec) $ to $ (f, \omega(y)) $ for some $ f \in V(D) $.
\end{lem}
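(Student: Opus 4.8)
The plan is to build the lifted flow $\bar y$ directly by walking along a closed walk in $D$ that realizes the Eulerian subgraph supporting $y$, and tracking the ``homology coordinate'' $b \in \Z_2 \times \{-|E(G)|,\dots,|E(G)|\}^{g-1}$ as we go. First I would fix a node $f \in V(D)$ that lies in the support of $y$, and use the hypothesis that $y$ is a strongly connected Eulerian subgraph to produce a \emph{closed walk} $W = (f = f_0, a_1, f_1, \dots, a_m, f_m = f)$ in $D$ that traverses each arc $a$ exactly $y(a)$ times; such a walk exists precisely because the subgraph is connected and balanced (every node has equal in- and out-degree), which is the classical Euler tour argument. Then I would define, for $0 \le j \le m$, the running partial sum $b_j := \sum_{i=1}^j \omega(\chi^{\{a_i\}}) \in \Z_2 \times \Z^{g-1}$, and set $b_0 = \zerovec$, $b_m = \omega(y)$ (the latter holds because $\omega$ is linear in the second group of coordinates and additive mod $2$ in the first, and $W$ uses arc $a$ exactly $y(a)$ times). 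Lifting $W$ to $\bar D$ is then immediate: the walk $\bar W := ((f_0,b_0), (f_1,b_1), \dots, (f_m,b_m))$ is a walk in $\bar D$ by the very definition of $A(\bar D)$, it starts at $(f,\zerovec)$ and ends at $(f,\omega(y))$, and its characteristic-multiset vector $\bar y$ projects onto $y$.

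The one genuine technical point, and the step I expect to be the main obstacle, is checking two boundedness/well-definedness issues. First, I must verify that every partial sum $b_j$ actually lies in the node set $V(\bar D)$, i.e. that each of its $\Z$-coordinates stays within $\{-|E(G)|,\dots,|E(G)|\}$ along the walk; this is where I would need to argue carefully, since a priori the partial sums could overshoot. The clean way is to first reduce to the case where the closed walk $W$ is \emph{edge-simple} in the sense that it traverses each arc at most once — but $y$ need not be $0/1$, so instead I would decompose: since $y$ is the characteristic vector of an Eulerian subgraph with $\omega(y)\neq\zerovec$, I may split it as a cycle $C_0$ with $\omega(\chi^{C_0})\neq\zerovec$ (which exists, else all cyclic pieces are null-homologous and sum to $\zerovec$ in homology, contradicting $\omega(y)\neq\zerovec$) plus remaining null-homologous cycles attached at shared nodes. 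Actually the simplest route is to observe $y \in \{0,1\}^{A(D)}$ is \emph{not} assumed here, so I would instead bound partial sums by noting that $|\omega_{W_i}(\chi^{\{a\}})| \le 1$ for each arc $a$ (since $\chi^{\{a\}}$ has a single nonzero entry equal to $1$, and $\omega_{W_i}$ has coefficients in $\{-1,0,1\}$), hence $|b_j|_\infty \le j \le m = \sum_a y(a) \le \|y\|_1 \le |E(G)|$ because the support has at most $|E(G)| = |A(D)|$ arcs and $y$ is $0/1$ — indeed $y$ \emph{is} $0/1$, being the characteristic vector of a subgraph. So $m \le |A(D)| = |E(G)|$ and every coordinate of every $b_j$ has absolute value at most $|E(G)|$, placing all $(f_j,b_j)$ in $V(\bar D)$.

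Second, I must confirm that the resulting $\bar y$ is exactly a \emph{non-negative integer unit flow} from $(f,\zerovec)$ to $(f,\omega(y))$, meaning it satisfies conservation at all internal nodes of $\bar D$ and has net out-flow $1$ at the source and net in-flow $1$ at the sink. This follows from the walk structure: a walk from one node to another, regarded as an arc-multiset, is automatically a unit $(s,t)$-flow. One should double-check the degenerate case where $\omega(y)$ happens to force source and sink to coincide — but this cannot happen, since $(f,\zerovec) \neq (f,\omega(y))$ precisely because $\omega(y) \neq \zerovec$ by hypothesis. With these checks in place the lemma follows; all the remaining verifications (that $\bar y$ projects to $y$, that it is integral, non-negative) are routine unwinding of Definition~\ref{defnCoverGraphAlgorithm}. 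I would present the Euler-tour construction, the partial-sum definition, the $|b_j|_\infty \le |E(G)|$ bound, and the conservation check, in that order.
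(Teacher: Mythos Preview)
Your proposal is correct and follows essentially the same route as the paper: take an Euler tour of the subgraph, lift it step by step to $\bar D$ by tracking the running partial sum $b_j = b_{j-1} + \omega(\chi^{\{a_j\}})$, and use $m \le |A(D)| = |E(G)|$ together with $|\omega(\chi^{\{a\}})|_\infty \le 1$ to keep every $b_j$ inside the node set of $\bar D$. Your extra checks (that source $\neq$ sink because $\omega(y)\neq\zerovec$, and that a walk automatically gives a unit flow) are correct and slightly more explicit than the paper's write-up; the mid-paragraph detour about decomposing into cycles is unnecessary and can be dropped, since you correctly observe afterwards that $y$ is $0/1$ by hypothesis.
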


\begin{proof}
    Let $H \subseteq D$ denote the strongly connected Eulerian subgraph of $D$ such that $y = \chi^H$. Consider an Euler tour for $D$, say $W$. Let $f$ denote the initial node of $W$. Thus $W = (f_0,a_1,f_1,\dotsc,a_k,f_k) $ is a (directed) closed walk in $D$ starting and ending at $f$, i.e., $ f_0 = f_k = f $. Set $ b_0 := \zerovec \in \Z_2 \times \Z^{g-1} $ and for each $ i \in [k] $ define $ b_i := b_{i-1} + \omega(\chi^{\{a_i\}}) $. Since $ H $ consists of at most $ |A(D)| = |E(G)| $ arcs, and since all components of $ \omega(\chi^{\{a_i\}}) $ are in $ \{-1,0,1\} $ for each $ i $, we have that $ b_0, \dotsc, b_k \in \Z_2 \times \{-|E(G)|, \dotsc, |E(G)|\}^{g-1} $. This means that for each $ i $, the node $ (f_i, b_i) $ is contained in $ V(\bar{D}) $. Furthermore, note that
    \[
        b_k = b_0 + \sum_{i=1}^k \omega(\chi^{\{a_i\}}) = \omega(\chi^H).
    \]
    Starting at $ (f_0, b_0) $ and sending one unit along the nodes $ (f_1,b_1), \dotsc, (f_k, b_k) $ we obtain a non-negative integer unit flow from $ (f, \zerovec) $ to $ (f, b_k) = (f, \omega(\chi^H)) $ that projects to $ y = \chi^H $.
\end{proof}

We are ready to state our algorithm\footnote{Actually, it is not hard to check that the complexity of our algorithm for Problem~\ref{probCirculationHomologous} is $n^{O(g^2)}$ where $n$ is the number of nodes in $G$.} to solve Problem~\ref{probCirculationHomologous}. Recall that we are given a graph $ G $ satisfying the standard assumptions, and non-negative edge costs $ c : E(G) \to \Q_{\ge 0} $. By Proposition~\ref{propRepresentationSystem}, we can efficiently compute a dual representation $ (D,\omega) $ of $ G $. Recall that we can interpret $ c $ also as costs on the arcs of $ D $. As a next step, for each node $ f \in V(D) $ and each $ b \in \Z_2 \times \{-|E(G)|, \dotsc, |E(G)|\}^{g-1} $ we compute a minimum $ \bar{c} $-cost non-negative integer unit flow $ \bar{y}_{f,b} $ in $ \bar{D} $ from $ (f, \zerovec) $ to $ (f,b) $. It is well-known that each such computation can be performed in time polynomial in the size of $ \bar{D} $. Regarding $ g $ as a constant, both the size of $ \bar{D} $ as well as the number of relevant pairs $ (f, b) $ is polynomial in the size of $ G $.

Having done these computations, in view of Lemma~\ref{lem:decomposition}, Observation~\ref{obsProjectionCost}, Observation~\ref{obsProjection}, and Lemma~\ref{lemLift} it remains to enumerate all lists of flows $ \bar{y}_{f_1,b_1}, \dotsc, \bar{y}_{f_t,b_t} $ with $ t \le \ell = O(g) $ and $ b_1 + \dotsb + b_t = (1,\zerovec) $. Among those lists, take the one that minimizes $ \bar{c}(\bar{y}_{f_1,b_1}) + \dotsb + \bar{c}(\bar{y}_{f_t,b_t}) $. The sum of the projections of these flows give the desired optimal circulation in $ D $. Note that we can afford to enumerate all above subsets, again since we regard $ g $ as a constant. 

\section{An \EP{} theorem for $2$-sided odd cycles}
\label{secEP}

In this section we prove Theorem~\ref{EPodd2cycles}.  The proof is by induction on the Euler genus. The base cases are the sphere, the projective plane, and the torus.  The sphere case follows from the fact that odd cycles have the \EP{} property in planar graphs, see~\cite{Reed99} and~\cite{FHRV07}. We use the best current bound, due to Kr\'{a}\v{l}, Sereni, and Stacho~\cite{KSS12}. Below, an \emph{odd cycle transversal} of a graph $G$ is a set $X \subseteq V(G)$ such that $G - X$ is bipartite. (Recall that an odd cycle packing in $G$ is a collection of node-disjoint odd cycles.)

\begin{thm}[Kr\'{a}\v{l}, Sereni, and Stacho~\cite{KSS12}] \label{EPplanar}
For every planar graph $G$, there exists an odd cycle transversal $X$ and an odd cycle packing $\mathcal{C}$ such that $|X| \leqslant 6 |\mathcal{C}|$.
\end{thm}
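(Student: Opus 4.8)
The plan is to prove the statement in the equivalent form $\tau(G)\le 6\,\nu(G)$ for every planar $G$, where $\nu(G)$ denotes the maximum size of a node-disjoint odd cycle packing and $\tau(G)$ the minimum size of an odd cycle transversal; then taking $X$ to be a minimum transversal and $\mathcal C$ a maximum packing gives $|X|=\tau(G)\le 6\nu(G)=6|\mathcal C|$. I would prove $\tau(G)\le 6\nu(G)$ by induction on $\nu(G)$. If $\nu(G)=0$ then $G$ is bipartite and $\tau(G)=0$. For the inductive step it suffices to establish the following local statement: \emph{if $G$ is planar and $\nu(G)\ge 1$, then there is a set $Z\subseteq V(G)$ with $|Z|\le 6$ and $\nu(G-Z)\le\nu(G)-1$.} Granting this, induction applied to $G-Z$ yields $\tau(G-Z)\le 6\,\nu(G-Z)\le 6(\nu(G)-1)$, hence $\tau(G)\le|Z|+\tau(G-Z)\le 6\,\nu(G)$. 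Minor reductions let us assume $G$ is $2$-connected, which is convenient for the embedding arguments, and we fix an embedding of $G$ in the sphere.

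To produce such a $Z$ I would exploit planar topology through an \emph{innermost} odd cycle. Choose an odd cycle $C$ minimizing the number of nodes of $G$ lying in one of the two open disks, say $\Delta$, bounded by $C$. The key point is that every odd cycle of $G$ meeting $\Delta$ must pass through $V(C)$: a cycle node-disjoint from $V(C)$ lies in one of the two closed regions determined by $C$, and if such a cycle were an odd cycle contained in $\Delta$ it would have strictly fewer nodes in its own interior, contradicting the choice of $C$. Consequently, deleting all of $V(C)$ destroys every odd cycle touching $\Delta\cup C$; and since in any maximum packing at least one cycle meets $V(C)$ (otherwise adding $C$ would enlarge the packing), deleting $V(C)$ decreases $\nu$ by at least one. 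The remaining work is to replace $V(C)$ --- which may be large --- by a bounded subset $Z\subseteq V(C)$ that still achieves this. Here one analyzes, using planarity, how the cycles of a maximum packing and the other odd cycles through $\Delta$ thread through the cyclic order of $C$, and shows that a constant-size ``cut'' of $C$ suffices. Driving this constant down to the optimal value $6$ (rather than merely to ``some constant'') is precisely the technical heart of Kr\'al', Sereni and Stacho's argument~\cite{KSS12}, which proceeds via a delicate case distinction; the value $6$ is best possible.

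The step I expect to be the main obstacle is exactly this last one: obtaining the sharp constant. A qualitatively weaker bound $\tau(G)\le c\,\nu(G)$ for planar $G$ is comparatively soft --- it follows from Reed's theorem~\cite{Reed99} together with the fact that Escher walls are not planar, or from an LP route that first bounds $\tau(G)$ against the fractional value $\tau^*(G)=\nu^*(G)$ by a planar rounding argument and then bounds $\nu^*(G)$ against $\nu(G)$ --- but neither route yields $6$. Since for our purposes Theorem~\ref{EPplanar} serves only as the base case ``$g=0$'' of the induction on Euler genus in the proof of Theorem~\ref{EPodd2cycles}, I would quote it as a black box from~\cite{KSS12}, and at most record the innermost-odd-cycle reduction above while importing the sharp bookkeeping verbatim.
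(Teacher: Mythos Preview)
Your bottom line---quote the result from~\cite{KSS12} as a black box---is exactly what the paper does: Theorem~\ref{EPplanar} is stated with attribution and no proof is given in the paper. The innermost-odd-cycle reduction you sketch is extra material and, as you yourself acknowledge, does not on its own reach the constant~$6$; but since neither you nor the paper actually proves the theorem, this is moot.
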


We begin by proving Theorem~\ref{EPodd2cycles} for the projective plane.
Before doing so, we need to establish a few intermediate results.
Let $G$ be a graph and $v \in V(G)$. Let $W=(v_0, e_1, v_1,  \dots e_k, v_k)$ be a closed walk in $G$. Recall that the multiplicity of $v$ in $W$, is the size of the set $\{j \in [k] \mid v_j=v\}$.  Let $\mathcal W$ be a collection of closed walks in $G$. The \emph{total multiplicity} of  $v$ in $\mathcal W$ is the sum of the multiplicities of $v$ over each $W \in \mathcal W$. We say that $\mathcal W$ is a \emph{$2$-packing} if each node in $G$ has total multiplicity at most $2$.

Suppose that $G$ is cellularly embedded in the projective plane $\mathbb{P}$, and $\Sigma$ is the signature of the embedding.
The \emph{planar double cover} of $G$ is the graph having two copies $v^1$ and $v^2$ of each node $v$ of $G$, two edges $u^1v^1$ and $u^2v^2$ for each edge $uv \in E(G) \setminus \Sigma$ and two edges $u^1v^2$ and $u^2v^1$ for each edge $uv \in \Sigma$.

\begin{lem} \label{EP2packing}
For all graphs $G$ embedded on the projective plane, there exists a set $X \subseteq V(G)$ and a $2$-packing $\mathcal{W}$ of $2$-sided odd closed walks such that $G-X$ does not contain a $2$-sided odd cycle and $|X| \leqslant 6|\mathcal W|$.
\end{lem}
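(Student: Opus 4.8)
The plan is to transfer the problem to the planar double cover $\widetilde G$ of $G$ and apply the planar \EP{} theorem (Theorem~\ref{EPplanar}) there, then push the resulting transversal and packing back down to $G$. Fix the signature $\Sigma$ of the embedding and let $\pi : V(\widetilde G) \to V(G)$ be the natural $2$-to-$1$ projection sending $v^1, v^2 \mapsto v$; note $\pi$ extends to edges and maps walks in $\widetilde G$ to closed walks in $G$ of the same length. The key parity fact is that a walk in $\widetilde G$ projects to a $\Sigma$-even (i.e.\ $2$-sided) closed walk in $G$, because each traversal of an edge of $\Sigma$ switches copies and a closed walk in $\widetilde G$ must switch an even number of times; conversely a $2$-sided closed cycle $C$ of $G$ lifts to \emph{two} disjoint cycles in $\widetilde G$, each a copy of $C$, while a $1$-sided cycle lifts to a single cycle of twice the length. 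In particular, $G-X$ has no $2$-sided odd cycle if and only if $\widetilde G - \pi^{-1}(X)$ has no odd cycle, i.e.\ $\pi^{-1}(X)$ is an odd cycle transversal of $\widetilde G$.

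First I would invoke Theorem~\ref{EPplanar} on $\widetilde G$ to obtain an odd cycle transversal $\widetilde X$ and a node-disjoint odd cycle packing $\widetilde{\mathcal C} = \{\widetilde C_1, \dots, \widetilde C_t\}$ of $\widetilde G$ with $|\widetilde X| \le 6t$. Set $X := \pi(\widetilde X)$, so $|X| \le |\widetilde X| \le 6t$, and $\pi^{-1}(X) \supseteq \widetilde X$ is still an odd cycle transversal of $\widetilde G$; hence $G - X$ contains no $2$-sided odd cycle. For the packing, let $\mathcal W := \{\pi(\widetilde C_1), \dots, \pi(\widetilde C_t)\}$. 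Each $\pi(\widetilde C_i)$ is a closed walk of the same length as $\widetilde C_i$, hence odd, and it is $2$-sided by the parity fact above (it is the projection of a closed walk). Since the $\widetilde C_i$ are node-disjoint in $\widetilde G$ and each node $v \in V(G)$ has exactly two preimages, every node of $G$ has total multiplicity at most $2$ across $\mathcal W$ — so $\mathcal W$ is a $2$-packing. Finally $|X| \le 6t = 6|\mathcal W|$ (adjusting so that $t = |\mathcal W|$: if two walks coincide as multisets we may simply keep the list with repetitions, or argue $|\mathcal W| \ge t$ after discarding, but the cleanest route is to define $\mathcal W$ as a list/multiset of size exactly $t$ so the bound $|X| \le 6|\mathcal W|$ is immediate).

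The main obstacle I anticipate is the bookkeeping around degenerate cases: a $2$-sided odd cycle of $G$ lifts cleanly, but a closed walk $\pi(\widetilde C_i)$ need not be a \emph{cycle} in $G$ — it can traverse a node twice (this happens exactly when $\widetilde C_i$ uses both copies $v^1, v^2$ of some $v$), which is precisely why the statement is phrased in terms of $2$-packings of closed \emph{walks} rather than disjoint cycles, and why the multiplicity bound is $2$ rather than $1$. One must check carefully that $\pi(\widetilde C_i)$ is genuinely a closed walk (consecutive edges share the projected node — immediate) and that its length parity is preserved (immediate, since $\pi$ is length-preserving on walks). A secondary point worth stating explicitly is the equivalence ``$G-X$ has no $2$-sided odd cycle $\iff$ $\widetilde G - \pi^{-1}(X)$ is bipartite'': the forward direction uses that a $2$-sided odd cycle in $G-X$ lifts to an odd cycle in $\widetilde G - \pi^{-1}(X)$, and the reverse uses that any odd cycle in $\widetilde G$ projects to a $2$-sided odd \emph{closed walk} in $G$, which contains a $2$-sided odd cycle (using the definition from Section~\ref{secBasics} that a closed walk decomposes into cycles an odd number of which are $1$-sided — here zero are $1$-sided after removing the odd count, so some constituent cycle is odd; a short parity argument on the cycle decomposition of the walk finishes it). Once these topological-combinatorial translations are nailed down, the rest is a direct application of Theorem~\ref{EPplanar}.
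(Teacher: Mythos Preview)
Your proposal is correct and follows essentially the same approach as the paper: pass to the planar double cover, apply Theorem~\ref{EPplanar} there, then project both the transversal and the packing back to $G$. The paper's proof is terser on the parity/lifting bookkeeping you spell out, and it handles the non-cellular case up front (where $G$ is already planar), but the core argument is identical; your observation that $\mathcal W$ should be read as a list/multiset to make $|X| \le 6|\mathcal W|$ immediate is a fair clarification of what the paper leaves implicit.
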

\begin{proof}
We may assume that the embedding of $G$ in $\mathcal{P}$ is cellular. Otherwise, $G$ is planar and the result follows directly from Theorem~\ref{EPplanar}. Now, let $G'$ be the planar double cover of $G$. By Theorem~\ref{EPplanar}, there exists an odd cycle transversal $X'$ and an odd cycle packing $\mathcal{C}'$ of $G'$ such that $|X'| \leqslant 6 |\mathcal{C}'|$.  Let $X$ be the set of nodes $x \in V(G)$ such that $x^1 \in X'$ or $x^2 \in X'$.  Each $2$-sided odd cycle $C$ of $G$ lifts to two odd cycles $C^1$ and $C^2$ of $G'$.  Since $X'$ is an odd cycle transversal of $G'$, $X$ contains a node of $C$.  Therefore, $G-X$ does not contain a $2$-sided odd cycle.  Conversely, each odd cycle $C$ of $G'$ projects to a $2$-sided odd closed walk $W_C$ of $G$.  Since the cycles in $\mathcal C'$ are node-disjoint, the collection $\mathcal W:=\{W_C \mid C \in \mathcal C'\}$ is a $2$-packing of $2$-sided odd closed walks, as required.
\end{proof}

A \emph{bicycle} is a closed walk consisting of the union of two $1$-sided cycles $C_1$ and $C_2$ that meet in exactly one node.  Note that a bicycle is $2$-sided.

\begin{lem} \label{EPminimalwalks}
Let $G$ be a graph embedded in the projective plane.  Let $W$ be a $2$-sided odd closed walk in $G$. Then $W$ contains a $2$-sided odd cycle or an odd bicycle.
\end{lem}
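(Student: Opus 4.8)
The plan is to analyze a $2$-sided odd closed walk $W$ that has been chosen to be minimal (say, with the fewest edges, counted with multiplicity) among all $2$-sided odd closed walks, and to show that minimality forces $W$ to be either a $2$-sided odd cycle or an odd bicycle. First I would decompose $W$ into cycles $C_1, \dots, C_m$ by treating $W$ as an Eulerian multigraph, as in Section~\ref{secBasics}. Since $W$ is $2$-sided and odd, the parity bookkeeping gives us two facts: an odd number of the $C_i$ are odd cycles (odd length), and an even number of the $C_i$ are $1$-sided. If some $C_i$ is a $2$-sided odd cycle we are immediately done, so assume every odd cycle among the $C_i$ is $1$-sided. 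Combined with the parity facts, at least one $C_i$ must be an odd $1$-sided cycle, and in fact the number of odd (hence $1$-sided) cycles is odd, so there are at least two $1$-sided cycles in total unless there is exactly one, which I would rule out: a single $1$-sided cycle cannot, by itself, constitute a $2$-sided walk. Actually, if $W$ is a closed \emph{walk} that decomposes into a single cycle, that cycle \emph{is} $W$; a $1$-sided cycle is $1$-sided, contradiction. So $m \ge 2$.

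Next I would use the fact that we are in the projective plane $\mathbb{P}$, where any two noncontractible simple closed curves must intersect, and any two $1$-sided cycles intersect in at least one node (two disjoint M\"obius strips do not embed in $\mathbb{P}$). Pick two $1$-sided cycles $C_1, C_2$ among the $C_i$; they share a node $v$. The key move is a rerouting/uncrossing argument at $v$: I would split the walk so that at $v$ we ``switch'' between $C_1$ and $C_2$, producing $C_1 \cup C_2$ traversed as a single closed walk $W'$ through $v$, plus the leftover cycles $C_3, \dots, C_m$ as a separate closed walk $W''$. Since $C_1, C_2$ are both $1$-sided and odd, $W' = C_1 \cup C_2$ is a $2$-sided walk of odd+odd $=$ even length — wait, that is even, so $W'$ is $2$-sided but even. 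Hmm: then $W''$ carries the odd parity. I would instead argue as follows: among $C_1, \dots, C_m$, consider the odd ones; there is an odd number of them, all $1$-sided by assumption. If there are $\ge 3$ of them, or if there are odd cycles plus extra even cycles, I can peel off a shorter $2$-sided odd closed walk: take one $1$-sided odd cycle $C_1$ and one other cycle $C_j$ sharing a node with it (such a $C_j$ exists if $W$ is connected as a walk — and I may assume $W$ is connected, else restrict to a component containing an odd contribution) and reroute to merge them; iterating, a minimal $W$ must have \emph{exactly} two cycles in its decomposition, namely $C_1$ and $C_2$, which are both $1$-sided (odd+odd to keep $W$ odd), and they meet in exactly one node (if they met in two or more, a shorter rerouting at one shared node would split off a nonempty $2$-sided odd closed walk, contradicting minimality). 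That is precisely an odd bicycle.

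So the skeleton is: (1) take $W$ minimal; (2) decompose into cycles, get parity constraints; (3) if any component cycle is a $2$-sided odd cycle, done; (4) otherwise every odd component cycle is $1$-sided, and by minimality plus the merging/rerouting argument the decomposition is exactly two odd $1$-sided cycles meeting in exactly one node (using that in $\mathbb{P}$ two $1$-sided cycles always intersect, and that $W$ can be assumed connected and ``irreducible''); (5) conclude $W$ is an odd bicycle. The main obstacle is step (4): making the rerouting precise so that merging two cycles at a shared node genuinely yields a \emph{shorter} closed walk that is still $2$-sided and odd (the parity check after merging must be done carefully — merging two odd $1$-sided cycles gives an even walk, so the ``shorter $2$-sided odd'' walk must actually be obtained by a different recombination, e.g. keeping one odd $1$-sided cycle together with the rerouted rest), and arguing that two shared nodes allow a strict reduction. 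I would handle this by setting up the rerouting as an operation on the Eulerian multigraph of $W$ that strictly decreases edge-count while preserving the $\Sigma$-parity (using the combinatorial signature definition from Section~\ref{secDualOrientation}: $\Sigma$-parity of a closed walk is additive over its cycle decomposition and invariant under rerouting at a common node), which makes all the parity claims purely $\mathbb{Z}_2$-linear-algebra and avoids topological case analysis.
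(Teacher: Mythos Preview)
Your skeleton is the same as the paper's: take a minimal $W$, decompose it into edge-disjoint cycles, use parity to force the decomposition down to two $1$-sided cycles meeting in one node. But two steps have genuine gaps.

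First, when you pick two $1$-sided cycles $C_1, C_2$ you assume both are odd, compute odd $+$ odd $=$ even, and then flounder. The repair is much simpler than the reroutings you sketch. You already observed that the number of $1$-sided $C_i$ is even while the number of odd $C_i$ is odd and every odd $C_i$ is $1$-sided; hence the $1$-sided cycles have odd total length. So among the $1$-sided cycles there is a pair $C_1, C_2$ with $|E(C_1)| + |E(C_2)|$ odd (one odd, one even). Now $C_1 \cup C_2$ is $2$-sided and odd, it is a closed walk since two $1$-sided curves in $\mathbb{P}$ meet, and minimality gives $E(W) = E(C_1) \cup E(C_2)$ immediately. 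No rerouting is needed.

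Second, your argument for $|V(C_1) \cap V(C_2)| = 1$ does not work: once $E(W) = E(C_1) \cup E(C_2)$ you cannot ``split off a shorter $2$-sided odd closed walk'' by rerouting at a shared node, because there is nothing shorter to produce. The paper's move here is different: if $C_1$ and $C_2$ share two nodes then $C_1 \cup C_2$ contains a $2$-sided cycle $C$; take a \emph{new} cycle decomposition of $E(W)$ that includes $C$ as one of the pieces, and rerun the earlier parity argument on that decomposition. It forces the decomposition to be two $1$-sided cycles, contradicting that $C$ is $2$-sided. This is the step you should replace.
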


\begin{proof}
We view a closed walk as an Eulerian multigraph.
Let $W$ be a counterexample with $|E(W)|$ minimum.
  Since $W$ is Eulerian, $E(W) = E(C_1) \cup \dots \cup E(C_\ell)$, where $C_1, \dots C_\ell$ are edge-disjoint cycles.  By relabelling, we may assume that $C_1, \dots, C_k$ are all $1$-sided and $C_{k+1}, \dots, C_\ell$ are all $2$-sided.  Furthermore, $C_{k+1}, \dots, C_\ell$ are all even, else we are done.  Since $W$ is odd, $\sum_{i \in [k]} |E(C_i)|$ is odd.  In particular, $k \geq 1$.  Since $W$ is $2$-sided, $k \geq 2$ and is even.  By relabelling, we may assume that $C_1 \cup C_2$ is odd.   Since every two $1$-sided curves on the projective plane intersect, $|V(C_1) \cap V(C_2)| \geq 1$.  Therefore, $C_1 \cup C_2$ is a $2$-sided odd closed walk. By minimality, $E(W)=E(C_1) \cup E(C_2)$.
  If $|V(C_1) \cap V(C_2)| \geq 2$, then $W$ contains a $2$-sided cycle $C$.  Let $C \cup C_1' \cup \dots C_m'$ be a decomposition of $E(W)$ into edge-disjoint cycles.   By the previous argument, $m=1$ and $C$ and $C_1'$ are both $1$-sided, which is a contradiction. Therefore, $|V(C_1) \cap V(C_2)|=1$, and $W$ is an odd bicycle, as required.
\end{proof}

 Let $G$ be a graph embedded on the projective plane.  A \emph{painted walk} in $G$ is a pair $(W, \Delta)$, where $W$ is a closed walk and $\Delta$ is a component of $\mathbb{P} \setminus W$ which is an open disk.
 Note that there is a unique choice for $\Delta$ if $W$ is a $2$-sided cycle and two choices for $\Delta$ if $W$ is a bicycle. Two open disks $\Delta_1$ and $\Delta_2$ \emph{cross} if $\Delta_1 \cap \Delta_2$, $\Delta_1 \setminus \Delta_2$, and $\Delta_2 \setminus \Delta_1$ are all non-empty. Let $\mathcal W$ be a collection of painted walks in $G$.
 We say that $\mathcal W$ is \emph{laminar} if for all $(W_1, \Delta_1), (W_2, \Delta_2) \in \mathcal W$, $\Delta_1$ and $\Delta_2$ do not cross.
 We let $\overline{\mathcal{W}}$ be the collection of underlying walks in $\mathcal W$ and we call $\mathcal W$ a \emph{$2$-packing}, if $\overline{\mathcal{W}}$ is a $2$-packing.

 \begin{lem} \label{EPtwo_bicycles}
 Let $(W_1, \Delta_1)$ and $(W_2, \Delta_2)$ be painted bicycles such that $\{(W_1, \Delta_1), (W_2, \Delta_2)\}$ is a $2$-packing.  Then $\Delta_1$ and $\Delta_2$ cross.
 \end{lem}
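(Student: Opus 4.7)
The plan is to assume that $\Delta_1$ and $\Delta_2$ do not cross — meaning one of $\Delta_1 \cap \Delta_2 = \emptyset$, $\Delta_1 \subseteq \Delta_2$, or $\Delta_2 \subseteq \Delta_1$ holds — and to derive a contradiction with the $2$-packing hypothesis in each case. Denote the two $1$-sided cycles of $W_i$ by $C_i^1, C_i^2$, meeting at the node $v_i$. The key topological input I would establish is that the closure $\bar\Delta_i = \Delta_i \cup W_i$ in $\mathbb{P}$ is a \emph{pinched disk}: a closed $2$-disk with two distinct boundary points identified, both mapped to $v_i$. As a consequence, $\bar\Delta_i \setminus \{v_i\}$ is simply connected, so every $1$-sided simple closed curve contained in $\bar\Delta_i$ must pass through $v_i$ — a simple closed curve in $\bar\Delta_i$ avoiding $v_i$ would be contractible in $\bar\Delta_i \setminus \{v_i\} \subseteq \mathbb{P}$ and therefore $2$-sided in $\mathbb{P}$.

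I plan to establish the pinched disk structure using the orientation double cover $\pi : S^2 \to \mathbb{P}$ with antipodal deck transformation $a$. The lifts of the two $1$-sided cycles of $W_i$ are two $a$-invariant simple closed curves in $S^2$ that meet at the two antipodal lifts of $v_i$, cutting $S^2$ into four bigon-shaped open disks which $a$ pairs into two; the preimage of $\Delta_i$ is one such antipodal pair. The projection of the closure of a single bigon down to $\bar\Delta_i$ is a homeomorphism on the interior and collapses the two corner lifts of $v_i$ to $v_i$. The same argument shows that $\bar\Delta_i'$, the closure of the second open disk of $\mathbb{P} \setminus W_i$, is also a pinched disk with pinch at $v_i$.

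The casework then runs uniformly. If $\Delta_1 \subseteq \Delta_2$, a short point-set step gives $W_1 \subseteq \bar\Delta_2$: any point of $W_1$ outside $\bar\Delta_2$ would lie in the open set $\mathbb{P} \setminus \bar\Delta_2$ and, being on $\partial\Delta_1$, would have nearby points in $\Delta_1 \subseteq \Delta_2$, contradicting the disjointness of $\Delta_2$ and $\mathbb{P} \setminus \bar\Delta_2$. Applying the topological input to $C_1^1, C_1^2 \subseteq W_1 \subseteq \bar\Delta_2$ then forces $v_2 \in C_1^1 \cap C_1^2 = \{v_1\}$, so $v_1 = v_2$; but this common node has multiplicity $2$ in each of $W_1$ and $W_2$, totalling $4$, which violates the $2$-packing. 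The case $\Delta_2 \subseteq \Delta_1$ is symmetric. For $\Delta_1 \cap \Delta_2 = \emptyset$, an analogous step shows that $\Delta_2 \cap W_1 = \emptyset$ (a point of $\Delta_2 \cap W_1$ would have a neighborhood in the open set $\Delta_2$ meeting $\Delta_1$), so $W_2 \subseteq \bar\Delta_1'$; applying the topological input to $C_2^1, C_2^2 \subseteq \bar\Delta_1'$ forces $v_1 \in C_2^1 \cap C_2^2 = \{v_2\}$, yielding the same contradiction. The main obstacle is the rigorous verification of the pinched disk topology and the simple-connectedness of $\bar\Delta_i \setminus \{v_i\}$, since $\bar\Delta_i$ fails to be a manifold at $v_i$; the double cover calculation above provides the cleanest route and should transfer the topological description of $\bar R$ in $S^2$ to $\bar\Delta_i$ in $\mathbb{P}$ under the quotient by $a$.
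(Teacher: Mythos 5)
Your proof is correct and rests on the same topological fact the paper uses: after removing the pinch node of a bicycle, what remains of the closure of either complementary disk is simply connected, hence contains no $1$-sided curve. The paper's version is more economical in its bookkeeping — letting $v$ be the degree-$4$ node of $W_1$, it observes that in both non-crossing cases one has $W_2 \subseteq \mathbb{P}\setminus(\Delta_1\cup\{v\})$, a set with no $1$-sided curve, which is immediately contradictory because $W_2$ contains a $1$-sided cycle. You instead show $W_1 \subseteq \bar\Delta_2$ (resp.\ $W_2 \subseteq \bar\Delta_1'$), deduce $v_1=v_2$ from the fact that both $1$-sided constituent cycles of the contained bicycle must pass through the pinch point, and then contradict the $2$-packing bound via total multiplicity $4$ at that node. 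This is a valid and essentially equivalent rearrangement; you also spell out the pinched-disk structure via the orientation double cover, which the paper leaves implicit. Nothing is missing, though you could shorten the casework by aiming the containment in the single uniform direction ($W_2$ inside the complement of $\Delta_1 \cup \{v\}$) as the paper does.
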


 \begin{proof}
 Suppose not.  Let $v$ be the degree-$4$ node of $W_1$.  By symmetry we may assume $\Delta_1 \subseteq \Delta_2$ or $\Delta_1 \cap \Delta_2=\emptyset$.  In either case, $W_2 \subseteq \mathbb{P} \setminus (\Delta_1 \cup \{v\})$.  However, this is impossible since $\mathbb{P} \setminus (\Delta_1 \cup \{v\})$ does not contain any $1$-sided curves.
 \end{proof}

\begin{lem} \label{EPuncrossing}
Let $G$ be a projective planar graph. Suppose $G$ contains a $2$-packing $\mathcal W$ of $k$ painted walks, such that each walk in $\overline {\mathcal W}$ is a $2$-sided odd cycle or an odd bicycle.  Then, $G$ contains a laminar $2$-packing $\mathcal W'$ of $k-1$ painted walks such that each walk in $\overline{\mathcal W}'$ is a $2$-sided odd cycle.
\end{lem}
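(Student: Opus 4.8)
The plan is to induct on a carefully chosen potential that measures how far $\mathcal{W}$ is from being laminar with all cycles $2$-sided. A natural choice is the pair $(b, x)$ ordered lexicographically, where $b$ is the number of bicycles in $\overline{\mathcal W}$ and $x$ is the number of crossing pairs of painted disks in $\mathcal W$; the goal of each step is to strictly decrease $(b,x)$ while keeping the $2$-packing property and keeping the number of walks equal to $k$, and only at the very end (when no bicycle is left and the family is laminar, so we can no longer reduce) do we delete a single walk to land at $k-1$. Actually it is cleaner to argue: as long as $b\ge 1$ or the family is not laminar, we can produce a $2$-packing of $k$ painted walks with smaller potential; once $b=0$ and the family is laminar we are done after discarding one walk — but since we are \emph{given} $k$ painted walks and must output $k-1$, we can also simply note that whenever $b\ge 1$ we will be able to \emph{merge or discard} so that the count drops, so it is enough to show: if $b\ge 1$ then we can get a laminar-progress step, and if $b=0$ but the family is not laminar we can uncross; the loss of exactly one walk is forced by the case $b\ge 1$, which by Lemma~\ref{EPtwo_bicycles} must occur at least once.

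The two key local moves are the following. \emph{Eliminating a bicycle.} Let $(W,\Delta)$ be a painted bicycle in $\mathcal W$, with $W = C_1 \cup C_2$ the union of two $1$-sided cycles meeting at a node $v$, and $\Delta$ one of the two disk-components of $\mathbb P\setminus W$. Topologically, $\partial\Delta$ traverses $C_1$ and $C_2$; I expect that $\Delta$ together with either $C_1$ or $C_2$ can be ``rerouted'' to produce a $2$-sided odd cycle $C$ using the edges of $W$, essentially because the boundary of $\Delta$, read as a closed curve, is a $2$-sided cycle in $G$ (here we must be slightly careful: $\partial\Delta$ as a subgraph might repeat the node $v$, so we pick the actual facial cycle bounding $\Delta$, which is a genuine cycle; one checks it is odd since $C_1\cup C_2$ is odd and the ``other side'' contributes the even complementary part, or more directly: $W$ is odd and $2$-sided, and any $2$-sided cycle in $W$ must be odd because the decomposition of $W$ into a $2$-sided cycle plus the remainder forces the remainder to be even by the parity/sidedness bookkeeping used in Lemma~\ref{EPminimalwalks}). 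Replacing $(W,\Delta)$ by $(C,\Delta)$ keeps the node-multiplicities bounded (we use a subset of the edges of $W$, so total multiplicities do not increase), keeps the $2$-packing property, decreases $b$ by one, and does not increase the number of walks. \emph{Uncrossing two $2$-sided odd cycles.} Suppose $\mathcal W$ has no bicycle but contains two painted $2$-sided odd cycles $(W_1,\Delta_1),(W_2,\Delta_2)$ whose disks cross. Since they cross, $W_1$ and $W_2$ share at least one node; using the $2$-packing hypothesis they share at most two nodes with multiplicity accounting, and the standard planar uncrossing argument (cut and paste along $W_1\cap W_2$) produces two closed walks $W_1',W_2'$ on the same edge set with $\Delta_i'$ the corresponding disks, such that the number of crossing pairs strictly decreases, parities are preserved ($W_1'\cup W_2'$ has the same edge multiset as $W_1\cup W_2$, and each $W_i'$ is again odd and $2$-sided because an odd number of the two pieces it inherits is odd — this is where we use that $W_1,W_2$ are both odd and $2$-sided so their ``sum'' is even and the uncrossing splits this evenly), and the node-multiplicities are unchanged since we only reroute at shared nodes. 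Each $W_i'$ is then $2$-sided; if it happens to be a non-cycle we again extract a cycle or bicycle from it as in Lemma~\ref{EPminimalwalks} — but since we arranged $b=0$, and the uncrossing of two even-sum $2$-sided odd walks cannot create a $1$-sided piece, the outputs are $2$-sided odd cycles again.

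The main obstacle I anticipate is the bookkeeping in the uncrossing move: verifying that after cutting and pasting along the (possibly several) shared nodes of $W_1$ and $W_2$, the two resulting closed walks are each \emph{odd} and \emph{$2$-sided} (not $1$-sided, and not an even walk), and that the relevant potential — crossing pairs plus some tie-breaker — strictly drops rather than cycling. On the projective plane this is controlled because the sidedness of a closed walk is a $\Z_2$-homology invariant (the signature parity), and homology is additive: $[W_1']+[W_2'] = [W_1]+[W_2]$ in $H_1(\mathbb P;\Z_2)=\Z_2$, with each class lying in $\{0\}$ (the $2$-sided class), forcing both $W_i'$ to be $2$-sided; oddness is tracked by a second $\Z_2$-invariant (length mod $2$ is additive on the disjoint edge union, and an uncrossing preserves the edge multiset). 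The remaining subtlety — that the disk $\Delta_i'$ really is an open disk and that crossings with the \emph{other} walks in $\mathcal W$ do not increase — follows because each rerouted curve stays inside $\Delta_1\cup\Delta_2$ (or its complement) up to the shared boundary, so it cannot create new crossings with walks that were already laminar with respect to both $\Delta_1$ and $\Delta_2$. Once both moves are available and each strictly decreases the lexicographic potential $(b,x)$ without increasing the walk count, the process terminates at a laminar family with no bicycles; since Lemma~\ref{EPtwo_bicycles} guarantees the bicycle-elimination move is applied at least once whenever we started with the ``bad'' configuration, and each elimination can be arranged to also drop the count by one at the step where a bicycle's two defining $1$-sided cycles would otherwise both survive, we end with $k-1$ painted $2$-sided odd cycles forming a laminar $2$-packing, as claimed.
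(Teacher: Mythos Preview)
Your bicycle-elimination move does not work. If $W=C_1\cup C_2$ is a bicycle with $C_1,C_2$ meeting only at $v$, then the only cycles contained in $W$ as a subgraph are $C_1$ and $C_2$ themselves, and both are $1$-sided; there is no $2$-sided cycle using a subset of the edges of $W$. Moreover the boundary of either disk component of $\mathbb{P}\setminus W$ is the whole bicycle --- it traverses every edge of $W$ once and passes through $v$ twice --- so ``the actual facial cycle bounding $\Delta$'' is not a simple cycle at all. Hence you cannot replace a single painted bicycle by a single painted $2$-sided odd cycle, and the first coordinate $b$ of your potential cannot be decreased by any move that touches only one walk.

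There is a second, related gap: when you uncross two walks and then pass each resulting $2$-sided odd closed walk through Lemma~\ref{EPminimalwalks}, the output may well be a bicycle rather than a cycle, so $b$ can increase while $x$ decreases and lex-$(b,x)$ need not drop. You also assert without proof that uncrossing $W_1,W_2$ creates no new crossings with the \emph{other} members of $\mathcal W$; this is not automatic, since the new disks $\Delta_1',\Delta_2'$ are not contained in $\Delta_1\cup\Delta_2$ in general.

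The paper sidesteps all of this by not trying to eliminate bicycles during the process. It minimizes a global potential $\big(\sum_W|E(W)|,\ \sum_W a(\Delta),\ -\sum_W a(\Delta)^2\big)$ over all $2$-packings of $k$ painted walks (each a $2$-sided odd cycle or an odd bicycle), and shows that any crossing pair $(\Delta_1,\Delta_2)$ can be uncrossed so as to strictly decrease this potential; bicycles are allowed to appear and disappear freely along the way. Only \emph{after} the minimizer is shown to be laminar does Lemma~\ref{EPtwo_bicycles} enter, forcing at most one bicycle to remain; discarding it yields the desired $k-1$ painted $2$-sided odd cycles. The drop from $k$ to $k-1$ is thus a single final deletion, not something earned during the uncrossing.
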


\begin{proof}
If $(W, \Delta)$ is a painted walk, we let $a(W, \Delta)$ be the area of $\Delta$. If  $\mathcal W$ is a collection of painted walks, we let
\[
\nabla \mathcal W:=\left(\sum_{(W, \Delta) \in \mathcal W} |E(W)|, \sum_{(W, \Delta) \in \mathcal W} a(W,
\Delta), -\sum_{(W, \Delta) \in \mathcal W} a(W, \Delta)^2\right).
\]
We choose $\mathcal W$ such that $\nabla \mathcal W$ is lexicographically minimal among all $2$-packings $\mathcal W'$ of $k$ painted walks, such that each walk in $\mathcal W'$ is a $2$-sided odd cycle or an odd bicycle.   We claim that such a $\mathcal W$ is laminar.  Suppose not. Let $(W_1, \Delta_1), (W_2, \Delta_2) \in \mathcal W$ be such that  $\Delta_1$ and $\Delta_2$ cross, and $a(W_1, \Delta_1) \leq a(W_2, \Delta_2)$. Let $\mathsf{cl} (\Delta_1)$ be the closure of $\Delta_1$, and $H=W_2 \cap \mathsf{cl}(\Delta_1)$.
Since $\Delta_1$ and $\Delta_2$ cross and $\mathcal W$ is a $2$-packing, $H$ contains a path $P$ such that both ends of $P$ are on $W_1$, but no other node of $P$ is on $W_1$.   Observe that there are edge-disjoint paths $P_1$ and $P_2$ in $W_1$ such that $E(W_1)=E(P_1) \cup E(P_2)$ and each of $P \cup P_1$ and $P \cup P_2$ is a $2$-sided cycle or a bicycle.  Since $W_1$ is odd, exactly one of $P \cup P_1$ or $P \cup P_2$ is odd. By symmetry, suppose $W_1':=P \cup P_1$ is odd. Note that $W_1'$ bounds a disk $\Delta_1'$ strictly contained in $\Delta_1$.  Furthermore, $W_2 \setminus E(W_1')$ is a $2$-sided odd closed walk.  By Lemma~\ref{EPminimalwalks}, $W_2 \setminus E(W_1')$ contains a $2$-sided odd cycle or an odd bicycle, $W_2'$.  By minimality, $W_2'=W_2 \setminus E(W_1')$. Therefore, there is a disk $\Delta_2'$ such that $(W_2', \Delta_2')$ is a painted walk and $a(W_1', \Delta_1')+a(W_2', \Delta_2') \leq a(W_1, \Delta_1)+a(W_2, \Delta_2)$.  Letting $\mathcal W':=(\mathcal W \setminus \{(W_1, \Delta_1), (W_2, \Delta_2)\}) \cup \{(W_1', \Delta_1'), (W_2', \Delta_2')\}$, we have that $\nabla \mathcal W'$ is lexicographically smaller than $\nabla \mathcal W$, which is a contradiction.  Thus, $\mathcal W$ is laminar, as claimed.  By Lemma~\ref{EPtwo_bicycles}, $\mathcal{W}$ contains at most one painted bicycle.  Therefore, the lemma follows by removing the painted bicycle from $\mathcal W$, if necessary.
\end{proof}

\begin{lem} \label{EPlaminar}
Let $G$ be a graph embedded on the projective plane $\mathbb{P}$, and let $\mathcal{F}$ be a laminar family of $k$ open disks in $\mathbb{P}$. Suppose that each $\Delta \in \mathcal{F}$ is bounded by a $2$-sided odd cycle $\partial \Delta \subseteq G$, and that $\{\partial \Delta \mid \Delta \in \mathcal{F}\}$ is a $2$-packing. Then $G$ has a packing of $k/18$ $2$-sided odd cycles.
\end{lem}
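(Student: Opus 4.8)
The plan is to exploit the laminar structure of $\mathcal{F}$ to organize the disks into a forest, and then to extract a large subfamily of pairwise disjoint disks whose boundary cycles are node-disjoint. Since $\{\partial\Delta \mid \Delta \in \mathcal{F}\}$ is a $2$-packing, every node of $G$ lies on at most two of the boundary cycles; so the only obstruction to the boundary cycles being node-disjoint is pairs of disks that share boundary nodes, and laminarity forces such sharing to occur between disks that are nested. First I would form the Hasse-type forest $T$ on $\mathcal{F}$ in which $\Delta'$ is a child of $\Delta$ if $\Delta' \subsetneq \Delta$ and no $\Delta'' \in \mathcal{F}$ satisfies $\Delta' \subsetneq \Delta'' \subsetneq \Delta$. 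For a disk $\Delta$ with children $\Delta_1,\dots,\Delta_r$, the cycle $\partial\Delta$ is vertex-disjoint from $\partial\Delta_i$ unless they share a node; but a node shared by $\partial\Delta$ and $\partial\Delta_i$ already has multiplicity two, so it cannot lie on any third boundary cycle, and in particular $\partial\Delta_i$ and $\partial\Delta_j$ are vertex-disjoint for $i \ne j$, and grandchildren of $\Delta$ are vertex-disjoint from $\partial\Delta$.

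Next I would argue that one can $3$-colour the forest $T$ — say by the depth of each node modulo $3$ — so that within each colour class the chosen disks have pairwise node-disjoint boundaries: two disks of the same colour are either incomparable in $T$ (hence disjoint as sets, and their boundaries can meet only in shared nodes, which by the $2$-packing bound is fine — actually incomparable disks have disjoint interiors but could still share boundary nodes, so I need a little more care here) or they differ in depth by a multiple of $3 \ge 2$, and the observation above shows boundaries at "distance" $\ge 2$ in the nesting order are vertex-disjoint. To handle the incomparable case cleanly, I would instead take colour classes to be antichains refined by depth modulo $3$: within a fixed depth-residue class, pick in each "sibling block" only disks that are pairwise incomparable, and note that siblings have vertex-disjoint boundaries by the argument above, while disks in different blocks at the same residue depth are separated by at least one intermediate level and so are vertex-disjoint. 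By pigeonhole one of the three colour classes contains at least $k/3$ disks.

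Finally, within the chosen colour class, the boundary cycles are node-disjoint $2$-sided odd cycles, but the disks themselves may still be nested (nesting does not contradict node-disjointness of the boundaries, only of shared nodes). That is already a packing of $\ge k/3$ node-disjoint $2$-sided odd cycles, which is better than the claimed $k/18$; the extra slack factor of $6$ in the statement suggests the actual argument is coarser — probably it first passes through the planar double cover or through Theorem~\ref{EPplanar} to convert a laminar family into an honest packing, losing a factor of $6$ there and a factor of $3$ from the colouring. So the safe route is: extract $\ge k/3$ disks with node-disjoint boundary cycles as above, observe their closed disks form a laminar family in $\mathbb{P}$ each bounded by an odd cycle of the graph, and then apply Theorem~\ref{EPplanar} to the planar graph obtained by cutting $\mathbb{P}$ along the outermost such cycle (or to the double cover as in Lemma~\ref{EP2packing}), turning the $k/3$ laminar odd cycles into an odd cycle packing of size at least $(k/3)/6 = k/18$. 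The main obstacle I anticipate is the bookkeeping in this last reduction: one must check that cutting along the boundary cycles produces a genuinely planar auxiliary graph in which the remaining nested cycles survive as disjoint odd cycles, and that no parity is lost in the process — this is where the factor $6$ (from Theorem~\ref{EPplanar}) and the factor $3$ (from the laminar colouring) get multiplied.
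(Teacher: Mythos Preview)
Your central claim---that siblings in the Hasse forest have vertex-disjoint boundary cycles---is false, and this breaks the whole depth-mod-$3$ colouring. You write that ``a node shared by $\partial\Delta$ and $\partial\Delta_i$ already has multiplicity two, so it cannot lie on any third boundary cycle, and in particular $\partial\Delta_i$ and $\partial\Delta_j$ are vertex-disjoint for $i\neq j$'', but this is a non sequitur: the premise concerns only nodes that happen to lie on the \emph{parent} boundary $\partial\Delta$, and says nothing about a node $v\in\partial\Delta_i\cap\partial\Delta_j$ that is \emph{not} on $\partial\Delta$. Concretely, take two disjoint open disks inside a larger disk that touch at a single point of $G$; this is a perfectly valid laminar $2$-packing, the two inner disks are siblings, and their boundaries share a node. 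The same phenomenon occurs for incomparable disks at the same depth with different parents, or even for the maximal disks themselves. Your own uneasiness (``getting $k/3$ would be better than the claimed $k/18$'') is the correct diagnosis: the argument overshoots because the disjointness step is wrong, not because the paper's argument is merely coarser. The final paragraph, where you try to re-lose a factor $6$ via Theorem~\ref{EPplanar}, is then patching a symptom rather than the cause, and it is not clear what that reduction would even accomplish once you (believe you) already have node-disjoint odd cycles.

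The paper's proof handles exactly the obstacle you stumbled on. It splits into two cases according to the number $t$ of \emph{maximal} disks. If $t\leqslant k/3$, it works inside each maximal disk $\Delta_i$: the union $G_i$ of the boundaries of the strictly-contained disks is planar, the $2$-packing condition forces any odd cycle transversal of $G_i$ to have size at least $(k_i-1)/2$, and Theorem~\ref{EPplanar} then yields an odd cycle packing in $G_i$ of size at least $(k_i-1)/12$; summing gives $(k-t)/12\geqslant k/18$. If $t\geqslant k/3$, the maximal disks are pairwise disjoint open disks in $\mathbb{P}$, so the auxiliary graph $H$ recording which pairs of their boundaries share a node embeds in the projective plane; the Heawood bound gives $\chi(H)\leqslant 6$, hence a stable set of size $t/6\geqslant k/18$, i.e.\ that many pairwise node-disjoint $2$-sided odd boundary cycles. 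Note how the second case is precisely a colouring argument to separate boundaries that touch---but with $6$ colours coming from projective planarity, not $3$ colours coming from depth.
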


\begin{proof}
Let $\Delta_1$, \ldots, $\Delta_t$ denote the maximal elements of $\mathcal{F}$. For each $i \in [t]$, let $k_i$ denote the number of elements in the subfamily $\mathcal{F}_i := \{\Delta \in \mathcal{F} \mid \Delta \subseteq \Delta_i\}$. Thus $k = k_1 + \cdots + k_t$. For each $i \in [t]$, let $G_i$ denote the union of the cycles $\partial \Delta$ with $\Delta \in \mathcal{F}_i$ and $\Delta \neq \Delta_i$. Notice that $G_i$ is a planar graph.

Within each $G_i$, the odd cycle transversal number is at least $(k_i - 1)/2$ and thus the odd cycle packing number at least $(k_i - 1)/12$. Pick a maximum odd cycle packing inside each $G_i$. The union of these packings is an odd cycle packing in $G$ of size a least
$$
\sum_{i=1}^t \frac{k_i - 1}{12}
= \frac{k-t}{12}\,.
$$
If $t \leqslant k/3$ then this packing has size at least $k/18$ and we are done.

From now on, assume $t \geqslant k/3$. Let $H$ be the auxiliary graph that has one node $v_i$ for each $\Delta_i$, $i \in [t]$ such that $v_i$ and $v_j$ are adjacent if $\partial \Delta_i$ and $\partial \Delta_j$ have a common node. Since $\Delta_1$, \ldots, $\Delta_t$ are disjoint open disks, $H$ can be embedded in the projective plane. By the Heawood bound, $\chi(H) \leqslant 6$. Therefore, $H$ has a stable set of size at least $t/6$. This stable set yields a packing of $2$-sided odd cycles of size at least $t/6 \geqslant k/18$.
\end{proof}

We are now in position to prove Theorem~\ref{EPodd2cycles} for the projective plane.

\begin{thm} \label{EPprojectiveplane}
For every graph $G$ embedded on the projective plane, there exists a set $X \subseteq V(G)$ such that $G-X$ does not contain a $2$-sided odd cycle, and a collection $\mathcal{C}$ of node-disjoint $2$-sided odd cycles such that $|X| \leqslant 114 |\mathcal{C}|$.
\end{thm}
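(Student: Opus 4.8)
The plan is to assemble Lemmas~\ref{EP2packing}, \ref{EPminimalwalks}, \ref{EPuncrossing}, and~\ref{EPlaminar} into a single chain and then tidy up the constants. If $G$ has no $2$-sided odd cycle we are immediately done by taking $X=\emptyset$ and $\mathcal{C}=\emptyset$, so assume $G$ contains a $2$-sided odd cycle $C_0$. First I would apply Lemma~\ref{EP2packing} to obtain a set $X\subseteq V(G)$ and a $2$-packing $\mathcal{W}$ of $2$-sided odd closed walks such that $G-X$ contains no $2$-sided odd cycle and $|X|\le 6|\mathcal{W}|$; write $k:=|\mathcal{W}|$. By Lemma~\ref{EPminimalwalks}, every $W\in\mathcal{W}$ contains, as a sub-closed-walk (an Eulerian submultigraph on a subset of its edges), a $2$-sided odd cycle or an odd bicycle. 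Replacing each $W$ by such a sub-closed-walk can only decrease the degree of each node in the associated Eulerian multigraph, hence can only decrease each node multiplicity, so the family remains a $2$-packing; it now consists of $k$ walks each of which is a $2$-sided odd cycle or an odd bicycle.

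Next I would promote each of these walks to a painted walk. In the projective plane a $2$-sided simple closed curve is contractible and bounds a unique open disk (its other side being a M\"obius strip), and a bicycle bounds exactly two open disks; choosing one such disk for each walk yields a $2$-packing of $k$ painted walks, each a $2$-sided odd cycle or an odd bicycle. This is precisely the hypothesis of Lemma~\ref{EPuncrossing}, which then gives a laminar $2$-packing $\mathcal{W}'$ of $k-1$ painted walks, each a $2$-sided odd cycle. The disks occurring in $\mathcal{W}'$ form a laminar family $\mathcal{F}$ of $k-1$ open disks, each bounded by a $2$-sided odd cycle of $G$, whose boundary cycles form a $2$-packing; so Lemma~\ref{EPlaminar} produces a collection of node-disjoint $2$-sided odd cycles of size at least $(k-1)/18$.

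Finally, let $\mathcal{C}$ be this collection if it is nonempty, and $\mathcal{C}:=\{C_0\}$ otherwise, so that $|\mathcal{C}|\ge\max\{1,(k-1)/18\}$ in all cases. It then remains to verify $|X|\le 6k\le 114|\mathcal{C}|$: if $k\le 19$ then $6k\le 114\le 114|\mathcal{C}|$, and if $k\ge 19$ then $114\,(k-1)/18=\frac{19}{3}(k-1)\ge 6k$ because $19(k-1)\ge 18k$, whence $114|\mathcal{C}|\ge 114\,(k-1)/18\ge 6k$. Since $G-X$ has no $2$-sided odd cycle and $\mathcal{C}$ is a collection of node-disjoint $2$-sided odd cycles with $|X|\le 114|\mathcal{C}|$, this completes the proof. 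The only points requiring care are the bookkeeping of the $2$-packing condition through the two reduction steps—passing to a cycle/bicycle subwalk and then attaching a disk, where one uses that node multiplicity equals half the degree in the Eulerian multigraph and so never increases—and the final choice of the constant $114$ together with the boundary case $k\le 1$; all the genuine combinatorial content is already carried by the four lemmas, so I do not expect a real obstacle here.
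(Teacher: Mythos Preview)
Your proof is correct and follows essentially the same route as the paper: apply Lemma~\ref{EP2packing}, refine each walk via Lemma~\ref{EPminimalwalks}, paint and uncross via Lemma~\ref{EPuncrossing}, then extract a genuine packing via Lemma~\ref{EPlaminar}, handling the boundary case by substituting a single $2$-sided odd cycle when the extracted packing is empty. Your write-up is in fact more explicit than the paper's on two points the paper leaves tacit---the preservation of the $2$-packing condition when passing to cycle/bicycle subwalks, and the case split $k\le 19$ versus $k\ge 19$ in the final arithmetic---so there is no gap.
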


\begin{proof}
By Lemmas~\ref{EP2packing} and \ref{EPuncrossing}, there exists a set $X \subseteq V(G)$ meeting every $2$-sided odd cycle of $G$ and a laminar $2$-packing $\mathcal{W}'$ of painted $2$-sided odd cycles such that $|X| \leqslant 6 |\mathcal{W}'| + 6$.

By Lemma~\ref{EPlaminar}, we may extract from $\mathcal{W}'$ a packing $\mathcal{C}$ of $2$-sided odd cycles such that $|\mathcal{C}| \geqslant (|\mathcal{W}'| - 1)/18$. Since we may clearly assume that $G$ has a $2$-sided odd cycle (otherwise taking $X := \emptyset$ would make the statement true), we may also assume that $|\mathcal{C}| \geqslant 1$.

For this transversal $X$ and packing $\mathcal{C}$ of $2$-sided odd cycles, we get
$$
|X| \leqslant 6 |\mathcal{W}'| + 6 \leqslant 108 |\mathcal{C}| + 6 \leqslant 114 |\mathcal{C}|\,. \qedhere
$$
\end{proof}

Consider a graph $G$ embedded in a surface $\surf$. A cycle $C$ of $G$ is said to be \emph{noncontractible} if $C$ is noncontractible (as a closed curve) in $\surf$. Also, $C$ is called {\em surface separating} if $C$ separates $\surf$ in two connected pieces. If $\surf$ is a surface other than the sphere, then every contractible cycle $C$ of $G$ bounds exactly one closed disc in $\surf$.

Given a cycle $C$ of a graph, a \emph{$C$-ear} (or simply \emph{ear} when $C$ is clear from
the context) is a path $P$ such that the two ends of $P$ are in $C$ but
no internal node nor any edge of $P$ is in $C$.
If $C$ has even length and $C \cup P$ contains an odd cycle, we say that the ear $P$ is \emph{parity breaking}.

\begin{lem}
\label{lem:ears}
If an odd cycle $C'$ has at least two nodes in common with an even cycle $C$, then
$C'$ contains a parity-breaking $C$-ear. 
\end{lem}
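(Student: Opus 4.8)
The plan is to argue by contradiction: suppose no $C$-ear contained in $C'$ is parity-breaking, and deduce that $C'$ is even. Since $C$ is an even cycle it is bipartite, so I would fix a proper $2$-colouring $\phi\colon V(C)\to\{0,1\}$; then $\phi(x)\ne\phi(y)$ for every edge $xy\in E(C)$, and consequently, for any two nodes $x,y$ of $C$, the parity of $\phi(x)+\phi(y)$ equals the parity of the length of either arc of $C$ between $x$ and $y$ (the two arc lengths sum to the even number $|C|$, hence agree mod $2$).

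Next I would decompose $C'$ along its intersection with $C$. Let $u_1,\dots,u_m$ be the nodes of $V(C)\cap V(C')$ listed in the cyclic order in which $C'$ visits them; by hypothesis $m\ge 2$. These nodes split $C'$ into arcs $Q_1,\dots,Q_m$, where $Q_i$ runs from $u_i$ to $u_{i+1}$ (indices mod $m$) and has no internal node on $C$. The first claim is that each $Q_i$ is either a single edge of $C$ or a $C$-ear: any edge of $Q_i$ lying in $E(C)$ has both endpoints in $V(C)$, which forces that edge to join $u_i$ and $u_{i+1}$ and hence to be all of $Q_i$; so if $Q_i$ is not a single edge of $C$, then it shares no edge with $C$ and, having no internal node on $C$, it is a $C$-ear.

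The heart of the argument is to show that in all cases $|Q_i|\equiv\phi(u_i)+\phi(u_{i+1})\pmod 2$. If $Q_i$ is a single edge of $C$, this is immediate from the fact that $\phi$ properly colours that edge. If $Q_i$ is a $C$-ear, then $C\cup Q_i$ is a theta graph whose three cycles are $C$ itself together with the cycles $A\cup Q_i$ and $B\cup Q_i$, where $A,B$ are the two arcs of $C$ between $u_i$ and $u_{i+1}$. Since $|A|+|B|=|C|$ is even, $A\cup Q_i$ and $B\cup Q_i$ have the same parity, so $C\cup Q_i$ contains an odd cycle precisely when $|Q_i|\not\equiv|A|\pmod 2$. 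As $Q_i$ is assumed not to be parity-breaking, we get $|Q_i|\equiv|A|\equiv\phi(u_i)+\phi(u_{i+1})\pmod 2$. Summing over $i$,
\[
|C'| \;=\; \sum_{i=1}^{m}|Q_i| \;\equiv\; \sum_{i=1}^{m}\big(\phi(u_i)+\phi(u_{i+1})\big) \;=\; 2\sum_{i=1}^{m}\phi(u_i) \;\equiv\; 0 \pmod 2,
\]
contradicting that $C'$ is odd. Finally I would observe that this contradiction genuinely produces an ear: not all $Q_i$ can be single edges of $C$, for otherwise $E(C')\subseteq E(C)$ and hence $C'=C$, again contradicting the parities; so some $Q_i$ is a $C$-ear contained in $C'$, and the argument shows that one such ear must be parity-breaking.

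I expect the only delicate point to be the case analysis on the arcs $Q_i$ — verifying that an arc between two consecutive common nodes is always either a shared edge or a bona fide $C$-ear, and that the theta-graph parity bookkeeping is carried out correctly; the rest is routine.
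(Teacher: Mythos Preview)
Your proof is correct and follows essentially the same approach as the paper's: decompose $C'$ into the arcs $Q_i$ between consecutive nodes of $V(C)\cap V(C')$ (these are exactly the paper's ears $P_{uv}$, just indexed differently), and argue that if none were parity-breaking then $C'$ would be even. The only difference is cosmetic: the paper phrases the contradiction as ``$C\cup\bigcup_i Q_i = C\cup C'$ is non-bipartite, so not every $C\cup Q_i$ can be bipartite,'' whereas you make the same computation explicit via a $2$-colouring of $C$ and a telescoping parity sum.
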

\begin{proof}
Let $G$ be the graph consisting of the union of $C$ and $C'$.
Let $uv \in E(C') \setminus E(C)$.
Let $u'$ be the node from $C$ that is closest to $u$ on the path $C' - v$
(possibly $u'=u$).
Similarly, let $v'$ be the node from $C$ that is closest to $v$ on the path $C' - u$.
Then $u' \neq v'$, since otherwise $V(C) \cap V(C') = \{u'\}$, contradicting
the fact that $C$ and $C'$ have at least two nodes in common. 
Thus the path $P_{uv}$ from $u'$ to $v'$ in $C'$ that includes the edge $uv$ is a $C$-ear.

Since taking the union of $C$ with the paths $P_{uv}$ for all edges $uv \in E(C') \setminus E(C)$
gives the non-bipartite graph $G$, it cannot be that $C \cup P_{uv}$ is bipartite for
each $uv \in E(C') \setminus E(C)$. Hence there exists $uv \in E(C') \setminus E(C)$ such that
$P_{uv}$ is a parity-breaking $C$-ear.
\end{proof}

We will need the following two theorems about toroidal graphs. 

\begin{thm}[Schrijver~\cite{S93}]
\label{thm:Schrijver}
Every graph embedded in the torus with facewidth $t$ contains $\lfloor 3t/4 \rfloor$ node-disjoint noncontactible cycles.
\end{thm}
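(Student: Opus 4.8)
The plan is to follow Schrijver's universal-cover and geometry-of-numbers argument, whose main steps I sketch here. \emph{First, I would reduce to a single homotopy class.} On the torus, two noncontractible simple closed curves lying in distinct free homotopy classes have nonzero algebraic intersection number and therefore must intersect; moreover every noncontractible simple closed curve is freely homotopic to a primitive element of the first homology group $\Z^2$, uniquely up to sign. Since the cycles in a family of pairwise vertex-disjoint noncontractible cycles of $G$ are pairwise disjoint simple closed curves, none contractible, they are all freely homotopic to one common primitive class $c$. So it suffices to produce, for some cleverly chosen primitive $c$, a family of $\lfloor 3t/4 \rfloor$ pairwise vertex-disjoint cycles of $G$ of class $c$; such a family is automatically a family of vertex-disjoint noncontractible cycles.

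\emph{Second, a min--max for a fixed class.} I would lift $G$ to a $\Z^2$-periodic graph in the universal cover $\R^2$ of the torus, so that the faces of $G$ lift to a $\Z^2$-periodic cellular decomposition of the plane. Fix a primitive class $c$ and let $\kappa(c)$ be the maximum number of pairwise vertex-disjoint cycles of $G$ of class $c$. The easy bound is $\kappa(c) \le \ell(c')$ for every primitive $c'$ with $\det(c,c') = \pm 1$, where $\ell(c')$ is the least length of a $G$-normal noose of class $c'$: such a noose has algebraic intersection $\pm 1$ with each cycle of class $c$, hence passes through at least one vertex of each cycle in a maximum family, and these vertices are distinct. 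For the reverse inequality I would cut the torus along a shortest cycle of class $c$ to obtain a graph drawn in a cylinder whose two boundary faces are the two copies of the cut, and then run a Menger/planar-duality argument on this cylinder to build back the disjoint essential cycles; this gives $\kappa(c) \ge h(c)$, where $h(c)$ is, up to rounding and a controlled loss, the minimum transverse noose length $\min\{\ell(c') : \det(c,c') = \pm 1\}$.

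\emph{Third, choose the class by geometry of numbers.} The map $c \mapsto \ell(c)$ on primitive classes is, up to bounded multiplicative distortion, the restriction to $\Z^2$ of a symmetric norm on $\R^2$ (a ``face metric''), under which $t = \min_{0 \ne c'' \in \Z^2} \ell(c'')$ is essentially the first lattice minimum and the quantities $h(c)$ are essentially lattice widths. A two-dimensional lattice lemma of Minkowski type---sharp for the hexagonal lattice with a regular hexagon as unit ball---then yields a primitive class $c$ with $h(c) \ge \tfrac{3}{4} t$, and, tracked carefully through the rounding, $\kappa(c) \ge \lfloor 3t/4 \rfloor$. Together with the first two steps this proves the theorem, and it matches toroidal triangulations whose face metric approximates the hexagonal norm, which is why the constant $\tfrac{3}{4}$ cannot be improved.

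The two hard points are as follows. Upgrading the cylinder argument in the second step from edge-disjoint cycles (where planar duality with Menger is immediate) to \emph{vertex}-disjoint cycles, and doing so with a bound sharp enough not to damage the final constant, needs care with cycles meeting in vertices and with the topology of the cut surface. More substantially, the lattice lemma in the third step must output exactly the constant $\tfrac{3}{4}$ rather than a weaker universal constant, and must be compatible with the floor; establishing this sharp two-dimensional statement---and reconciling the ``face metric'' with a genuine norm---is the technical core, and is what is carried out in~\cite{S93}.
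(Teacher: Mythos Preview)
The paper does not prove this statement at all: it is quoted verbatim as a result of Schrijver~\cite{S93} and used as a black box in the proof of Theorem~\ref{th:torus_node}. There is therefore no ``paper's own proof'' to compare your proposal against.

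That said, what you have sketched is indeed the outline of Schrijver's original argument in~\cite{S93}: lift to the universal cover, establish a min--max relation between disjoint cycles in a fixed primitive homology class and transverse noose lengths, and then invoke a sharp two-dimensional lattice inequality to select the class achieving the constant $3/4$. Your identification of the two delicate points (the vertex-disjoint upgrade of the cylinder Menger argument, and the exact constant in the lattice lemma) is accurate. For the purposes of this paper, however, none of this is needed; you may simply cite~\cite{S93}.
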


For an integer $t \geq 3$, the graph $C_{t} \times C_{t}$ denotes the $t\times t$ toroidal grid, namely
the graph with node set $\{(i,j): 0 \leq i,j \leq k-1\}$ and where $(i,j)$ is adjacent to
$(i'j')$ if and only if $i=i'$ and $j \equiv j' \pm 1  \pmod{t}$,
or $j=j'$ and $i \equiv i' \pm 1 \pmod t$.

\begin{thm}[{d}e~Graaf and Schrijver~\cite{dGS}]
\label{thm:de_Graad-Schrijver}
For every integer $t \geq 3$, every graph embedded in the torus with facewidth at least $\frac{3}{2}t$ has
 $C_{t} \times C_{t}$ as a surface minor. 
\end{thm}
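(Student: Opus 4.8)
\textbf{Plan for proving Theorem~\ref{thm:de_Graad-Schrijver}.} We are to show that for every integer $t \geq 3$, any graph $G$ embedded in the torus with facewidth at least $\tfrac{3}{2}t$ contains the toroidal grid $C_t \times C_t$ as a surface minor. The proof has three main ingredients: first extract many disjoint noncontractible cycles in two ``independent'' directions, then build a grid-like pattern of crossings among them, and finally perform contractions to realize $C_t \times C_t$ exactly as a surface minor. I would proceed as follows.

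First I would use Theorem~\ref{thm:Schrijver} as a starting point: high facewidth gives many pairwise node-disjoint noncontractible cycles, say a collection $\mathcal{C}$ of $\lfloor 3 f w(G)/4 \rfloor$ such cycles. On the torus, any two disjoint noncontractible cycles that are not freely homotopic must cross an even number of times algebraically; after passing to a large subfamily I can assume all cycles in $\mathcal{C}$ are freely homotopic to a single primitive class, call it the ``horizontal'' class. These horizontal cycles cut the torus into $|\mathcal{C}|$ disjoint cylinders (annuli). Next I would find a ``vertical'' family: since the facewidth is large, there is a noncontractible cycle $D$ that must cross every member of $\mathcal{C}$ (facewidth lower-bounds how often a noncontractible curve meets any member of a homotopically nontrivial family), and by a standard argument one can then produce a second large family of pairwise disjoint cycles all freely homotopic to the class of $D$, with the two classes generating $H_1(\text{torus})$ — this is where I need the quantitative constant $\tfrac32 t$ rather than just ``large''. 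Concretely, roughly half of the facewidth budget is spent producing $t$ disjoint horizontal cycles and the other half on $t$ disjoint vertical cycles, and the facewidth bound guarantees each vertical cycle meets each horizontal cycle.

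The second step is to arrange these two families into a grid pattern. Given $t$ disjoint horizontal cycles $H_0,\dots,H_{t-1}$ and $t$ disjoint vertical cycles $V_0,\dots,V_{t-1}$, the cyclic order in which the $V_j$ cross a fixed $H_i$ is the same (up to a global rotation) for every $i$, because the $V_j$ are disjoint and each is homotopic to a common class; similarly the cyclic order in which the $H_i$ meet each $V_j$ is uniform. Hence, after relabelling, $V_j$ meets the $H_i$ in the cyclic order $H_0,H_1,\dots,H_{t-1}$ and vice versa. This yields a subdivision of the torus by a grid-like subgraph $K = (\bigcup H_i) \cup (\bigcup V_j)$ of $G$: the $H_i$'s and $V_j$'s together have $t^2$ ``cells'', each a closed disk, and the crossing structure is exactly that of $C_t \times C_t$ (with each edge possibly subdivided and with extra pendant/tangential pieces of $V_j$ and $H_i$ that occur between consecutive crossings). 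The routine but slightly fiddly point is handling the pieces of $H_i$ between two consecutive crossing points with the verticals — these are simply arcs internal to $K$, and can be suppressed/contracted without destroying planarity of each cell or the global embedding.

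The third step is to produce the surface minor. A surface minor is obtained by deleting edges and contracting edges of $G$ while keeping track of the embedding; contracting a connected subgraph to a point is permissible. For each $i,j$, the intersection of $H_i$'s arc and $V_j$'s arc around the $(i,j)$ ``grid node'' is a connected subgraph of $K$, and I contract it to a single node $x_{ij}$; the arc of $H_i$ between the $(i,j)$ and $(i,j+1)$ regions contracts to a single edge $x_{ij}x_{i(j+1)}$, and similarly vertically, while all other edges of $G$ are deleted. The result is precisely $C_t \times C_t$, embedded in the torus, and since we never used edges outside $K \subseteq G$, it is a surface minor of $G$. The main obstacle, and the part that requires the most care, is the second half of the first step: proving that high facewidth forces \emph{two} large disjoint cycle families in \emph{complementary} homology classes (not just many parallel cycles in one direction), with explicit constants matching $\tfrac32 t$. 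This typically goes via a minimal-crossing / uncrossing argument on a shortest noncontractible cycle in each of the two primitive classes, combined with the fact that on the torus the facewidth equals the minimum over primitive classes of the length of a shortest noncontractible cycle in that class times an intersection-number factor; pinning down that the constant is exactly $\tfrac32$ (as opposed to some larger absolute constant) is the delicate bookkeeping, and I would expect to follow de~Graaf and Schrijver's original counting closely there.
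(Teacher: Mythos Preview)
The paper does not prove Theorem~\ref{thm:de_Graad-Schrijver}; it is quoted from de~Graaf and Schrijver~\cite{dGS} and used as a black box inside the proof of Theorem~\ref{th:torus_node}. So there is no ``paper's own proof'' to compare your proposal against.

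As for the proposal itself, it is a plausible high-level outline but not a proof. The crucial step --- obtaining $t$ disjoint cycles in one primitive homology class \emph{and} $t$ disjoint cycles in a complementary class from facewidth $\tfrac{3}{2}t$ --- is precisely the content of the de~Graaf--Schrijver theorem, and you do not actually carry it out: you say ``roughly half of the facewidth budget is spent'' on each family and that ``pinning down that the constant is exactly $\tfrac{3}{2}$ \ldots\ I would expect to follow de~Graaf and Schrijver's original counting closely.'' That is a deferral, not an argument. Schrijver's Theorem~\ref{thm:Schrijver} gives many disjoint noncontractible cycles, but all in a \emph{single} homotopy class; producing a second transversal family of size $t$ while keeping the first family of size $t$, with the sharp constant, requires the specific min-max / LP-duality machinery in~\cite{dGS} (relating facewidth to fractional packings of noncontractible cycles in two directions), and your ``minimal-crossing / uncrossing'' sketch does not supply this. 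Once two such families are in hand, the grid-building and contraction steps you describe are standard and essentially correct, though the claim that each $H_i \cap V_j$ is connected (needed for the contraction to a single vertex) is not automatic and also relies on the careful choice of the two families.
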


In the above theorem, a {\em surface minor} of a graph embedded in a surface is any embedded graph obtained by applying edge deletion, edge contraction, and node deletion operations directly on the embedding. 

Kawarabayashi and Nakamoto~\cite{KN_DM} proved a bound of $O(k)$ on the minimum number of {\em edges} to remove from a toroidal graph having no $k+1$ edge-disjoint odd cycles to make it bipartite. 
For our purposes, we need a `node' version of their result, which fortunately can be obtained by an easy modification of their proof. 

\begin{thm}
\label{th:torus_node}
Every toroidal graph $G$ having no $k+1$ node-disjoint odd cycles has a node subset $X$ of size $|X|\leq 38k + 7$ meeting all odd cycles of $G$. 
\end{thm}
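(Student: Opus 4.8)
\textbf{Proof proposal for Theorem~\ref{th:torus_node}.}
The plan is to follow the structure of the Kawarabayashi--Nakamoto argument for the edge-version~\cite{KN_DM}, replacing ``edge-disjoint'' by ``node-disjoint'' and ``delete edges'' by ``delete nodes'' throughout, and to carry out a case analysis based on the facewidth $t$ of the toroidal embedding of $G$. We may assume $G$ is connected and non-bipartite, and (after the usual preprocessing) that the embedding is cellular. First I would split into two regimes according to a threshold for $t$ (a constant multiple of $k$, chosen to make the final arithmetic work out).

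\emph{Case 1: large facewidth.} Suppose $t$ is large, say $t \ge c_1 k$ for a suitable constant $c_1$. Here the strategy is to produce $k+1$ node-disjoint odd cycles directly, contradicting the hypothesis. By Theorem~\ref{thm:de_Graad-Schrijver}, if $t \ge \tfrac32 t'$ then $G$ has $C_{t'} \times C_{t'}$ as a surface minor. In a large toroidal grid one can easily find many node-disjoint noncontractible cycles, all freely homotopic to one of the two ``grid directions''. Since $G$ is non-bipartite and its embedding is cellular, at least one homotopy class of noncontractible cycles must contain an odd cycle; moreover, \emph{every} sufficiently long cycle in that class, together with a fixed parity-breaking ear supplied by Lemma~\ref{lem:ears}, yields an odd cycle, and disjoint parallel grid cycles give disjoint such odd cycles. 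Being careful to reserve enough of the grid for one parity-breaking connection and then extract $k+1$ pairwise node-disjoint odd cycles from the remaining parallel layers, we obtain the contradiction. (Alternatively one can invoke Theorem~\ref{thm:Schrijver} to get $\lfloor 3t/4\rfloor$ node-disjoint noncontractible cycles and then argue about their parities and homotopy types via Lemma~\ref{lem:disjointhomology}-type reasoning; either route works, and I would pick whichever gives the cleaner constant.)

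\emph{Case 2: bounded facewidth.} Suppose $t < c_1 k$. Then there is a noose of length less than $c_1 k$; cutting the torus along this noose (equivalently, deleting the $<c_1 k$ nodes it meets) turns $G$ into a graph embedded in the cylinder/plane, hence a planar graph $G'$. Now $G'$ has no $k+1$ node-disjoint odd cycles, so by the planar Erd\H{o}s--P\'osa bound of Kr\'al, Sereni and Stacho (Theorem~\ref{EPplanar}) it has an odd cycle transversal of size at most $6k$. Adding back the deleted noose nodes gives a transversal of $G$ of size at most $6k + c_1 k$. Optimizing the threshold $c_1$ and tracking constants through both cases should yield the stated bound $|X| \le 38k+7$; the additive constant $+7$ presumably comes from a small facewidth base case (facewidth $\le 2$, or the few sporadic minimal cases) that has to be handled by hand, exactly as in~\cite{KN_DM}.

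The main obstacle I anticipate is the bookkeeping in Case~1: one must guarantee that after designating part of the grid structure to carry a single parity-breaking ear (so that noncontractible cycles of the ``even'' class can be promoted to odd closed walks, and then to odd cycles via Lemma~\ref{lem:ears}), enough genuinely node-disjoint odd \emph{cycles} (not merely closed walks) remain to reach the count $k+1$, and to do this with a facewidth threshold linear in $k$ so the constants stay small. A secondary subtlety is verifying that ``cutting along a shortest noose'' interacts correctly with parities — i.e.\ that odd cycles of $G$ not meeting the noose survive as odd cycles of the planar graph $G'$ — but this is straightforward since deleting nodes cannot change the parity of a surviving cycle. I would expect the bulk of the work (and the reason this is stated as a separate theorem rather than a one-line corollary) to be matching the explicit constant $38k+7$, which requires reproducing the quantitative steps of~\cite{KN_DM} rather than just their qualitative outline.
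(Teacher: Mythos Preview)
Your small-facewidth case is correct and matches the paper's Case~1 almost verbatim: cut along a shortest noose, land in a planar graph, apply Theorem~\ref{EPplanar}, add back the noose nodes. The threshold the paper uses is $t \le 32k+7$, and the additive $+7$ is an artefact of that threshold (not of any sporadic base case).

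The large-facewidth case, however, has a genuine gap. You propose to derive a contradiction by exhibiting $k+1$ node-disjoint odd cycles whenever $t$ is large, but large facewidth together with non-bipartiteness does \emph{not} force many disjoint odd cycles. Take $C_n \times C_n$ with $n$ even (a bipartite toroidal grid of facewidth $n$) and add a single chord inside one face. The resulting graph is non-bipartite with facewidth $\Theta(n)$, yet deleting the two endpoints of the chord kills every odd cycle, so $\ocp(G)\le 2$ regardless of $n$. Your ``reserve one parity-breaking ear and reuse it for parallel cycles'' idea fails here because a single ear can be attached to at most one cycle if the resulting odd cycles are to be node-disjoint; there is no supply of $k+1$ disjoint such ears. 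More generally, parallel noncontractible cycles in the same homotopy class need \emph{not} share a parity unless the embedding is even (all facial cycles even), and you have no control over that.

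What the paper does instead in Case~2 is still to \emph{construct a transversal}, not a packing. Using Schrijver's theorem it finds $24k+6$ disjoint homotopic noncontractible cycles $C_1,\dots,C_{24k+6}$; deleting $V(C_1)$ or $V(C_{12k+3})$ makes $G$ planar, so Theorem~\ref{EPplanar} gives sets $X_1,X_2$ of size $\le 6k$. If $G-(X_1\cup X_2)$ is planar a third application of Theorem~\ref{EPplanar} finishes. Otherwise one isolates the unique non-planar block $B$, observes (via the two planar reductions) that every facial cycle of $B$ is even, and only \emph{then} uses the $C_{k+1}\times C_{k+1}$ minor from Theorem~\ref{thm:de_Graad-Schrijver}: because the embedding of $B$ is even, homotopic cycles have equal parity, so $k+1$ disjoint horizontal cycles and $k+1$ disjoint vertical cycles must all be even, whence $B$ is bipartite. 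The preprocessing step that forces an even embedding of $B$ is exactly what your sketch is missing.
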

\begin{proof}
Let $G$ be a graph embedded in the torus with no $k+1$ node-disjoint odd cycles. 
Let $t$ denote the facewidth of the embedding. 

{\bf Case 1: $t \leq 32k + 7$.} 
By definition of $t$, there is a noncontractible simple closed curve $\gamma$ in the torus that intersects $G$ in a set $Z$ of exactly $t$ nodes. 
Furthermore, the graph $G-Z$ is planar. 
By Theorem~\ref{EPplanar}, there is a subset $Y$ of nodes of $G-Z$ of size at most $6k$ meeting all odd cycles of $G-Z$. 
Then $X:=Y \cup Z$ meets all odd cycles of $G$ and has size 
\[
|X| \leq 6k + 32k + 7 = 38k + 7, 
\]
as desired. 

{\bf Case 2: $t \geq 32k + 8$.}
Since $\lfloor 3t/4 \rfloor \geq 24k + 6$, 
by Theorem~\ref{thm:Schrijver} the graph $G$ has $24k + 6$ node-disjoint noncontractible cycles. 
These cycles are necessarily pairwise homotopic on the torus, thus they define a cylinder $D$.  
Denote the cycles by $C_{1}, C_{2}, \dots, C_{24k + 6}$, in
an order consistent with $D$ (thus $C_1$ and $C_{24k + 6}$ are on the boundary of the cylinder).

The graphs $H_{1}:=G - V(C_{1})$ and $H_{2}:=G - V(C_{12k + 3})$ are both planar,
hence they each can be made bipartite by removing at most $6k$ nodes by Theorem~\ref{EPplanar}. 
Let thus $X_{i}\subseteq V(H_{i})$ ($i=1,2$) be such that $H_{i} - X_{i}$ is bipartite and $|X_{i}| \leq 6k$.

Let $G':= G - (X_{1} \cup X_{2})$. 
If $G'$ is planar then it suffices to take a subset $Y$ of at most $6k$ nodes of $G'$ meeting all its odd cycles, using Theorem~\ref{EPplanar}. 
Then, $X:=X_{1} \cup X_{2} \cup Y$ meets all odd cycles of $G$ and has size at most $18k$, as desired. 

Thus we may assume that $G'$ is not planar. 
Recalling that Euler genus is additive on $0$- and $1$-sums, it follows that there is a unique block $B$ of $G'$ which is not planar. 
Let $G_0$ be the subgraph of $G'$ obtained by taking the union of all blocks of $G'$ distinct from $B$. 
Note that $G_0$ is planar. 
(Also, note that $G_0$ is not necessarily connected, though this is not important for our purposes.) 
Let $Y$ be a subset at most $6k$ nodes of $G_0$ meeting all odd cycles of $G_0$, using Theorem~\ref{EPplanar} once more, and let $X:=X_{1} \cup X_{2} \cup Y$. 
Since $|X| \leq 18k$, it suffices to show that $X$ meets all odd cycles of $G$. 
Clearly, if there is any odd cycle left in $G-X$, it must be contained in the toroidal block $B$ of $G'$. 
Hence, in order to finish the proof, it only remains to show that $B$ is bipartite. 

Let us consider the toroidal embedding of $B$ induced by that of $G$. 
Before showing that $B$ is bipartite, we remark that each face of $B$ is bounded by a cycle of $B$, since $B$ is $2$-connected and cellularly embedded with facewidth at least $t - |X_{1} \cup X_{2}| \geq 2$. 
Furthermore, the embedding of $B$ must be {\em even}, meaning that all these facial cycles are even, for the following reason. 
Let $C$ be a facial cycle of $B$. 
Since $|X_{1} \cup X_{2}| \leq 12k$ there are indices $i, j$ with $1 < i < 12k + 3 < j < 24k + 6$ 
such that $X_{1} \cup X_{2}$ avoids both $C_i$ and $C_j$. 
Consider the two cylinders on the torus bounded by $C_i$ and $C_j$. 
The facial cycle $C$ must be contained in one of these two cylinders. 
However, in the graph $G$, one of the two cycles $C_1, C_{12k + 3}$ is drawn completely outside that cylinder, implying that $C$ is a subgraph of $H_i-X_i$ for some $i\in\{1,2\}$. 
Hence, $C$ is even, as claimed. 

The fact that the embedding of $B$ is even implies the following property: 
\[
\textrm{(*) Every two cycles in $B$ in the same $\Z_2$-homology class have the same parity.}
\]
To see this, recall that two cycles $C_1, C_2$ in $B$ are in the same $\Z_2$-homology class if and only if $C_2=C_1+\sum_{F \in \mathcal{F}} F$, where $\mathcal F$ is a set of facial cycles and computations are performed in $\Z_2^{E(G)}$ (for the sake of simplicity, we identify cycles with their characteristic vectors). Since all cycles in $\mathcal F$ are even, the claim follows.  

Next, we show that $B$ must be bipartite.  
Let $t'$ denote the facewidth of the embedding of $B$. 
Note that $t' \geq t - |X_{1} \cup X_{2}| \geq \frac{3}{2}(k+1)$. 
Hence, $C_{k+1} \times C_{k+1}$ is a surface minor of $B$, by Theorem~\ref{thm:de_Graad-Schrijver}. 
In the drawing of $C_{k+1} \times C_{k+1}$, consider the set of $k+1$ node-disjoint noncontractible and pairwise homotopic cycles that are `horizontal', and the set of those that are `vertical'. 
For each such set, there is a corresponding set of $k+1$ node-disjoint homotopic noncontractible cycles in $B$. 
By property (*), it follows that these two sets of $k+1$ node-disjoint cycles of $B$ are all even, since $G$ has no $k+1$ node-disjoint odd cycles. 
In particular, $B$ contains two even cycles $C_1$ and $C_2$ that generate the $\Z_2$-homology group of the torus. Since the cycle space of $B$ is generated by $C_1, C_2$, and the facial cycles of $B$, and all these cycles are even, it follows that every cycle of $B$ is even.  
\end{proof}

We will use the following result of Brunet, Mohar, and Richter~\cite{BMR_JCTB} (see Corollary 7.3 and Corollary 8.1.1 of that paper).

\begin{thm}[Brunet, Mohar, and Richter~\cite{BMR_JCTB}]
\label{thm:BMR}
Let $G$ be a graph embedded in a surface $\surf$ with facewidth $t \geq 3$. 
Then the following are sufficient conditions for $G$ to contain $k$ node-disjoint surface separating cycles that are noncontractible and pairwise homotopic:
\begin{itemize}
    \item $\surf$ is orientable with Euler genus at least $4$ and $\frac{t - 9}{8} \geq k$;
    \item $\surf$ is nonorientable with Euler genus at least $2$ and $\frac{t - 1}{4} \geq k$.
\end{itemize}
\end{thm}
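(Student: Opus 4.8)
The plan is to combine the ``cylinder extraction'' phenomenon for embeddings of large facewidth with a surgery that converts noncontractible cycles into separating ones. First I would invoke a Schrijver-type theorem---Theorem~\ref{thm:Schrijver} for the torus and its higher-genus analogues---to obtain from facewidth at least $t$ a family of $\Omega(t)$ pairwise node-disjoint noncontractible cycles in $G$. Since, by Lemma~\ref{lem:MM92}, at most $\max\{1,3(g(\surf)-1)\}$ free-homotopy classes of simple closed curves admit pairwise-disjoint representatives, pigeonhole yields $N=\Omega(t/g)$ of these cycles $C_1,\dots,C_N$ lying in one class $\kappa$ (up to reversal); such curves are necessarily $2$-sided (a $1$-sided curve has no disjoint freely-homotopic companion, as its regular neighborhood is a M\"obius band), and by Epstein's theorem~\cite{Epstein66} consecutive pairs cobound annuli that glue into a single cylinder with boundary $C_1\cup C_N$. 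The task is thereby reduced to producing $\Omega(N)$ pairwise-disjoint, pairwise freely homotopic, noncontractible, \emph{separating} cycles out of $C_1,\dots,C_N$.

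Here I would split on the class $\kappa$. The $\Z_2$-homology class of a simple closed curve depends only on its free-homotopy class, and such a curve is separating precisely when it is null-homologous over $\Z_2$; hence either every simple representative of $\kappa$ is separating---in which case $C_1,\dots,C_N$ already work---or none is. In the latter case $\kappa$ is non-separating, so for each $j$ the complement $R_j$ of the thin annulus between $C_{2j-1}$ and $C_{2j}$ is connected, and I would band-sum $C_{2j-1}$ to $C_{2j}$ through $R_j$ along edges of $G$, obtaining a genuine subgraph cycle $D_j$. Then $[D_j]=[C_{2j-1}]+[C_{2j}]=\zerovec$ over $\Z_2$, so $D_j$ is separating; one side of $D_j$ is a once-punctured torus while the other side is $R_j$ cut along the band, and an Euler-characteristic count shows the latter is not a disk once $\surf$ has Euler genus above the stated threshold, so $D_j$ is noncontractible. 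Routing the bands of distinct pairs through pairwise-disjoint pieces keeps the $D_j$ node-disjoint and pairwise freely homotopic, so about $N/2$ survive; bounding the exact losses in these steps, separately for orientable and non-orientable $\surf$, is what produces the thresholds $\tfrac{t-9}{8}$ and $\tfrac{t-1}{4}$.

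The step I expect to be hardest is the non-orientable case. There the band-sum conversion can be forced to produce a contractible curve---on the Klein bottle, for instance, cutting along any $2$-sided non-separating curve leaves a cylinder, so the surgery bounds a disk---so the orientable argument does not transfer verbatim, and a genuinely separate analysis, locating disjoint homotopic representatives of a separating class directly for small non-orientable Euler genus, is needed. Keeping all the surgeries simultaneously node-disjoint \emph{and} drawn inside $G$ is the other technical crux, and this, together with the orientable/non-orientable split, is what pins down the differing facewidth hypotheses. Finally, the excluded cases---orientable Euler genus at most $3$ (the sphere and the torus) and non-orientable Euler genus at most $1$ (the projective plane)---are exactly the surfaces carrying no noncontractible separating simple closed curve, so a conclusion of this shape cannot hold there and they must simply be set aside.
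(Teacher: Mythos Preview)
The paper does not prove Theorem~\ref{thm:BMR}; it is quoted verbatim from Brunet, Mohar, and Richter~\cite{BMR_JCTB} (the paper points to Corollary~7.3 and Corollary~8.1.1 there) and used as a black box in the proof of Theorem~\ref{thm:EP-2sided}. There is therefore no argument in the present paper to compare your proposal against.

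As for the proposal on its own terms: the overall architecture---extract many disjoint noncontractible cycles from large facewidth, pigeonhole into a single homotopy class via Malni\v{c}--Mohar, then if that class is non-separating perform a pairing/surgery to manufacture separating cycles---is a plausible outline and is in the spirit of how such results are proved. The soft spots you yourself flag are real. The band-sum step is delicate: when $C_{2j-1}$ and $C_{2j}$ are freely homotopic and you connect them by an arc, the resulting curve is automatically $\Z_2$-null-homologous, but showing it is \emph{noncontractible} requires control over which side the band runs through and over the genus left outside the annulus, and this is exactly where the small-genus exclusions bite. Realising the bands as node-disjoint paths in $G$ (rather than just as curves in $\surf$) is a second genuine difficulty; this is where one needs a further facewidth/linkage argument, and it is what ultimately governs the constants $8$ and $4$. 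Your sketch acknowledges both issues but does not resolve them, so as written it is an outline rather than a proof; for the actual argument you would need to consult~\cite{BMR_JCTB}.
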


A \emph{surface with $h$ holes} $\surf$ is obtained by removing $h$ open disks with pairwise disjoint closures from a surface $\surf'$.  The \emph{Euler genus} of $\surf$ is the Euler genus of $\surf'$.  Each component of the boundary of $\surf$ is called a \emph{cuff}.  Let $G$ be a graph embedded in a surface with holes $\surf$.  A \emph{pattern} in $G$ is a collection $\Pi=\{(s_1, t_1), \dots, (s_k, t_k)\}$ of pairs of vertices of $G$ where $s_1, \dots, s_k, t_1, \dots, t_k$ are all distinct.  We let $\overline{\Pi}=\{s_1, \dots, s_k, t_1, \dots, t_k\}$.  A \emph{$\Pi$-linkage} in $G$ is a collection of vertex-disjoint paths $P_1, \dots, P_k$ such that $P_i$ has ends $s_i$ and $t_i$ for all $i \in [k]$.
We say that $\Pi$ is \emph{topologically feasible} in $\surf$, if there is a collection of disjoint curves $\gamma_1, \dots, \gamma_k$ in $\surf$ such that $\gamma_i$ connects $s_i$ and $t_i$ for all $i \in [k]$.  We require two theorems of Robertson and Seymour~\cite{RS88} giving sufficient conditions for the existence of $\Pi$-linkages in graphs embedded in surfaces with holes.

The first gives sufficient conditions for linkages across a cylinder~\cite[(5.8)]{RS88}.

\begin{thm} \label{thm:cylinderlinkage}
Let $\surf$ be the cylinder and $O_1$ and $O_2$ be its two cuffs.  Let $G$ be embedded in $\surf$ such that $G \cap (O_1 \cup O_2) \subseteq V(G)$.  If $G$ has facewidth at least $t$ and every $G$-normal curve $\gamma$ in $\surf$ between $O_1$ and $O_2$ has length at least $t$, then there are pairwise node-disjoint noncontractible cycles $C_1, \dots C_t$ of $G$ and pairwise node-disjoint paths $P_1, \dots, P_t$ of $G$ such that each path has one end on $O_1$ and the other on $O_2$, and the intersection of each path with each cycle is a path.  
\end{thm}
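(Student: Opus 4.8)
The plan is to produce the two required families separately by Menger-type arguments and then make them compatible by a topological uncrossing. First I would use that the cylinder $\surf$ is planar, being an annulus with cuffs $O_1$ and $O_2$, so that all of the following is plane combinatorics (in the non-degenerate situation $G$ meets both cuffs, as otherwise the hypotheses cannot hold). In the cylinder a $G$-normal noose is noncontractible precisely when it separates $O_1$ from $O_2$, and such a noose of length $\ell$ is interchangeable with an $O_1$--$O_2$ vertex cut of size $\le \ell$ (thicken a cut into a separating curve through exactly its vertices, and vice versa); hence the facewidth hypothesis says the minimum $O_1$--$O_2$ vertex cut has size at least $t$, and Menger's theorem yields $t$ pairwise node-disjoint $O_1$--$O_2$ paths $P_1,\dots,P_t$. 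Dually, a $G$-normal curve from $O_1$ to $O_2$ of length $\ell$ certifies that deleting those $\ell$ vertices leaves $G$ inside a disc, hence kills all noncontractible cycles, and conversely any set meeting all noncontractible cycles can be realised by such a curve; so the second hypothesis says the minimum such hitting set has size at least $t$, and the corresponding min--max relation for annular embeddings (once more a consequence of Menger's theorem) yields $t$ pairwise node-disjoint noncontractible cycles $C_1,\dots,C_t$. Being disjoint and all isotopic to the core of $\surf$, after renumbering they are nested: $C_i$ separates $C_{i-1}$ and $O_1$ from $C_{i+1}$ and $O_2$.

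It remains to uncross the two families. Among all choices of $t$ node-disjoint noncontractible cycles together with $t$ node-disjoint $O_1$--$O_2$ paths, I would fix one minimising the potential $\Phi := \sum_{i,j}(\text{number of components of } P_i \cap C_j)$, breaking ties by the total number of edges used. The claim is that then every $P_i \cap C_j$ is connected; since $C_j$ is a noncontractible cycle it separates $O_1$ from $O_2$, so the path $P_i$ necessarily meets $C_j$, and a connected intersection is a single node or a subpath --- exactly the asserted conclusion. If instead some $P_i \cap C_j$ has two components, pick two of them consecutive along $P_i$, let $Q$ be the subpath of $P_i$ between them (its interior disjoint from $C_j$), and let $\beta$ be one of the two arcs of $C_j$ joining the ends of $Q$; then $Q \cup \beta$ bounds a closed disc $\Delta \subseteq \surf$, and replacing $Q$ by $\beta$ in $P_i$ merges those two components (and possibly more) into one, decreasing $\Phi$ --- provided the reroute keeps all $2t$ objects node-disjoint and connected.

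Establishing that last proviso is the technical heart of the argument, and is where I expect the real difficulty to lie. The cycles cause no trouble: $C_j$ is disjoint from every other $C_{j'}$ and the rerouting stays on $C_j$, so the cycle family is untouched and remains noncontractible and node-disjoint. The obstacle is the other paths: some $P_{i'}$ may meet $\beta$ or dip into the interior of $\Delta$, in which case it must be diverted along $\partial\Delta$ at the same time, and this can cascade; moreover the naive concatenation $(P_i\setminus Q)\cup\beta$ need not be a simple path and may require short-cutting. One handles this by processing the trapped path-arcs inside $\Delta$ together, diverting each along a matching sub-arc of $\beta$ --- pairwise disjointly, using planarity of the cylinder --- and verifying that $\Phi$ (or a refined lexicographic potential) strictly drops, which contradicts minimality; since $\Phi$ is a non-negative integer the process terminates. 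This bookkeeping is essentially the content of Robertson and Seymour's proof of~\cite[(5.8)]{RS88}. An equivalent formulation of the theorem is that $G$ contains a subgraph homeomorphic to a large cylindrical grid, but the same uncrossing is needed for that, so this reformulation is not a shortcut.
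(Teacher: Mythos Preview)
The paper does not prove this theorem; it is quoted verbatim as a result of Robertson and Seymour~\cite[(5.8)]{RS88} and used as a black box in the proof of Theorem~\ref{thm:EP-2sided}. So there is no ``paper's own proof'' to compare against.

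That said, your sketch is a faithful outline of how the Robertson--Seymour argument runs: obtain the two families separately via Menger-type min--max relations in the annulus, then uncross by minimising a potential counting intersection components. You are right that the delicate part is the simultaneous rerouting of all paths trapped in the disc $\Delta$, and you are also right that this bookkeeping is exactly the content of~\cite[(5.8)]{RS88}; you explicitly defer to that reference for it, which is what the paper itself does. One small point to be careful about: the equivalence ``noose of length $\ell$ $\leftrightarrow$ $O_1$--$O_2$ cut of size $\le \ell$'' is immediate in the noose-to-cut direction, but the cut-to-noose direction (that a minimal cut can be realised by a $G$-normal separating closed curve through exactly its vertices) is a genuine planarity fact that needs a short argument; likewise for the dual statement relating $O_1$--$O_2$ curves to hitting sets for noncontractible cycles. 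These are standard and appear in~\cite{RS88}, but they are not literally Menger's theorem.
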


The second is a simplified version of~\cite[(7.5)]{RS88}.

\begin{thm}[Robertson and Seymour~\cite{RS88}]
\label{thm:linkage}
There exists a computable function $b(g, k)$ such that the following holds. 
Let $\surf$ be a surface with $1$ hole and Euler genus $g \neq 0$.  Let $O$ be the unique cuff of $\surf$.  
Let $G$ be a graph embedded in $\surf$ and $\Pi=\{(s_1, t_1), \dots, (s_k, t_k)\}$ be a pattern in $G$ with $\overline{\Pi} \subseteq G \cap O$.  If $\Pi$ is topologically feasible in $\surf$, the facewidth of $G$ is at least $b(g, k)$, and every $G$-normal curve between two vertices of $\overline{\Pi}$ has length at least $b(g, k)$, then there is a $\Pi$-linkage in $G$.  
\end{thm}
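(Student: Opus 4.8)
The plan is to obtain Theorem~\ref{thm:linkage} as the one-cuff specialization of the main disjoint-paths theorem of Robertson and Seymour, namely (7.5) of \cite{RS88}. That theorem takes as input a graph $G$ drawn in a surface with cuffs, together with a pattern $\Pi$ whose terminals lie on the cuffs, and produces a $\Pi$-linkage provided three hypotheses hold: (a) $\Pi$ is realizable in the surface by pairwise disjoint arcs (``feasibility''); (b) the facewidth (representativity) of the drawing is at least a certain computable function of the Euler genus and of $k$; and (c) the terminals are sufficiently spread out, meaning that every $G$-normal curve joining two of them has length at least a certain computable function of the same parameters. Our situation is precisely the case of a single cuff $O$ carrying all $2k$ terminals, and in that case the feasibility notion of \cite{RS88} is exactly the existence of the disjoint curves $\gamma_1,\dots,\gamma_k$ in our definition of topological feasibility.

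Concretely, I would proceed as follows. First, record that a surface with one hole and Euler genus $g\neq 0$ has exactly one cuff $O$, so a pattern with $\overline{\Pi}\subseteq G\cap O$ puts us in the single-cuff regime of \cite{RS88}: no terminal lies on any other boundary component, and there is no cross-cuff matching to account for. Second, I would unwind the feasibility condition of \cite{RS88} in the single-cuff case; it asks for a family of pairwise disjoint arcs in the surface, each joining the prescribed pair of cuff points, which is verbatim our condition that $\Pi$ be topologically feasible in $\surf$. Third, I would set $b(g,k)$ to be the maximum of the two bounds that (7.5) requires — the facewidth bound and the spread bound — both of which are computable functions of $g$ and $k$ alone; this is legitimate since $g$ and $k$ are the only free parameters here (the number of cuffs being fixed to one). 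Finally, with hypotheses (a)--(c) in place, (7.5) yields the desired $\Pi$-linkage, and Edmonds-type bookkeeping confirms that the output paths are as claimed.

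The main obstacle is purely a matter of matching definitions to the formalism of \cite{RS88}. In particular one must check: (i) that the feasibility notion of \cite{RS88}, which is sometimes phrased combinatorially via non-crossing chord configurations on the cuffs rather than via disjoint arcs in the surface, coincides with topological feasibility when there is exactly one cuff — here the large-facewidth hypothesis is convenient, as it rules out the degenerate drawings where the two notions could diverge; (ii) that any mild regularity assumptions of \cite{RS88} (e.g.\ that $G$ is connected, or cellularly/2-cell embedded near the cuff) are automatically satisfied once the facewidth is large, or can be arranged by a harmless modification; and (iii) that the constant extracted is genuinely a function of $g$ and $k$ only. None of these points is deep — they are exactly the content of describing our statement as ``a simplified version of (7.5)'' — but getting them right is the substance of the proof.
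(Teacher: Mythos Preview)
Your approach is exactly the paper's: the theorem is stated as a simplified, one-cuff specialization of (7.5) in \cite{RS88}, and the paper does not give any further argument beyond that attribution. Your unpacking of why the single-cuff case matches the hypotheses of (7.5) is more detailed than anything the paper writes, but it is the same route.

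One point you gloss over that the paper addresses explicitly: you assert that the facewidth and spread bounds from (7.5) are ``computable functions of $g$ and $k$ alone,'' but in \cite{RS88} these bounds depend on an auxiliary function $w(\Sigma,k,n)$ (from (3.6) there) whose computability is not established in that paper. The paper notes in a footnote that computability of $b(g,k)$ requires later work of Geelen, Huynh, and Richter~\cite{GHR18} and Matou\v{s}ek, Sedgwick, Tancer, and Wagner~\cite{MSTW16}, which supply a computable version of $w$. Since the statement asserts a \emph{computable} function, you should cite this as well rather than treating it as automatic from \cite{RS88}.
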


Note that the computability of the function $b(g,k)$ in Theorem~\ref{thm:linkage} follows from later work of Geelen, Huynh, and Richter~\cite{GHR18} and Matou\v{s}ek, Sedgwick, Tancer, and Wagner~\cite{MSTW16}.\footnote{One can simply follow the proof of Robertson and Seymour, but instead use a computable version of the function $w(\Sigma, k, n)$ from $(3.6)$ of~\cite{RS88}, which follows from~\cite{GHR18} or~\cite{MSTW16}.  In fact, $w(\Sigma, k, n)$ turns out to be independent of the surface $\Sigma$.}
Finally, we will need the following lemma. 

\begin{lem} \label{lem:equivalent_signatures}
    Let $G$ be a $2$-connected graph embedded in a surface in such a way that $G$ has no $2$-sided odd cycle. If $G$ is non-bipartite, then $G$ has no $1$-sided even cycle. That is, a cycle of $G$ is odd if and only if it is $1$-sided. Moreover, $G$ contains no $2$-sided odd closed walk. In other words, the embedding of $G$ is parity-consistent.
\end{lem}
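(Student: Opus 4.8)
I would first observe that the whole statement reduces to a single claim: \emph{$G$ has no $1$-sided even cycle}. To exploit this cleanly, fix a signature $\Sigma\subseteq E(G)$ of the embedding, so that a cycle $D$ is $1$-sided exactly when $|E(D)\cap\Sigma|$ is odd; the point I will use repeatedly is that $D\mapsto |E(D)\cap\Sigma|\bmod 2$ is additive under symmetric difference of edge sets (it is the restriction of a $\Z_2$-linear functional on $\Z_2^{E(G)}$). In this language, the hypothesis ``no $2$-sided odd cycle'' says precisely that every odd cycle is $1$-sided.

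So suppose for contradiction that $C$ is a $1$-sided even cycle. The first step is to produce an odd cycle $O$ with $|V(O)\cap V(C)|\ge 2$. Since $G$ is non-bipartite it has some odd cycle $O_0$; if $O_0$ already meets $C$ in at least two vertices, I am done. Otherwise $O_0$ meets $C$ in at most one vertex, and here I would use $2$-connectivity: by Menger's theorem there are two vertex-disjoint paths joining $V(O_0)$ to $V(C)$, with internal vertices off $C\cup O_0$, having distinct endpoints on $C$ and distinct endpoints on $O_0$ (allowing one path to degenerate to the single shared vertex, when $V(O_0)\cap V(C)\neq\emptyset$). Stitching together an arc of $C$, an arc of $O_0$, and these two connecting paths yields a cycle $O$ meeting $C$ in the two distinct endpoints of the connecting paths lying on $C$. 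Because $O_0$ is odd, its two arcs between any two fixed vertices have opposite parities, so I can choose the $O_0$-arc making $|E(O)|$ odd. This gives the desired odd cycle $O$ with $|V(O)\cap V(C)|\ge 2$.

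Next I would invoke Lemma~\ref{lem:ears} for the odd cycle $O$ and the even cycle $C$: since they share at least two vertices, $O$ contains a parity-breaking $C$-ear $P$. Then $C\cup P$ is a theta graph whose three cycles are $C$, $A:=P\cup C_1$, and $B:=P\cup C_2$, where $C_1,C_2$ are the two arcs of $C$ between the ends of $P$. Here $E(C)=E(A)\triangle E(B)$, and $|E(A)|+|E(B)|=2|E(P)|+|E(C)|$ is even, so $A$ and $B$ have the same length-parity. Parity-breaking means $C\cup P$ contains an odd cycle; since $C$ is even this odd cycle is $A$ or $B$, hence both $A$ and $B$ are odd. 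By the hypothesis, $A$ and $B$ are thus $1$-sided, so $|E(A)\cap\Sigma|$ and $|E(B)\cap\Sigma|$ are both odd; by additivity $|E(C)\cap\Sigma|\equiv|E(A)\cap\Sigma|+|E(B)\cap\Sigma|\equiv0\pmod 2$, i.e.\ $C$ is $2$-sided — contradicting the choice of $C$. This proves that $G$ has no $1$-sided even cycle.

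Everything else then follows quickly. Combined with the hypothesis, this shows that a cycle of $G$ is odd iff it is $1$-sided. For the last assertion, I would take any closed walk $W$, regard it as an Eulerian multigraph, decompose its edge multiset into cycles $D_1,\dots,D_\ell$ (each a cycle of $G$, or a pair of parallel copies of one edge, which is even and $2$-sided), note that each $D_i$ has length-parity equal to its $\Sigma$-parity, and sum to conclude the same for $W$; hence every odd closed walk is $1$-sided. (Alternatively, this follows from Lemma~\ref{lemAuxSignatureIsWholeEdgeSet}.) Together with cellularity of the embedding — which is available in the situations where this lemma is applied, cf.\ Consequence~\ref{consqCellular} — the embedding is parity-consistent. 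I expect the one genuine obstacle to be the $2$-connectivity step that manufactures the odd cycle $O$ meeting $C$ twice; the parity bookkeeping, the theta-graph decomposition, and the deductions after Lemma~\ref{lem:ears} are routine.
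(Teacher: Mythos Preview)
Your proof is correct and takes a genuinely different route from the paper's. The paper proceeds structurally: it fixes an ear decomposition $C_0\cup P_1\cup\cdots\cup P_\ell$ of $G$ starting at an odd cycle and shows, by induction on $\ell$, that one can resign so that the signature becomes $\Sigma'=E(G)$; once $\Sigma'=E(G)$, every even cycle is trivially $\Sigma'$-even, hence $2$-sided. You instead argue by direct contradiction: assuming a $1$-sided even cycle $C$, you use $2$-connectivity to produce an odd cycle meeting $C$ in at least two vertices, invoke Lemma~\ref{lem:ears} to extract a parity-breaking $C$-ear $P$, and read off the contradiction from the theta graph $C\cup P$ via additivity of $\Sigma$-parity under symmetric difference. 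Your argument is more local and economical, reusing Lemma~\ref{lem:ears}; the paper's argument is closer in spirit to Lemma~\ref{lemAuxSignatureIsWholeEdgeSet} and yields the stronger intermediate statement that $\Sigma$ can be normalized to $E(G)$, though this extra information is not used here. For the closed-walk assertion both proofs amount to the same decomposition-into-cycles computation. Your caveat about cellularity is apt: the ``in other words'' in the lemma statement is slightly loose, and the paper's own proof likewise only establishes that every odd closed walk is $1$-sided, leaving cellularity to the surrounding context.
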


\begin{proof}
     Let $\Sigma$ denote any signature for the embedding. Thus every odd cycle in $G$ is $\Sigma$-odd. We prove, similarly as in Lemma~\ref{lemAuxSignatureIsWholeEdgeSet}, that one can find nodes $v_1$, \ldots, $v_k$ such that $\Sigma \triangle \delta(v_1) \triangle \dotsb \triangle \delta(v_k) = E(G)$.

     Let $ C_0 $ be an odd cycle and $ C_0 \cup P_1 \cup \dots \cup P_\ell $ an ear-decomposition of $ G $. We proceed by induction over $ \ell \geqslant 0 $. In the case $ \ell = 0 $, $ G $ only consists of the cycle $ C_0 $ and one can easily find nodes $ v_1,\dotsc,v_k \in V(C_0) $ such that $ \Sigma' := \Sigma \triangle \delta(v_1) \triangle \dotsb \triangle \delta(v_k) $ consists of all edges of $C_0$ except possibly one. Since $ C_0 $ is odd, it is $\Sigma$-odd and thus $\Sigma'$-odd. This means that all edges of $C_0$ are in $\Sigma'$.

     Now, assume $ \ell \ge 1 $ and consider $ G' := C_0 \cup P_1 \cup \dots \cup P_{\ell-1} $ together with the induced signature $ \Sigma \cap E(G') $. By the induction hypothesis there exist nodes $ v_1,\dotsc,v_{k'} \in V(G') $ such that $ \Sigma \triangle \delta(v_1) \triangle \dotsb \triangle \delta(v_{k'}) $ contains $ E(G') $.

     Taking the symmetric difference with further nodes on $ P_\ell $, we find nodes $ v_{k'+1}, \dotsc, v_k $ such that all edges of $G$ except possibly one edge on the path $P_\ell$ are in $ \Sigma' := \Sigma \triangle \delta(v_1) \triangle \dotsb \triangle \delta(v_{k}) $. Let $e \in E(P_\ell)$ denote the edge of $G$ that could possibly not belong to $\Sigma'$.  Since $ G $ is $ 2 $-connected and non-bipartite, we can find an odd cycle $ C $ that contains $ e $. We proceed as in the case $ \ell = 0 $ and observe that $C$ is $\Sigma'$-odd, meaning that $ \Sigma' $ contains all edges of $ C $ and hence $ \Sigma' = E(G) $.

     By switching the local orientation at $v_1$, \ldots, $v_k$, we may assume that the signature $\Sigma$ equals $E(G)$. It directly follows that every even cycle is $2$-sided. In order to prove the signature-consistency of the embedding, we show by induction on the length of a closed walk $W$ that $W$ is either $1$-sided and odd, or $2$-sided and even.
     
     A \emph{subwalk} of a walk $W=(v_0, e_1, v_1, \ldots, e_k, v_k)$ is any walk of the form $W' := (v_i,e_{i+1},v_{i+1},\ldots,e_j,v_j)$ where $0 \leqslant i \leqslant j \leqslant k$. If $W'$ is closed (that is, $v_i=v_j$), we can obtain a new walk $W - W' := (v_0, e_1, v_1, \ldots, e_{i},v_i,e_{j+1},v_{j+1},\ldots,e_k,v_k)$.

     Consider a simple closed subwalk $W'$ of $W$. Thus $W'$ either is a cycle or $W' = (u,e,v,e,u)$ for some edge $e = uv$. Letting $W'' := W - W'$, we may apply the induction hypothesis to $W''$ to infer that the closed walk $W''$ is either $1$-sided and odd, or $2$-sided and even. Since $W'$ is also either $1$-sided and odd, or $2$-sided and even, this implies that $W$ is $1$-sided and odd, or $2$-sided and even.
\end{proof}

We may now prove Theorem~\ref{EPodd2cycles}, which follows from the following theorem. For a simple closed curve $\gamma$ in a surface $\surf$, we let $\surf \cut \gamma$ be the surface(s) obtained by cutting $\surf$ open along $\gamma$ and then gluing a disk onto each resulting hole.

\begin{thm}
\label{thm:EP-2sided}
There exists a computable function $f:\N \times \N \to \N$ such that the following holds. 
Let $G$ be a cellularly embedded graph in a surface $\surf$ of Euler genus $g$ such that $G$ has no $k+1$ node-disjoint $2$-sided odd cycles. Then there exists a subset $X$ of nodes of $G$ with $|X|\leq f(g, k)$ such that $X$ meets all $2$-sided odd cycles of $G$. 
Furthermore, if $\surf$ is orientable, then there is such a set $X$ of size at most $19^{g+1} \cdot k$. 
\end{thm}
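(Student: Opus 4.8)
The plan is to proceed by induction on the Euler genus $g$, with the base cases $g=0$ (sphere), $g=1$ (projective plane) and $g=2$ (torus, or the Klein bottle) handled directly. The sphere case is Theorem~\ref{EPplanar}; the projective plane case is Theorem~\ref{EPprojectiveplane}; the orientable torus case is handled by Theorem~\ref{th:torus_node} (the non-orientable genus-$2$ surface, i.e.\ the Klein bottle, can be folded into the induction step below since it admits a genus-reducing noose of bounded length, but if a direct treatment is wanted one can mimic the toroidal argument). For the inductive step, assume $g \ge 3$ (or $g\ge 2$ in the non-orientable case) and that the statement holds for all surfaces of smaller Euler genus. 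Let $G$ be cellularly embedded in $\surf$ with no $k+1$ node-disjoint $2$-sided odd cycles.

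First I would split on the facewidth $t$ of the embedding. \textbf{Small facewidth.} If $t \le b(g, 2k+2)$ (with $b$ as in Theorem~\ref{thm:linkage}, suitably enlarged to also dominate the thresholds in Theorems~\ref{thm:Schrijver}, \ref{thm:de_Graad-Schrijver}, \ref{thm:BMR}, \ref{thm:cylinderlinkage}), pick a shortest noose $\gamma$; it meets $G$ in a set $Z$ of at most $t$ nodes, and $\surf \cut \gamma$ has strictly smaller Euler genus (either one surface of genus $g-1$ or $g-2$, or two surfaces whose genera sum to $g$). Cutting along $\gamma$ and capping with disks produces a graph $G'$ cellularly embedded in $\surf\cut\gamma$, obtained from $G-Z$ by possibly identifying the traces of $\gamma$; every $2$-sided odd cycle of $G-Z$ survives as a $2$-sided odd cycle or odd closed walk in $G'$ (here one invokes Lemma~\ref{EPminimalwalks}-style reasoning, or Lemma~\ref{lem:equivalent_signatures} applied blockwise, to recover an actual $2$-sided odd cycle), and a node-disjoint packing in $G'$ lifts back to one in $G-Z$. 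Apply the induction hypothesis (summing the bounds over the at most two pieces) to get a transversal of size $f(g-1,k)$ or $f(g-2,k)$ plus the at most $b(g,2k+2)$ nodes of $Z$; this defines $f(g,k)$ recursively. \textbf{Large facewidth.} If $t$ exceeds the threshold, use Theorem~\ref{thm:BMR} to extract $2k+2$ node-disjoint, pairwise homotopic, \emph{surface-separating} noncontractible cycles $C_1, \dots, C_{2k+2}$, ordered consistently along the cylinder they bound. The key point is that deleting the vertex set of, say, $C_{k+1}$ together with everything on one side yields planar pieces, so Theorem~\ref{EPplanar} applies there; more usefully, one argues as in the proof of Theorem~\ref{th:torus_node}: remove $V(C_1) \cup V(C_{k+1})$, make each resulting planar-after-deletion part bipartite using $6k$-node transversals via Theorem~\ref{EPplanar}, and show the one remaining non-planar block $B$ is embedded with an \emph{even} face-boundary structure (each facial cycle lies inside a cylinder between two surviving $C_i, C_j$ and hence inside a part already forced bipartite), whence by the homology argument (property (*)) and Theorem~\ref{thm:de_Graad-Schrijver} applied inside $B$ every cycle of $B$ that is $2$-sided is even — in fact $B$ has no $2$-sided odd cycle at all.

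For the orientable refinement with the explicit bound $19^{g+1}\cdot k$, I would run the same induction but quantify everything, exploiting that in an orientable surface \emph{every} cycle is $2$-sided, so "$2$-sided odd cycle" just means "odd cycle" and the statement reduces to the \EP{} property for odd cycles on an orientable surface. In the small-facewidth case, the genus drops from $g$ to $g-1$ (orientability is preserved by cutting along a noncontractible separating or nonseparating curve and capping), and one uses the orientable threshold $\frac{t-9}{8}\ge k$ from Theorem~\ref{thm:BMR} together with $\frac{3t}{4}$-many disjoint cycles from Theorem~\ref{thm:Schrijver}; a careful bookkeeping shows the transversal size multiplies by at most $19$ per unit of genus on top of the base planar bound $6k$ from Theorem~\ref{EPplanar}, giving $6k \cdot 19^{g} \le 19^{g+1}k$ or a similar clean estimate. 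The large-facewidth case in the orientable setting is where the linear-in-$k$ bound is won: the $2k+2$ disjoint homotopic separating cycles, combined with the even-embedding argument, force the one non-planar block to be bipartite, so the whole transversal is just the union of a bounded number of planar transversals of total size $O(k)$ plus the $O(1)$ (in $k$) cut vertices — and since $g$ is fixed the "$O(1)$" is absorbed into the constant $19^{g+1}$.

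The main obstacle I expect is the \textbf{cutting-and-pasting bookkeeping in the small-facewidth case}: one must carefully track how $2$-sided odd cycles behave when the surface is cut along a noose and the resulting holes are capped with disks — a $2$-sided odd cycle of $G$ may, after cutting, split into closed walks that are no longer cycles, or change sidedness, and when $\gamma$ is separating one gets two pieces whose odd-cycle-packing numbers must be allocated between them. Handling this cleanly requires (i) choosing $\gamma$ to be a shortest noose and arguing its length is controlled, (ii) recovering genuine $2$-sided odd cycles from odd $2$-sided closed walks using Lemma~\ref{EPminimalwalks} and its higher-genus analogue (which itself needs the observation that an odd $2$-sided closed walk decomposes into $1$-sided cycles pairing up, plus even $2$-sided cycles), and (iii) verifying that a node-disjoint packing downstairs lifts to one upstairs so the "no $k+1$ disjoint" hypothesis is inherited by the smaller-genus pieces. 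The second obstacle, more technical than conceptual, is pinning down the exact constants for the orientable bound so that the recursion $f(g,k)\le 19\cdot f(g-1,k) + (\text{poly in }k,\,O(1))$ telescopes to $19^{g+1}k$; this forces the additive cut-vertex term to be bounded independently of $k$, which is exactly what the surface-separating-cycles machinery of Theorem~\ref{thm:BMR} provides in the high-facewidth case and what the minimum-genus / shortest-noose bound provides in the low-facewidth case.
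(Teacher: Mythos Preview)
Your small-facewidth case is essentially correct and matches the paper: cut along a short noose, recurse on the piece(s) of smaller genus, add back the noose vertices. (You overstate the difficulty, though: a cycle of $G-Z$ is $2$-sided in the cut surface if and only if it is $2$-sided in $\surf$, so no walk-to-cycle conversion via Lemma~\ref{EPminimalwalks} is needed; the paper simply records this fact and moves on.)

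The large-facewidth case, however, has a genuine gap. Your plan is to mimic the torus argument of Theorem~\ref{th:torus_node}: find $2k+2$ pairwise homotopic separating cycles, remove a couple of them to get ``planar-after-deletion'' pieces, make those bipartite with Theorem~\ref{EPplanar}, and then show the one remaining non-planar block has an even embedding. But the torus argument works precisely because deleting a single noncontractible cycle from a toroidal graph leaves a \emph{planar} graph. On a surface of Euler genus $g\ge 3$ (or non-orientable genus $\ge 2$), a surface-separating noncontractible cycle $C$ splits $\surf$ into two pieces of genera $g_1,g_2\ge 1$ with $g_1+g_2=g$; neither $G-V(C)$ nor either side of the cut is planar, so Theorem~\ref{EPplanar} cannot be invoked as you describe, and the ``every facial cycle is even'' argument (which in the torus case uses that every facial cycle of the leftover block lies in some planar $H_i-X_i$) has no footing. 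Moreover, the final step ``$B$ has even faces $\Rightarrow$ $B$ is bipartite'' relies on finding two even cycles generating the torus's $\Z_2$-homology; in higher genus you would need $g$ generators, all even, and you have not arranged for that.

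What the paper actually does in the large-facewidth case is quite different: it applies the \emph{induction hypothesis} on each side of the cylinder (each side lives in a surface of genus $g_i<g$), obtaining transversals $X_1,X_2$ of total size at most $m(g,k)$, and then must show $H=G-(X_1\cup X_2)$ has no $2$-sided odd cycle. This last step is the technical heart of the proof and uses Theorems~\ref{thm:cylinderlinkage} and~\ref{thm:linkage}: assuming $H$ still contains a $2$-sided odd cycle, one first establishes that every cylinder-essential cycle $C$ in $H$ has a $\Sigma$-odd parity-breaking $C$-ear on one side and a $\Sigma$-odd non-parity-breaking $C$-ear on the other; then one finds many cylinder-essential cycles $D_j$ together with many disjoint paths crossing them (via Theorem~\ref{thm:cylinderlinkage}), routes linkages through a crosscap on each side (via Theorem~\ref{thm:linkage}) to produce disjoint $1$-sided $D_0$-ears, and assembles from these $k+1$ node-disjoint $2$-sided odd cycles, contradicting the hypothesis. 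Your proposal misses this linkage-based packing construction entirely. (In the orientable refinement the large-facewidth case is indeed short --- the existence of any $\Sigma$-odd ear already contradicts orientability --- but the non-orientable case needs the full construction, and even in the orientable case the recursion is on the two sides of the cylinder using $f(g_i,k)$, not on planar pieces.)
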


\begin{proof}
Let $c:\N\times \N \to \N$ be the function $c(g, k)=2(b(g,k)+1)^2 + 1$, where $b(g,k)$ is the function from Theorem~\ref{thm:linkage}.  
We prove the theorem with the function $f$ defined inductively on $g$ by setting for every $k\geq 0$
\begin{align*}
f(0,k) &= 6k    \\
f(1,k) &= 114k \\
f(g,k) &= 9m(g,k) + 8c(g,k) + 8 \quad \quad \forall g \geq 2,  
\end{align*}
where 
\[
m(g,k)=\max_{i\in [g-1]}\left( f(i,k) + f(g-i,k) \right).
\]

We prove the theorem with the function $f(g,k)$ defined above. 
We postpone the argument for the linear upper bound in the orientable case until the end of the proof. 

The proof is by induction on $g$. 
Clearly, we may assume $k\geq 1$, since otherwise the theorem trivially holds. 
We treat the sphere, the projective plane, and the torus as base cases.
These cases are covered respectively by Theorem~\ref{EPplanar}, Theorem~\ref{EPprojectiveplane}, and Theorem~\ref{th:torus_node}. 
(For the torus, we use that $38k + 7 \leq f(1, k) \leq f(2, k)$.) 

Next we consider the inductive case.
Thus our surface $\surf$ is either an orientable surface with Euler genus at least $4$, or a nonorientable one with Euler genus at least $2$.

Let $t$ denote the facewidth of the embedding of $G$.
We distinguish two cases, depending on whether the facewidth is small or big compared to $g$ and $k$.

{\bf Case 1: $t \leq 8m(g,k) + 8c(g,k) + 8$.}
By definition of $t$,
there is a noncontractible simple closed curve $\gamma$ in $\surf$ that
intersects $G$ in a set $Z$ of exactly $t$ nodes.

{\bf Case 1.1: $\gamma$ separates $\surf$}. 
In this case, $\surf \cut \gamma$ is the disjoint union of two surfaces $\surf_{1}$,
$\surf_{2}$ of Euler genus $g_{1}$ and $g_{2}$, respectively, such that
$g_{1}, g_{2} \geq 1$ and $g_{1} + g_{2} = g$.
It follows that $G - Z$ is the disjoint union of two graphs $G_{1}$ and $G_{2}$ with $G_{i}$ cellularly embedded in $\surf_{i}$ for each $i \in [2]$. Note that possibly $V(G_{i}) = \varnothing$.
Observe that a cycle of $G_i$ is $2$-sided in the induced embedding of $G_i$ if and only if it is $2$-sided in the embedding of $G$.
Thus, by induction, there is a subset $X_{i}$ of nodes of $G_{i}$ with $|X_i| \leq f(g_i,k)$ meeting all $2$-sided odd cycles of $G_i$, or equivalently, meeting all $2$-sided odd cycles of $G$ that are contained in $G_i$ (as subgraphs).

Let $X:= X_{1} \cup X_{2} \cup Z$.
Clearly, $X$ meets all $2$-sided odd cycles of $G$, and 
\[
|X| \leq f(g_1,k) + f(g_2,k) + t  
\leq m(g,k) +  8m(g,k) + 8c(g,k) + 8 
= f(g,k),
\]
as desired.

{\bf Case 1.2: $\gamma$ does not separate $\surf$}.
In this case, $\surf':=\surf \cut \gamma$ is a surface of Euler genus $g-1$ if $\gamma$ is $1$-sided, or a surface of Euler genus $g-2$ if $\gamma$ is $2$-sided.  
The argument is the same as in the previous case, except we only need to consider one surface of smaller Euler genus.

Again, a cycle of $G-Z$ is $2$-sided in the induced (cellular) embedding of $G-Z$ in $\surf'$ if and only if the cycle is $2$-sided in the embedding of $G$ in $\surf$.
By induction, there is a node subset $X'$ of $G - Z$ of size at most $f(g-1,k)$ meeting all $2$-sided odd cycles of $G-Z$.
Let $X:= X' \cup Z$.
Then $X$ meets all $2$-sided odd cycles of $G$, and
\[
|X| \leq f(g-1,k)  + t
\leq m(g,k) +  8m(g,k) + 8c(g,k) + 8 
= f(g,k),
\]
as desired.

{\bf Case 2:  $t \geq 8m(g,k) + 8c(g,k) + 9$.}
Let
$$
q:= 
\max\left\{\left\lfloor \frac{t - 9}{8} \right\rfloor, 
\left\lfloor \frac{t - 1}{4} \right\rfloor\right\}
=\left\lfloor \frac{t - 9}{8} \right\rfloor
\geq m(g,k) + c(g,k). 
$$

By Theorem~\ref{thm:BMR}, $G$ contains $q$
node-disjoint surface separating cycles that are noncontractible and pairwise homotopic. 
These cycles define a cylinder $D \subset \surf$; denote these cycles by $C_1, C_2, ..., C_q$, in
an order consistent with $D$ (so $C_1$ and $C_q$ are on the
boundary of the cylinder). In what follows, we will use the fact that each cycle $C_{i}$ separates
the surface $\surf$ into two pieces.

Let $G_{0}$ be the subgraph of $G$ lying on the cylinder $D$.
Let $\surf_{1}$ be the component of $\surf \cut C_1$ which contains $D$ and let $\surf_2$ be the component of $\surf \cut C_q$ which contains $D$.
For $i \in [2]$, let $g_{i}$ be the genus of $\surf_{i}$. Thus,
$g_{1}, g_{2} \geq 1$ and $g_{1} + g_{2} = g$.
Let $G_{i}$ be the subgraph of $G$ contained in $\surf_{i}$.
We have $G_{1} \cup G_{2} = G$ and $G_{1} \cap G_{2} = G_{0}$.  Since $D$ is orientable, there is a signature $\Sigma$ of the embedding of $G$ in $\surf$ such that $\Sigma \cap E(G_0)=\emptyset$.  

Again by induction, for each $i \in [2]$, there is a node subset $X_{i}$ of $G_{i}$ with $|X_i| \leq f(g_i,k)$ meeting all $2$-sided odd cycles of $G_i$, or equivalently, meeting all $2$-sided odd cycles of $G$ that are contained in $G_i$ (as subgraphs). 

Let $H:=G-(X_1 \cup X_2)$. 
Since $|X_1 \cup X_2| \leq f(g_1, k) + f(g_2,k)  \leq f(g, k)$, it is enough to show that $H$ has no $2$-sided odd cycle. 
We argue by contradiction, and suppose that $H$ does have one such cycle.

A cycle $C$ of $G$ is {\em cylinder-essential} if $C$ is drawn on the cylinder $D$ and $C$ is homotopic to $C_1$. 
Let $C$ be cylinder-essential cycle of $H$ distinct from $C_1$ and $C_q$. We let $\surf^{-1}_C$ be the component of $\surf \cut C$ containing the cycle $C_1$ and $\surf^{1}_C$ be the other component of $\surf \cut C$.  For each $\ell \in \{-1,1\}$, we let $H^{\ell}_C$ be the subgraph of $H$ contained in $\surf^{\ell}_C$. Note that for each $\ell \in \{-1,1\}$, there is a copy of $C$ in $H^{\ell}_C$.  
We will show: 
\begin{equation}
\label{eq:prop1}
\textrm{
\begin{minipage}{0.92\textwidth}
For all cylinder-essential cycles $C$ of $H$ and all $\ell \in \{-1,1\}$,  either all $\Sigma$-odd $C$-ears in $H^{\ell}_C$ are parity-breaking, or none of them are. 
\end{minipage}
}
\end{equation}
To see this, first observe that $H^{\ell}_C$ has no $\Sigma$-even odd cycle, since $H^{\ell}_C$ is a subgraph of $G_\ell-X_\ell$. 
Let $B$ be the block of $H^{\ell}_C$ containing $C$. 
Now, each $\Sigma$-odd $C$-ear can be extended via a subpath of $C$ to a $\Sigma$-odd cycle, since $\Sigma \cap E(C) = \emptyset$.  
By Lemma~\ref{lem:equivalent_signatures}, all $\Sigma$-odd cycles contained in $B$ have the same length parity, and so property~\eqref{eq:prop1} follows. 

Next we prove:
\begin{equation}
\label{eq:prop2}
\textrm{
\begin{minipage}{0.92\textwidth}
Let $C$ be a cylinder-essential cycle of $H$. 
Then we can write $\{-1,1\}=\{\ell, \ell'\}$ in such a way that there is a $\Sigma$-odd parity-breaking $C$-ear in $H^{\ell}_C$, and a $\Sigma$-odd {\em non} parity-breaking $C$-ear in $H^{\ell'}_C$. 
\end{minipage}
}
\end{equation}
To see this, consider a $2$-sided odd cycle $C'$ in $H$ (whose existence we assumed). 
Let $\mathcal{P}$ denote the set of components of $C\cap C'$ having at least two nodes, and let $\mathcal{E}$ denote the set of $C$-ears that are subpaths of $C'$. 
Observe that all the paths in these two sets are edge-disjoint, and $C'$ is the union of all these paths. 
Also, all paths in $\mathcal{P}$ are $\Sigma$-even. 
Since $C'$ is $\Sigma$-even, it follows that there is an even number of $\Sigma$-odd paths in $\mathcal{E}$. 
Moreover, at least one of the $C$-ears in $\mathcal{E}$ is parity-breaking, since $C'$ has odd length. 
Consider such an ear, and let $\ell \in \{-1,1\}$ be such that it is contained in $H^{\ell}_C$. 
This ear is $\Sigma$-odd, since $H^{\ell}_C$ has no $2$-sided odd cycle. 
This shows the first part of the statement. 

Next, we observe that there is an odd number of parity-breaking $C$-ears in $\mathcal{E}$, because $C'$ has odd length. 
Since all such ears are $\Sigma$-odd, and since there is an even number of $\Sigma$-odd ears in $\mathcal{E}$, it follows that there is a $\Sigma$-odd {\em non} parity-breaking ear in $\mathcal{E}$. 
By~\eqref{eq:prop1}, such an ear must be contained in $H^{\ell'}_C$ where $\ell' \in \{-1,1\} \setminus \{\ell\}$.  
Property~\eqref{eq:prop2} follows. 

Our goal now is to use~\eqref{eq:prop1} and~\eqref{eq:prop2} to show the existence of more than $k$ pairwise node-disjoint $2$-sided odd cycles in $H$, and thus also in $G$, which will give the desired contradiction. 

Let $b:=b(g,k)$, $c:=c(g,k)=2(b+1)^2+1$ (which is an odd integer), and $m:=(c-1)/2$. 
The set $X_1 \cup X_2$ avoids at least $q - |X_1| - |X_2| \geq c$ cycles among $C_1, \dots, C_q$. 
Consider $c$ such cycles and the cylinder they define. 
Let $J$ be the subgraph of $H$ drawn on that cylinder. 
Observe that $J$ has facewidth at least $t - |X_1| - |X_2| \geq c$, since every noncontractible closed curve in the cylinder is also noncontractible in $\surf$. 
Observe also that every curve from one cuff of the cylinder to the other cuff intersects each of the $c$ cycles we consider. 
By Theorem~\ref{thm:cylinderlinkage}, there are $c$ pairwise node-disjoint cylinder-essential cycles in $J$---let us denote them $D_{-m}, \dots, D_0, \dots, D_m$ in an order consistent with the cylinder in such a way that $C_1$ is contained in $H^{-1}_{D_{-m}}$---and $c$ pairwise node-disjoint $V(D_{-m})$--$V(D_m)$ paths $P_1, \dots, P_c$ such that the intersection of each path $P_i$ with each cycle $D_j$ is a path, which we denote $P^j_i$. 
The reason for using a numbering centered around $0$ for the cycles will become clear in what follows.

For $\ell\in \{-1,1\}$ and $i\in [c]$, let $Q^{\ell}_i$ be the (unique) $V(D_0)$--$V(D_{\ell})$ path contained in $P_i$, and let $u^{\ell}_i$ be the endpoint of $P^{\ell}_i$ incident to $Q^{\ell}_i$. 
For $\ell\in \{-1,1\}$ and $j\in [b]$, let $s^{\ell}_j := u^{\ell}_{j(b+1)}$ and $t^{\ell}_j := u^{\ell}_{2j(b+1)}$. 

Next, we show: 
\begin{equation}
\label{eq:prop3}
\textrm{
Let $\ell\in \{-1,1\}$. 
Then there is a linkage in $H^{\ell}_{D_{\ell}}$ linking $s^{\ell}_j$ to $t^{\ell}_j$ for each $j\in [b]$. 
}
\end{equation}
This can be seen as follows. 
First, by~\eqref{eq:prop2} there is a $1$-sided $D_{\ell}$-ear in $H^{\ell}_{D_{\ell}}$, there is a crosscap in the surface $\surf^{\ell}_{D_{\ell}}$.  
Using this crosscap we see that the linkage is topologically feasible. 
Next we observe that every noncontractible closed curve in the later surface is also noncontractible in $\surf$. 
This implies that the facewidth of $H^{\ell}_{D_{\ell}}$ is at least that of $H$, which is at least $b$, as seen before.  
Hence, by Theorem~\ref{thm:linkage}, the desired linkage exists in $H^{\ell}_{D_{\ell}}$.

For each $\ell\in \{-1,1\}$, consider the linkage given by~\eqref{eq:prop3}, and for each $j\in [b]$ let $E^{\ell}_j$ be the path obtained by taking the union of the path linking $s^{\ell}_j$ to $t^{\ell}_j$ in the linkage with $Q^{\ell}_{j(b+1)}$ and  $Q^{\ell}_{2j(b+1)}$. 
Observe that $E^{\ell}_1, \dots, E^{\ell}_b$ are pairwise node-disjoint $D_0$-ears. 

We prove: 
\begin{equation}
\label{eq:prop4}
\textrm{
Let $\ell\in \{-1,1\}$. 
Then at least $b-3g$ of the paths $E^{\ell}_1, \dots, E^{\ell}_b$ are $1$-sided. 
}
\end{equation}

Choose an arbitrary orientation of the cycle $D_{\ell}$. 
For each $j\in [b]$, consider the cycle $F^{\ell}_j$ obtained by combining the ear $E^{\ell}_j$ with the subpath of $D^{\ell}$ directed from $s^{\ell}_j$ to $t^{\ell}_j$ in our orientation of $D_{\ell}$. 
Observe that the cycle $F^{\ell}_j$ cannot separate the surface $\surf$. 
In particular, $F^{\ell}_1, \dots, F^{\ell}_b$ are all noncontractible cycles. 
Furthermore, if $F^{\ell}_j$ and $F^{\ell}_{j'}$ with $j, j'\in [b], j\neq j'$ are homotopic, then both cycles must be $1$-sided. 
A result of Malni\v{c} and Mohar~\cite[Proposition~3.6]{MM92} implies that $F^{\ell}_1, \dots, F^{\ell}_b$ are partitioned into at most $3g$ equivalence classes by the homotopy relation. 
By the previous remark, it follows that at most $3g$ of the cycles $F^{\ell}_1, \dots, F^{\ell}_b$ are $2$-sided. 
Hence, at least $b-3g$ of these cycles are $1$-sided, and the corresponding paths among $E^{\ell}_1, \dots, E^{\ell}_b$ are $1$-sided as well, showing~\eqref{eq:prop4}. 

Since $b - 6g \geq k+1$, using \eqref{eq:prop4} once for $\ell=-1$ and once for $\ell=1$, we deduce that there exists $I\subseteq [b]$ with $|I|=k+1$ such that for each $j\in I$, both $E^{-1}_j$ and $E^{1}_j$ are $1$-sided. 
For each $j\in I$, let $A_j$ be the cycle obtained by taking the union of $E^{-1}_j$, $E^{1}_j$, $P^0_{j(b+1)}$, and  $P^0_{2j(b+1)}$. 
Note that $A_j$ is $2$-sided, since $P^0_{j(b+1)}$ and $P^0_{2j(b+1)}$ are $2$-sided. 
Now, combining~\eqref{eq:prop1} and~\eqref{eq:prop2}, we deduce that one of the two $D_0$-ears $E^{-1}_j$, $E^{1}_j$ is parity-breaking while the other is not. 
Thus the cycle $A_j$ has odd length. 
Therefore, $A_1, \dots, A_{k+1}$ is a collection of $k+1$ pairwise node-disjoint $2$-sided odd cycles in $H$, giving the desired contradiction. \\

We conclude with a comment about the orientable case. 
To obtain the upper bound stated in the theorem, it suffices to redo the above proof with $f(g,k)=19^{g+1}k$, and use ``$t \leq 8(19^{g} + 19^{2})k + 16$'' as the condition for Case~1. 
In Case~2 we then have $q \geq f(g_1, k) + f(g_2, k) + 1$, and then just after the sets $X_1$ and $X_2$ are defined we quickly conclude the proof as follows. By contradiction, suppose $H=G-(X_1 \cup X_2)$ has a $2$-sided odd cycle. 
Since $q > |X_1| + |X_2|$, one of the cycles $C_1, \dots, C_q$ avoids $X_1 \cup X_2$. 
Using that cycle plus a $2$-sided odd cycle $C'$ of $H$, we follow the arguments of the proof of~\eqref{eq:prop2} and deduce that $\surf$ is in fact nonorientable, a contradiction. 
\end{proof}

\begin{lem} \label{EPodd2walks}
    Assume that $G$ is any graph embedded in a surface $\surf$ in such a way that no odd cycle is $2$-sided. Then there exists $Y \subseteq V(G)$ with $|Y| \leqslant c$ such that $G-Y$ does not contain a $2$-sided odd closed walk, where $c = c(\surf)$ denotes the maximum number of disjoint $1$-sided simple closed curves in $\surf$. Since the Euler genus $g = g(\surf)$ of $\surf$ is at least $c$, $|Y| \leqslant g$.
\end{lem}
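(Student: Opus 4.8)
The plan is to handle the connected components of $G$ separately. A bipartite component has only even closed walks, so nothing needs to be deleted from it; and since node-disjoint $1$-sided cycles lying in distinct components of $G$ yield pairwise disjoint $1$-sided simple closed curves in $\surf$, it suffices to show that for each non-bipartite component $G_i$ we can delete a set $Y_i$ of size at most the maximum number of node-disjoint $1$-sided even cycles of $G_i$ so that $G_i-Y_i$ has no $2$-sided odd closed walk: the $Y_i$ are then disjoint, and summing the packings over components shows $|Y|=\sum_i|Y_i|$ is at most the maximum number of node-disjoint $1$-sided even cycles in $G$, which is at most $c$. So from now on assume $G$ is connected and non-bipartite.

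The first key step is the structural equivalence: \emph{$G$ contains a $2$-sided odd closed walk if and only if $G$ contains a $1$-sided even cycle.} For the forward direction, write a $2$-sided odd closed walk $W$ as an edge-disjoint union of (multigraph) cycles $C_1,\dots,C_\ell$; the odd-length ones among them are honest odd cycles of $G$, hence $1$-sided by hypothesis, and their number is odd because $W$ is odd, whereas the total number of $1$-sided $C_i$ is even because $W$ is $2$-sided, so an odd — in particular positive — number of the $C_i$ are $1$-sided even cycles. For the backward direction, take a $1$-sided even cycle $A$ and, using that $G$ is non-bipartite, an odd cycle $B$ (necessarily $1$-sided); joining $A$ to $B$ by a path $P$ in the connected graph $G$ and traversing $A$, then $P$, then $B$, then $P$ in reverse yields a closed walk whose cycle decomposition is $A$, $B$, and one length-$2$ (hence $2$-sided and even) closed walk per edge of $P$, so this walk is $2$-sided and odd. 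Hence it is enough to find $Y$ with $|Y|\le c$ meeting every $1$-sided even cycle of $G$. I would also record a rigidity fact needed below: every $1$-sided even cycle of $G$ lies inside a \emph{bipartite} block of $G$, because a $2$-connected non-bipartite subgraph all of whose odd cycles are $1$-sided has no $1$-sided even cycle, by Lemma~\ref{lem:equivalent_signatures}.

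For the final step, take a maximal collection $\mathcal{A}=\{A_1,\dots,A_r\}$ of pairwise node-disjoint $1$-sided even cycles of $G$; viewed as pairwise disjoint $1$-sided simple closed curves in $\surf$ they give $r\le c$. It remains to choose one vertex $v_i\in V(A_i)$ per $i$ so that $G-\{v_1,\dots,v_r\}$ has no $1$-sided even cycle, and \emph{this is the step I expect to be the main obstacle}, since there is in general an unbounded gap between packing and covering odd structures. The proposal is to close the gap by an uncrossing argument in the spirit of Lemma~\ref{EPuncrossing}: choosing $\mathcal{A}$ (and the $v_i$) to minimise a suitable potential (for instance the total length of the $A_i$), one argues that a $1$-sided even cycle $A$ avoiding $\{v_1,\dots,v_r\}$ must meet some $A_i$, and then — working inside the bipartite block containing both $A$ and $A_i$, where the symmetric difference $A\triangle A_i$ is a $\Sigma$-even subgraph that decomposes into $1$-sided even cycles no larger than $A_i$ — one replaces $A_i$ by such a cycle disjoint from $A$, contradicting minimality. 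The bipartite-block rigidity is what makes the uncrossing local and controllable, and an analogue of Lemma~\ref{EPminimalwalks} restricted to bipartite blocks is the technical ingredient to nail down. Summing over components then gives $|Y|\le c\le g$, as claimed.
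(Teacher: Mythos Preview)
Your structural observations are correct and useful: the reduction to connected non-bipartite $G$, the equivalence ``$2$-sided odd closed walk $\Leftrightarrow$ $1$-sided even cycle'', and the fact (via Lemma~\ref{lem:equivalent_signatures}) that every $1$-sided even cycle lies in a bipartite block are all sound. The genuine gap is exactly where you flag it: you are asserting that for a maximal packing $\mathcal{A}=\{A_1,\dots,A_r\}$ of $1$-sided even cycles one can pick a \emph{single} vertex $v_i\in A_i$ per cycle so that $\{v_1,\dots,v_r\}$ already meets every $1$-sided even cycle. That is packing $=$ covering (ratio exactly $1$), not merely a bounded ratio, and your uncrossing sketch does not establish it. Concretely, when $A$ and $A_i$ intersect inside a bipartite block, the symmetric difference $A\triangle A_i$ is $\Sigma$-even and decomposes into an even number of $1$-sided cycles together with some $2$-sided ones; nothing in the sketch forces one of those $1$-sided pieces to be \emph{shorter} than $A_i$ (or otherwise to decrease your potential), nor to be disjoint from $A$ after the swap, nor to remain node-disjoint from the other $A_j$. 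The appeal to an ``analogue of Lemma~\ref{EPminimalwalks}'' would need to be made precise for arbitrary surfaces, not just the projective plane, and that analogue is not available in the paper.

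The paper sidesteps this packing--covering question entirely with a short induction on $|V(G)|$ via the block tree. If every component is $2$-connected, Lemma~\ref{lem:equivalent_signatures} (together with the trivial bipartite case) already gives that $G$ has no $2$-sided odd closed walk, so $Y=\emptyset$ works. Otherwise take an endblock $B$ with cutnode $v$. If $B$ contains no $1$-sided cycle, then $B$ is bipartite with all cycles $2$-sided, so any minimal $2$-sided odd closed walk avoids the interior of $B$; delete those inner nodes and induct on the same surface. If $B$ does contain a $1$-sided cycle, delete the single vertex $v$: the graph $G-v$ embeds in a surface $\surf'$ with $c(\surf')\le c(\surf)-1$, and induction gives $Y'$ with $|Y'|\le c(\surf)-1$, so $Y:=Y'\cup\{v\}$ works. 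In other words, the paper spends one transversal vertex per \emph{cutnode attached to a block that carries a crosscap}, rather than one per cycle in a packing; this is why no uncrossing or min--max argument is needed. If you want to rescue your approach, the missing ingredient is a genuine proof that in a $2$-connected bipartite embedded graph the minimum $1$-sided-cycle transversal equals the maximum $1$-sided-cycle packing, and that statement is not a consequence of anything proved here.
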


\begin{proof}
    We prove the first part of the lemma by induction on the number of nodes. The second part is obvious.

    If $G$ is $2$-connected, then there is nothing to prove since by Lemma~\ref{lem:equivalent_signatures}, $G$ does not contain any $2$-sided odd closed walk. The same argument applies if all connected components of $G$ are $2$-connected.

    Now, assume that $G$ has a component that is not $2$-connected, let $B$ be an endblock of that component, and $v$ be the corresponding cutnode.

    First, assume that $B$ does not contain any $1$-sided cycle. Then, every closed walk in $B$ is $2$-sided. In fact, by Lemma \ref{lem:equivalent_signatures}, every closed walk in $B$ is $2$-sided and even. Thus no minimal $2$-sided odd closed walk in $G$ contains an inner node of $B$. We remove the inner nodes of $B$ from $G$ to obtain a proper subgraph $G'$. By induction, there is a set $Y' \subseteq V(G')$ of size at most $c = c(\surf)$ such that $G' - Y'$ has no $2$-sided odd walk. Then the set $Y := Y'$ is as required.

    Second, assume that $B$ has a $1$-sided cycle. Then $G' := G - v$ can be embedded in a surface $\surf'$ with $c(\surf') \leqslant c(\surf) - 1$. By induction, there is a set $Y' \subseteq V(G')$ of size at most $c(\surf')$ such that $G' - Y'$ has no $2$-sided odd walk. Then we may take $Y := Y' \cup \{v\}$.
\end{proof}

\section{Making the algorithm faster}
\label{sec:FPT}

In this section we discuss how to speed up the two main steps of our algorithm, which are (i) finding a small transversal for $2$-sided odd walks in embedded graphs, (ii) solving the special instance of the minimum cost homologous circulation problem that arises in our context. We explain how to obtain an FPT algorithm for step (i). However, this remains an open question for step (ii).

Finding a transversal for the $2$-sided odd walks in an embedded graph $G$ with signature $\Sigma$ can be done through the \emph{double cover} of $G$, which is defined exactly as the planar double cover of a projective plane graph, see Section~\ref{secEP}.

\begin{obs} \label{obs:RSV}
    Let $G'$ denote the double cover of $G$. A node subset $X \subseteq V(G)$ meets all $2$-sided odd closed walks of $G$ if and only if $X' := \{v^1,v^2 \mid v \in X\}$ meets all odd cycles of $G'$.
\end{obs}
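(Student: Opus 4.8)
The plan is to unfold closed walks through the natural covering projection $\pi : V(G') \to V(G)$, $v^i \mapsto v$, and track how the signature $\Sigma$ interacts with lifts. First I would recall the defining property of the double cover: for every edge $uv \in E(G) \setminus \Sigma$ we have edges $u^1v^1$ and $u^2v^2$ in $G'$, and for every $uv \in \Sigma$ we have $u^1v^2$ and $u^2v^1$. Consequently, a walk in $G$ that starts at a lift $v_0^i$ has a \emph{unique} lift: each successive edge $e_j$ forces the choice of copy, staying in the same layer when $e_j \notin \Sigma$ and switching layers when $e_j \in \Sigma$. Thus a closed walk $W = (v_0,e_1,\dots,e_k,v_0)$ in $G$ lifts to a \emph{closed} walk in $G'$ starting at $v_0^1$ if and only if $W$ is $\Sigma$-even; if $W$ is $\Sigma$-odd, its lift starting at $v_0^1$ ends at $v_0^2$, and concatenating it with the lift starting at $v_0^2$ gives a closed walk in $G'$ of length $2k$ that projects to $W$ traversed twice.

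With this unfolding understood, I would argue both implications of the iff. For the forward direction, suppose $X$ meets all $2$-sided (i.e.\ $\Sigma$-even) odd closed walks of $G$; let $C'$ be an odd cycle of $G'$ and set $W := \pi(C')$, a closed walk in $G$ of the same (odd) length. Since $C'$ lifts $W$ and is a genuine \emph{cycle} (hence a closed walk that returns to its start in the same layer), $W$ must be $\Sigma$-even, so $W$ is a $2$-sided odd closed walk; by hypothesis $X$ contains some node $v$ of $W$, and then $X' = \{v^1,v^2 : v\in X\}$ contains whichever copy $v^i$ lies on $C'$. For the converse, suppose $X'$ meets all odd cycles of $G'$, and let $W$ be a $\Sigma$-even odd closed walk in $G$; decomposing $W$ into cycles as an Eulerian multigraph, some cycle $C$ among them is odd (since $|E(W)|$ is odd). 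As $W$ is $\Sigma$-even it need not be that $C$ is $\Sigma$-even, but here I would instead note that it suffices to produce \emph{some} odd cycle $C$ of $G$ that is $\Sigma$-even and uses a node of $W$: indeed $G$ is embedded so that no odd cycle is $2$-sided fails in general — so I would rather argue directly that the closed walk $W$ itself lifts, since it is $\Sigma$-even, to a closed walk $\widetilde W$ in $G'$ of odd length $k$; $\widetilde W$ contains an odd cycle $C'$ of $G'$, which $X'$ meets at some $v^i$, and then $v = \pi(v^i) \in X$ lies on $\pi(\widetilde W) = W$. This is the cleaner route and avoids any appeal to cycle-by-cycle parity.

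The only genuinely delicate point — and the step I expect to need the most care — is the ``unique lift'' claim and its interaction with parity: one must verify that the layer at the end of the lifted walk is determined by $|\{j : e_j \in \Sigma\}| \bmod 2$, independently of the chosen starting layer, and that this is exactly the $\Sigma$-parity that Section~\ref{secDualOrientation} identifies with $2$-sidedness for closed walks. Everything else is bookkeeping: the projection $\pi$ of a closed walk is a closed walk of the same length, cycles decompose Eulerian closed walks, and $X'$ is by construction the full $\pi$-preimage of $X$, so ``$X'$ meets $\widetilde W$'' and ``$X$ meets $\pi(\widetilde W)$'' are literally equivalent. I would present the argument in the forward/backward format above, inserting a one-line lemma on unique lifts at the start.
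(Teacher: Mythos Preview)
Your proposal is correct. The paper states this as an observation without proof; the closest it comes is in the proof of Lemma~\ref{EP2packing}, where it remarks that ``each $2$-sided odd cycle $C$ of $G$ lifts to two odd cycles $C^1$ and $C^2$ of $G'$'' and ``each odd cycle $C$ of $G'$ projects to a $2$-sided odd closed walk $W_C$ of $G$'' --- which is exactly the lifting/projection correspondence you spell out. Your treatment is more careful in two respects: you handle closed \emph{walks} (not just cycles) on the $G$-side, as the statement requires, and you correctly note that the clean route for the converse is to lift the whole $\Sigma$-even odd closed walk and extract an odd cycle upstairs, rather than trying to decompose into cycles downstairs (where individual pieces need not be $\Sigma$-even). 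The unique-lift lemma you flag is indeed the only substantive point, and your justification via layer-switching at $\Sigma$-edges is the standard one.
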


Thanks to Observation~\ref{obs:RSV}, finding a small transversal $X$ for the $2$-sided odd closed walks of $G$ reduces to finding a small odd cycle transversal $X'$ in the double cover $G'$, which can be done with the well-known FPT algorithm of Reed, Smith and Vetta~\cite{RSV04}. If we use the improved algorithm by Lokshtanov, Saurabh and Sikdar~\cite{LSS09}, one can find a suitable transversal $X$ in time $O(2t \cdot 3^{2t} \cdot |V(G)| \cdot |E(G)|)$, where $t := f(g) \cdot k + g$ is an upper bound on $|X|$.

Chambers, Erickson and Nayyeri~\cite{CEN12} give an FPT algorithm for the minimum cost homologous circulation problem in undirected graphs embedded in orientable surfaces, with respect to real homology. It is not clear to us that their algorithm applies to our instances of the minimum cost homologous circulation problem (see Problem \ref{probCirculationHomologous}). The reason is that in our case, homology is over the integers, the surface is non-orientable and the graph is directed. We therefore propose the following as an open problem.  

\begin{conj} \label{conj:FPThomology}
Let $D$ be a directed graph cellularly embedded on a (possibly non-orientable) surface $\surf$ of Euler genus $g$, $c: A(D) \to \mathbb{R}_{\geq 0}$ be a cost function, and $y \in \mathbb Z_{\geq 0}^{A(D)}$ be a circulation.  Then a minimum $c$-cost circulation in $D$ that is homologous to $y$ can be computed using at most $g^{O(g)} n^{3/2}$  arithmetic operations, where $n$ is the number of nodes of $D$.
\end{conj}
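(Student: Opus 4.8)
The plan is to adapt the minimum-cost homologous flow framework of Chambers, Erickson, and Nayyeri~\cite{CEN12} to our setting, whose only real differences are that $D$ is directed, $\surf$ may be non-orientable, and homology is taken over $\Z$ (with its $\Z_2$ torsion) rather than over $\Z_2$.

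First I would reduce the problem to a \emph{correction problem}. Since the input circulation $y$ already lies in the target homology class, every nonnegative integer circulation homologous to $y$ has the form $y + \gamma$, where $\gamma$ is an integer circulation homologous to $\zerovec$ with $\gamma \ge -y$; so it suffices to find a minimum-$c$-cost such $\gamma$. The point of this reduction is that the null-homologous integer circulations of $D$ are exactly the integer combinations of facial circulations, equivalently (identifying edges of $D$ with edges of its dual $D^*$) the integer \emph{tensions} $\gamma(e) = z(f) - z(f')$ of $D^*$ with $z \in \Z^{F(D)}$, where $f, f'$ are the two faces of $D$ incident to $e$. Hence the correction problem is a minimum-cost tension problem in $D^*$ subject to the lower bounds $\gamma \ge -y$; its constraint matrix is the transpose of a network matrix and hence totally unimodular, so $\gamma$ and $z$ may be taken integral automatically, and by linear-programming duality the problem is equivalent to a minimum-cost flow problem in a graph obtained from $D^*$ by adding one node and at most $|E(G)|$ arcs to encode the bounds. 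Crucially, $D^*$ is cellularly embedded in $\surf$, a surface of Euler genus $g$, and this reduction disposes of the integrality issue entirely.

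Next I would solve this bounded-genus minimum-cost flow problem quickly, following~\cite{CEN12}. Compute a cut graph of $\surf$, i.e. a system of $O(g)$ loops whose removal turns $\surf$ into a disk; in the non-orientable case one of these loops is $1$-sided, so the corresponding side-identification of the resulting polygon is orientation-reversing and the corresponding ``crossing coordinate'' is well-defined only modulo $2$, which is precisely the $\Z_2$ factor of $H_1(\surf;\Z)$. Cutting $D^*$ along this system yields a planar graph $\widehat{D^*}$ whose outer face has $O(g)$ boundary arcs. One then enumerates the $g^{O(g)}$ candidate homology classes of an optimal flow relative to a fixed feasible flow, solving for each a planar minimum-cost flow in $\widehat{D^*}$ with the corresponding net boundary crossings pinned (via cost-scaling plus planar shortest-path/separator techniques, in $O(n^{3/2})$ arithmetic operations), and keeps the best; this gives the claimed $g^{O(g)}\, n^{3/2}$ bound. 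An equivalent route would be to shrink the cover graph of Definition~\ref{defnCoverGraphAlgorithm} directly: the correction viewpoint shows that one never needs homology coordinates larger than $\poly(g)$ relative to a feasible solution, which would replace the ranges $\{-|E(G)|,\dots,|E(G)|\}^{g-1}$ by $\poly(g)$-sized ones.

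The hardest part, I expect, is re-establishing over the \emph{integers} and on a \emph{non-orientable} surface the structural lemma underlying~\cite{CEN12} — that an optimal homologous circulation has only $O(g)$ crossings with a fixed cut system, so that only $g^{O(g)}$ crossing profiles need be examined — verifying that the $\Z_2$-torsion class is handled by a single parity bit and that the free part still obeys the $O(g)$ bound despite the orientation-reversing side of the polygonal schema. A secondary nuisance is making the planar minimum-cost flow subroutine run in $O(n^{3/2})$ rather than $O(n^{3/2}\log n)$ arithmetic operations, so as to meet the stated bound exactly. By contrast, checking that directedness causes no genuine difficulty — minimum-cost flow being already a directed problem, and orientations of \emph{curves} (not of the surface) sufficing to define integer homology — should be routine.
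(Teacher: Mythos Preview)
The statement you are attempting to prove is labelled \texttt{conj} in the paper and is explicitly introduced as an \emph{open problem}: the authors write that it is not clear to them that the algorithm of Chambers, Erickson and Nayyeri applies when homology is over $\Z$, the surface is non-orientable, and the graph is directed, and they ``therefore propose the following as an open problem.'' There is thus no proof in the paper to compare your proposal against; what you have written is a sketch toward resolving a question the authors leave open.

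As a sketch it has a concrete gap. Your reduction to a tension problem in $D^*$ hinges on the claim that the null-homologous integer circulations are exactly the vectors $\gamma(e) = z(f) - z(f')$ for $z \in \Z^{F(D)}$, with a constraint matrix that is ``the transpose of a network matrix and hence totally unimodular.'' This is correct on an orientable surface, where one can orient all faces consistently so that every arc receives $+1$ from one incident face and $-1$ from the other. On a non-orientable surface no such global orientation exists: for some arcs both incident facial circulations traverse the arc in the same direction, and the corresponding column of the face--arc boundary matrix has two entries of the same sign. The resulting matrix is the incidence matrix of a \emph{bidirected} graph, not a network matrix, and it is not totally unimodular (already a $2 \times 2$ block $\bigl(\begin{smallmatrix}1&1\\1&-1\end{smallmatrix}\bigr)$ has determinant $-2$). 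So the step ``$\gamma$ and $z$ may be taken integral automatically'' fails precisely in the non-orientable case that distinguishes the conjecture from the known result. You do flag, in your final paragraph, that re-establishing the $O(g)$ crossing bound over $\Z$ on a non-orientable surface is ``the hardest part'' and not yet done; together with the TU issue above, this means the proposal is an outline of where the difficulties lie rather than a proof, which is consistent with the paper's assessment that the question is open.
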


\section{An extended formulation}
\label{secEF}

In this section, we establish a proof of Theorem~\ref{thm:extfor}. Let $ G $ be a graph with $ \ocp(G) \le k $ embedded in a surface $ \surf $ with Euler genus at most $ g $. Regarding $ k, g $ as constants, we will show that our approach yields a polynomial-size extended formulation for the stable set polytope $ \stab(G) $ of $ G $. In what follows, we first argue that it suffices to establish the claim for graphs satisfying the standard assumptions. In the second part, we show that it actually suffices to obtain a small extended formulation for $ \sub(G) $ instead of $ \stab(G) $. Finally, in the third part, will derive an extended formulation for $ \sub(G) $ for the case of graphs satisfying the standard assumptions.

\subsection{Restricting to graphs satisfying the standard assumptions}

We know that there is a subset $ X \subseteq V(G) $ with $ |X| \le f(g) \cdot k + g $ such that $ G - X $ satisfies Assumption~\ref{assOddCycles1Sided}. That is, every odd closed walk in $ G - X $ is $ 1 $-sided. For a partition $ X = X_0 \cup X_1 $ consider the polytope
\[
    P_{X_0, X_1} := \{ x \in \stab(G) \mid x(v) = 0 \text{ for all } v \in X_0, \, x(v) = 1 \text{ for all } v \in X_1 \}\,.
\]
Note that $ P_{X_0, X_1} $ is the stable set polytope of a subgraph of $ G - W $ and that
\[
    \stab(G) = \conv \left(\bigcup_{X_0, X_1} P_{X_0, X_1}\right) \,,
\]
where the union is over all partitions of $ X $. Let $ \xc(P) $ denote the smallest size of an extended formulation for $ P $. A result of Balas~\cite{Balas79} states that if $ P $ is the convex hull of the union of polyhedra $ P_1,\dotsc,P_t $ with identical recession cones, then $ \xc(P) \le \sum_{i=1}^t \xc(P_i) + t $. Thus, we may assume that $ G $ satisfies Assumption~\ref{assOddCycles1Sided}.

If $ G $ is not 2-connected, there exist subgraphs $ G_1,\dotsc,G_t $ of $ G $ whose union is $ G $ and where each $ G_i $ is 2-connected and disjoint from all graphs in the list except $ G_{i-1} $ and $ G_{i+1} $ for which we have $ |V(G_i) \cap V(G_{i \pm 1})| \le 1 $. By Chv\'atal's clique cutset lemma~\cite[Theorem 4.1]{Chvatal75}, $ \xc(\stab(G)) \le \sum_{i=1}^t \xc(\stab(G_i)) $. Hence, we may assume that $ G $ is $ 2 $-connected.

We may assume that $ G $ is non-bipartite, otherwise $ \stab(G) = \{ x \in [0,1]^{V(G)} \mid x(v) + x(w) \le 1 \text{ for all } vw \in E(G) \} $ and we are done. By the paragraph below Consequence~\ref{consqNonOrientable} we see that $ \surf $ must be non-orientable. Finally, we may also assume that $ G $ cannot be embedded in a surface of Euler genus $ g - 1 $. Otherwise, we consider such an embedding and repeat the above argumentation. Note that this process will be repeated at most $ g $ times. The paragraph below Consequence~\ref{consqCellular} shows that the embedding of $ G $ is cellular. This means that $ G $ now satisfies the standard assumptions.

\subsection{Restricting to $ \sub(G) $}

Let $ M $ be an edge-node incidence matrix of $ G $ and consider the polyhedron $ \sub(G) = \conv \{ x \in \Z^{V(G)} \mid Mx \le \onevec \} $. To obtain a small extended formulation for $ \stab(G) = \conv \{ x \in \{0,1\}^{V(G)} \mid Mx \le \onevec \} $, the following fact shows that it suffices to find one for $ \sub(G) $.

\begin{prop}
    \label{propStabVsSub}
    For every graph $ G $ we have $ \stab(G) = \sub(G) \cap [0,1]^{V(G)} $.
\end{prop}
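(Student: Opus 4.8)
The inclusion $\stab(G)\subseteq\sub(G)\cap[0,1]^{V(G)}$ is immediate: every $0/1$-vector $x$ with $Mx\le\onevec$ is an integer vector satisfying $Mx\le\onevec$ that lies in $[0,1]^{V(G)}$. For the reverse inclusion, note that all three sets are products over the connected components of $G$, so it suffices to treat a connected $G$. If $G$ is bipartite (in particular if $|V(G)|\le 1$), then $M$ is totally unimodular, so $\{x:Mx\le\onevec,\ \zerovec\le x\le\onevec\}$ is an integral polytope; it contains $\sub(G)\cap[0,1]^{V(G)}$ (as $\sub(G)\subseteq\{x:Mx\le\onevec\}$) and is clearly contained in it, so all three sets equal $\stab(G)$. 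So assume from now on that $G$ is connected and non-bipartite; then $M$ has full column rank, hence $\sub(G)$ is pointed, and by Proposition~\ref{prop:subG_01-vertices} the vertices of $\sub(G)$ are exactly the incidence vectors of stable sets of $G$. Consequently $\sub(G)=\stab(G)+C$, where $C:=\rec(\sub(G))=\{r\in\R^{V(G)}:Mr\le\zerovec\}$. Since $G$ has no isolated vertex, $x\in\sub(G)$ and $x\ge\zerovec$ force $x(v)\le 1-x(w)\le 1$ for any neighbour $w$ of $v$, so $\sub(G)\cap[0,1]^{V(G)}=\sub(G)\cap\R^{V(G)}_{\ge 0}$. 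Also $\stab(G)$ is down-monotone on $\R^{V(G)}_{\ge 0}$: every facet of $\stab(G)$ is either a non-negativity constraint or of the form $a^\intercal x\le\beta$ with $a\ge\zerovec$ (were some $a(v_0)<0$, that facet would be contained in — hence, by a dimension count, equal to — the facet $\{x(v_0)=0\}$), so $\zerovec\le x'\le x\in\stab(G)$ implies $x'\in\stab(G)$. It therefore suffices to show that every $x\in\sub(G)\cap\R^{V(G)}_{\ge 0}$ admits a decomposition $x=p+r$ with $p\in\stab(G)$ and $r\le\zerovec$: for then $\zerovec\le x\le p$, whence $x\in\stab(G)$.

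Start from any decomposition $x=p+r$ with $p\in\stab(G)$ and $r\in C$ (one exists as $\sub(G)=\stab(G)+C$). If $r\le\zerovec$ we are done, so suppose $r(v)>0$ for some $v$. Since $Mr\le\zerovec$, the set $P_+:=\{v:r(v)>0\}$ is a stable set, $N(P_+)\subseteq\{v:r(v)<0\}$, and $r(w)\le -r(v)$ whenever $v\in P_+$ and $vw\in E(G)$. Define $p'$ by $p'(v):=x(v)$ for $v\in P_+$, by $p'(w):=\min\!\big(p(w),\,1-\max_{v\in P_+\cap N(w)}x(v)\big)$ for $w\in N(P_+)$, and by $p'(v):=p(v)$ otherwise, and put $r':=x-p'$. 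Then $r'\le\zerovec$: indeed $r'(v)=0$ on $P_+$; on $N(P_+)$ either $p'(w)=p(w)$ and $r'(w)=r(w)\le 0$, or $p'(w)=1-x(\bar v)$ for some $\bar v\in P_+\cap N(w)$ and then $r'(w)=x(w)+x(\bar v)-1\le 0$ because $x\in\sub(G)$; and $r'(v)=r(v)\le 0$ on the remaining vertices. Since $x=p'+r'$ with $r'\le\zerovec$, it remains only to prove that $p'\in\stab(G)$.

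Establishing $p'\in\stab(G)$ is the crux, and is where the real work lies. One verifies it not facet by facet but by transforming a representation $p=\sum_j\gamma_j\,\chi^{T_j}$ of $p$ as a convex combination of incidence vectors of stable sets $T_j$: replace each $T_j$ by $(T_j\setminus N(P_+^j))\cup P_+^j$ for a suitable independent set $P_+^j\subseteq P_+$ (again a stable set, since $P_+^j$ is independent and disjoint from the neighbourhood of $T_j\setminus N(P_+^j)$). Using $p(v)=x(v)-r(v)<x(v)$ for $v\in P_+$ (so the sets $T_j$ avoiding $v$ carry total mass $1-p(v)\ge r(v)$) together with the inequalities above, the sets $P_+^j$ can be chosen — with some bookkeeping to control how the modifications for different $v\in P_+$ overlap on $N(P_+)$ — so that the resulting convex combination of stable sets dominates $p'$ coordinatewise; as $p'\ge\zerovec$ and $\stab(G)$ is down-monotone, this gives $p'\in\stab(G)$, completing the proof. (Equivalently, the whole statement reduces to showing that every vertex of the polytope $\sub(G)\cap[0,1]^{V(G)}$ is integral, and the modification above is precisely the mechanism for that.) Everything else — the reductions, the down-monotonicity, and the identity $\sub(G)=\stab(G)+C$ — is routine given Proposition~\ref{prop:subG_01-vertices}.
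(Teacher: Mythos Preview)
Your reductions up to the definition of $p'$ are fine, but the central claim that $p'\in\stab(G)$ is false in general, so the argument cannot be completed as written. Take $G=C_5$ with vertices $1,\dots,5$, let $x=(0,\tfrac12,\tfrac12,0,\tfrac12)$, and choose the decomposition $p=(\tfrac12,\tfrac12,\tfrac12,\tfrac12,0)\in\stab(C_5)$, $r=(-\tfrac12,0,0,-\tfrac12,\tfrac12)\in C$. Then $P_+=\{5\}$, $N(P_+)=\{1,4\}$, and your formula gives $p'(5)=x(5)=\tfrac12$, $p'(1)=\min(\tfrac12,1-\tfrac12)=\tfrac12$, $p'(4)=\tfrac12$, $p'(2)=p'(3)=\tfrac12$. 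So $p'=(\tfrac12,\dots,\tfrac12)$, which violates the odd-cycle inequality $\sum_i x_i\le 2$ and hence lies outside $\stab(C_5)$. Consequently no choice of the sets $P_+^j$ in your ``bookkeeping'' step can produce a convex combination of stable sets dominating $p'$: any such combination $q$ satisfies $\sum_i q_i\le 2<\tfrac52=\sum_i p'_i$. The problem is not the overlap on $N(P_+)$ (here $|P_+|=1$), but that starting from an \emph{arbitrary} decomposition $x=p+r$ is too weak; the specific $p'$ you build need not lie in $\stab(G)$ at all.

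The paper takes a different and cleaner route: it proves directly that every vertex $x^*$ of $\sub(G)\cap[0,1]^{V(G)}$ is integral, by induction on $|V(G)|$. The key observation is that such a vertex must have $x^*(v_0)=0$ for some node $v_0$ (otherwise, using connectedness, no box constraint is tight and $x^*$ would be a vertex of $\sub(G)$, hence $0/1$ by Proposition~\ref{prop:subG_01-vertices}, a contradiction). One then invokes a short projection lemma --- the coordinate projection of $\sub(G)$ onto $V(G-v_0)$ equals $\sub(G-v_0)$ --- applies induction to write the projected point as a convex combination of characteristic vectors of stable sets of $G-v_0$, and lifts back using $x^*(v_0)=0$. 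This sidesteps entirely the need to manipulate a Minkowski decomposition.
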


Thus, given an extended formulation for $ \sub(G) $ we only need to add at most $ 2|V(G)| $ linear inequalities (describing $ [0,1]^{V(G)}) $ to obtain a description for $ \stab(G) $. Notice that if $ M \in \{0,1\}^{m \times n} $, then the above claim reads
\begin{equation}
    \label{eqSuprisingSetPackingIdentity}
    \conv \{ x \in \{0,1\}^n \mid Mx \le \onevec \} = \conv \{ x \in \Z^n \mid Mx \le \onevec \} \cap [0,1]^n\,.
\end{equation}
This identity appears rather surprising since it is not satisfied by general $ 0/1 $ matrices $ M $: As an example, consider the graph depicted in Figure~\ref{figCounterExample} and let $ M \in \{0,1\}^{16 \times 24} $ be its node-edge (as opposed to edge-node) incidence matrix. On the one hand, there is no $ x \in \{0,1\}^{24} $ with $ Mx = \onevec $ since the graph has no perfect matching. On the other hand, it can be seen that the vector $ \frac{1}{3} \onevec \in \R^{24} $ is contained in the set of the right-hand side of Equation~\eqref{eqSuprisingSetPackingIdentity}. Since the graph is $ 3 $-regular, we have $ M (\frac{1}{3} \onevec) = \onevec $. This shows that the sets in Equation~\eqref{eqSuprisingSetPackingIdentity} are not identical.

\begin{figure}
    \centering
    \includegraphics[scale=0.5]{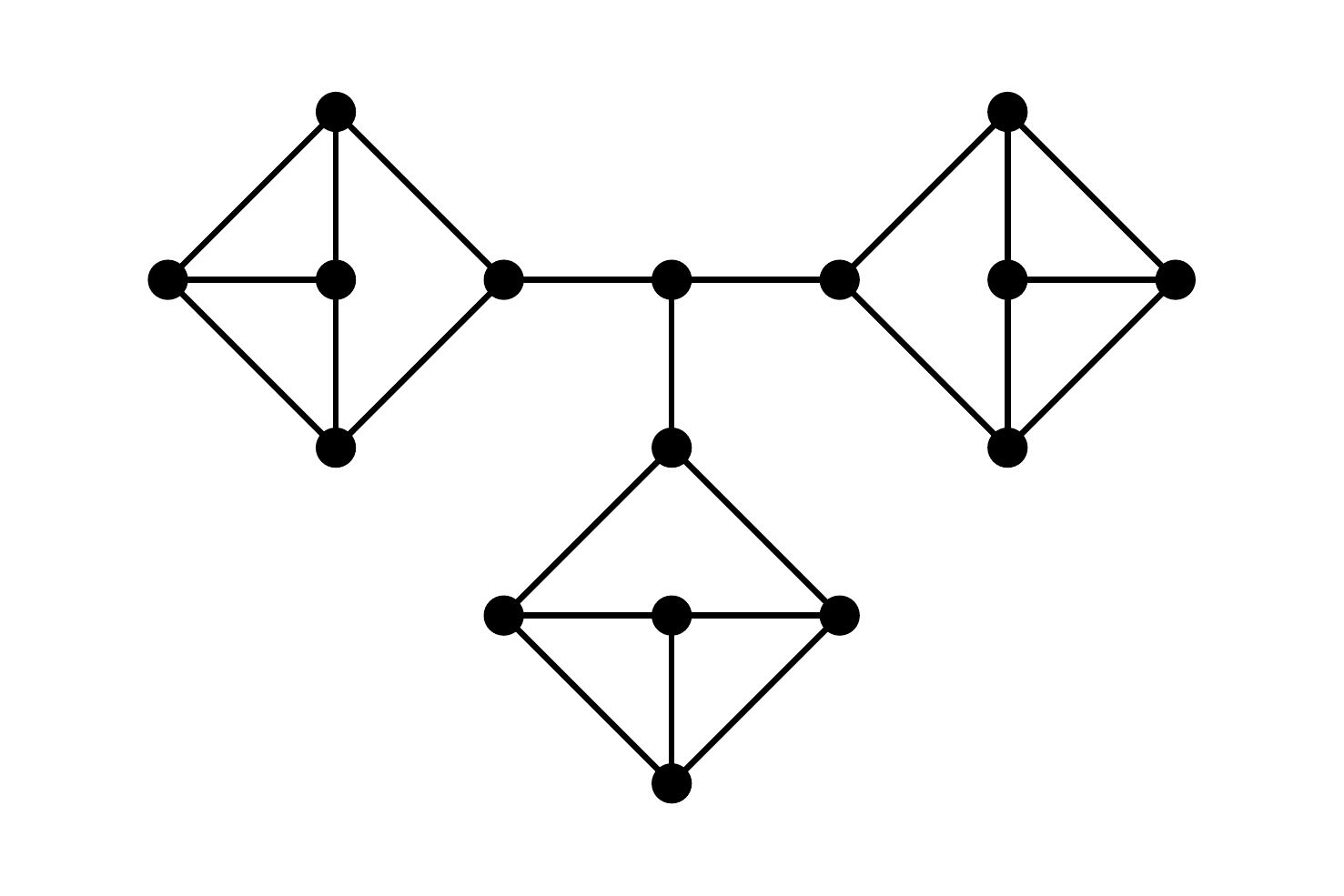}
    \caption{A graph whose \emph{node-edge} incidence matrix does not satisfy Equation~\eqref{eqSuprisingSetPackingIdentity}.}
    \label{figCounterExample}
\end{figure}

In the proof of Proposition~\ref{propStabVsSub} we make use of the following lemma.

\begin{lem}
    \label{lem:P(G)_projection}
    Let $G$ be any graph, and let $v_0$ be any fixed node of $G$. The projection of $\sub(G)$ onto the coordinates indexed by $V(G-v_0)$ equals $\sub(G-v_0)$.
\end{lem}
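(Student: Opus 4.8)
The plan is to prove the two inclusions separately, exploiting the fact that a linear map commutes with taking convex hulls. Write $\proj\colon\R^{V(G)}\to\R^{V(G-v_0)}$ for the coordinate projection forgetting the $v_0$-coordinate, let $M$ and $M'$ be the edge-node incidence matrices of $G$ and $G-v_0$ respectively, and set $A:=\{x\in\Z^{V(G)}\mid Mx\le\onevec\}$ and $A':=\{x'\in\Z^{V(G-v_0)}\mid M'x'\le\onevec\}$, so that $\sub(G)=\conv A$ and $\sub(G-v_0)=\conv A'$. Since $\proj$ is linear, $\proj(\sub(G))=\proj(\conv A)=\conv(\proj(A))$, so it suffices to check $\proj(A)\subseteq A'$ and $A'\subseteq\proj(A)$; both sides then have the same convex hull, namely $\sub(G-v_0)$.

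For $\proj(A)\subseteq A'$ I would simply observe that every edge of $G-v_0$ is an edge of $G$ not incident to $v_0$: given $x\in A$ and $x':=\proj(x)$, each such edge $vw$ yields $x'(v)+x'(w)=x(v)+x(w)\le1$, and $x'$ is integral, so $x'\in A'$. For the reverse inclusion $A'\subseteq\proj(A)$ — the only step with any real content — I would lift $x'\in A'$ to $x\in\Z^{V(G)}$ by keeping all coordinates of $x'$ and setting $x(v_0):=1-\max\{x'(w)\mid w\in N_G(v_0)\}$ (and $x(v_0):=0$ if $v_0$ is isolated), which is an integer. The inequalities of $Mx\le\onevec$ coming from edges of $G-v_0$ hold because $x$ agrees with $x'$ there, and for an edge $v_0w$ we get $x(v_0)+x(w)\le x(v_0)+\max_u x'(u)=1$. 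Hence $x\in A$ and $\proj(x)=x'$.

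I do not expect a genuine obstacle here: the point that makes the lifting free is that $M$ is nonnegative, so the constraints of $Mx\le\onevec$ involving $v_0$ bound $x(v_0)$ only from above, never from below — there is always an admissible integer value for the new coordinate. The only thing to be a little careful about is to argue through the commutation of $\proj$ with $\conv$ (so as not to wrestle directly with the unbounded polyhedra), and to handle the degenerate case where $v_0$ has no neighbours in $G$.
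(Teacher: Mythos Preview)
Your proof is correct and follows essentially the same approach as the paper: both reduce the statement to checking that $\proj(A)=A'$ on integer points via the commutation of projection with convex hull, and both lift $x'\in A'$ by choosing $x(v_0)$ small enough (you give an explicit formula, the paper simply decreases $x(v_0)$ by a large integer amount). The idea and structure are identical.
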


\begin{proof}
    To see that the projection of $\sub(G)$ is contained in $\sub(G-v_0)$, it suffices to prove that every integer point $x \in \sub(G)$ projects to a point in $\sub(G-v_0)$. Let $x' \in \Z^{V(G-v_0)}$ be the projection of $x$. Then, for every edge $vw$ in $G-v_0$ we have $x'(v) + x'(w) = x(v) + x(w) \leqslant 1$ and hence $x' \in \sub(G-v_0)$, as claimed.

    Conversely, let $x' \in \Z^{V(G-v_0)}$ be any integer point in $\sub(G-v_0)$. Consider a point $x \in \Z^{V(G)}$ that projects to $x'$. By decreasing $x(v_0)$ by a sufficiently large integer amount, we may assume that that $x(v) + x(w) \leqslant 1$ for all edges $vw \in E(G)$. Hence, $x$ is an integer point in $\sub(G)$. We conclude that the projection of $\sub(G)$ contains $\sub(G-v_0)$.
\end{proof}

\begin{proof}[Proof of Proposition~\ref{propStabVsSub}]
    It suffices to show that the polytope $\sub(G) \cap [0,1]^{V(G)}$ is integer. We establish this claim by induction on the number of nodes of $G$. The statement is clearly true if $G$ consists of a single node. Now assume that $G$ has at least two nodes, and the statement holds for all proper induced subgraphs of $G$. We have to show that it holds for $G$ itself.

    We may assume that $G$ is connected. If not, then let $G_1$ and $G_2$ be disjoint and proper induced subgraphs of $G$ whose union is equal to $G$, and in particular $\sub(G) = \sub(G_1) \times \sub(G_2)$. By the induction hypothesis we know that $ \sub(G_1) \cap [0,1]^{V(G_1)} $ and $ \sub(G_2) \cap [0,1]^{V(G_2)} $ are integer and hence $\sub(G) \cap [0,1]^{V(G)} = (\sub(G_1) \cap [0,1]^{V(G_1)}) \times (\sub(G_2) \cap [0,1]^{V(G_2)}) $ is integer as well.

    Now consider any vertex $x^*$ of $ \sub(G) \cap [0,1]^{V(G)} $. Let $V_0 \subseteq V(G) $ denote the set of nodes $v$ such that $x^*(v) = 0$ and $V_1 \subseteq V(G)$ denote the set of nodes $v$ such that $x^*(v) = 1$.

    Let us first consider the case that $V_0 = \emptyset$. We claim that also $V_1 = \emptyset$. Suppose not, so $x^*(v) = 1$ for some $v \in V(G)$. Let $w \in V(G)$ be a neighbor of $v$. Such a node exists since $G$ is connected and has at least two nodes. Since $x^*(w) \geqslant 0 $ and $ x^*(v) + x^*(w) \leqslant 1 $, we obtain $x^*(w) = 0$, a contradiction to $V_0 = \emptyset$. So, in this case we would have $0 < x^*(v) < 1$ for all $v \in V(G)$, implying that $x^*$ is a vertex of $\sub(G)$. However, vertices of $\sub(G)$ are integer and hence we arrive at another contradiction.

    Thus, there must exist a node $v_0 \in V_0$. By Lemma~\ref{lem:\sub(G)_projection}, the projection of $x^*$ onto the coordinates indexed by $V(G-v_0)$ belongs to $\sub(G-v_0) \cap [0,1]^{V(G-v_0)}$. By induction, this projection can be expressed as a convex combination of 0/1-points in $\sub(G-v_0)$. Thus, there exist stable sets $S_1,\dotsc,S_k$ of $G-v_0$ and coefficients $\lambda_1, \dotsc, \lambda_k \in \R_{\ge 0}$ such that $\sum_{i} \lambda_i = 1$ and
    \[
    x^*(v) = \sum_{i : v \in S_i} \lambda_i
    \]
    for all $v \in V(G-v_0)$. Since $x^*(v_0) = 0$, the equation above also holds for $v = v_0$. Now, every stable set of $G-v_0$ is also a stable set of $G$. It follows that $x^*$ is a convex combination of $0/1$-points in $\sub(G)$.
\end{proof}

\subsection{The construction}

It remains to construct a small extended formulation for $ \sub(G) $ in the case that $ G $ satisfies the standard assumptions. Recall that in this case the polyhedron $ \slack(G) = \sigma(\sub(G)) $ is affinely isomorphic to $ \sub(G) $, and hence $ \xc(\slack(G)) = \xc(\sub(G)) $. Thus, it suffices to give a small extended formulation for $ \slack(G) $. To this end, let $ (D,\omega) $ be a dual representation of $ G $ and consider again the cover graph $ \bar{D} $.

For each node $ (f,b) \in V(\bar{D}) $ with $ b \ne \zerovec $ we define the polyhedron $ \bar{Q}_{f,b} $ as the convex hull of non-negative integer unit flows from $ (f,\zerovec) $ to $ (f,b) $ in $ \bar{D} $. It is a well-known that each $ Q_{f,b} $ can be described using linearly many (in the size of $ \bar{D} $) linear inequalities (actually without using additional variables).

Let $ \ell = O(g) $ be the constant in Lemma~\ref{lem:decomposition} and consider the set
\[
    \bar{Q} := \conv \Bigg( \bigcup_{t \in [\ell]} \bigg( \bigcup_{\substack{(f_1,b_1), \dotsc, (f_t,b_t) \in V(\bar{D}) \\ b_1 + \dotsb + b_t = (1,\zerovec) \\ b_1,\dotsc,b_t \ne \zerovec }} \left( \bar{Q}_{f_1,b_1} + \dotsb + \bar{Q}_{f_t,b_t} \right) \bigg) \Bigg) \,.
\]
Notice that the recession cone of each $ \bar{Q}_{f,b} $ is equal to the space of circulations in $ \bar{D} $. This implies that $ \bar{Q} $ is actually a polyhedron with the same recession cone. This also means that we can apply Balas' theorem. Using the simple estimate $ \xc(P_1 + \dotsb + P_t) \le \xc(P_1) + \dotsb + \xc(P_t) $ for Minkowski sums we thus obtain that there exists a polynomial-size (provided that $ g $ is constant) extended formulation for $ \bar{Q} $.

Let $ \pi : \R^{A(\bar{D})} \to \R^{A(D)} $ be the linear projection given in Definition~\ref{defnCoverGraphAlgorithm}. It remains to show that $ \pi(\bar{Q}) = \slack(G) $.

To see that $ \pi(\bar{Q}) \subseteq \slack(G) $ let $ (f_1,b_1), \dotsc, (f_t,b_t) $ be any nodes in $ \bar{D} $ with $ b_1 + \dotsb +  b_t = (1,\zerovec) $, and for each $ i \in [t] $ let $ \bar{y}_i $ be a non-negative integer unit flow from $ (f_i,\zerovec) $ to $ (f_i,b_i) $ in $ \bar{D} $. By Observation~\ref{obsProjection} we have that $ y := \pi(\bar{y}_1 + \dotsb + \bar{y}_t) $ is a non-negative integer circulation in $ D $ such that
\[
    \omega(y) = \omega(\pi(\bar{y}_1)) + \dotsb + \omega(\pi(\bar{y}_t)) = b_1 + \dotsb + b_t = (1,\zerovec)\,.
\]
Thus, $ y \in \slack(G) $, as desired.

Next, let us compare the recession cones of $ \pi(\bar{Q}) $ and $ \slack(G) $. Recall that the recession cone of $ \bar{Q} $ is the space of circulations in $ \bar{D} $. By Observation~\ref{obsProjectionOfCirculation} we obtain that the projection of the recession cone of $ \bar{Q} $ under $ \pi $ is equal to the set of circulations $ y $ in $ D $ with $ \omega(y) = \zerovec $. On the other hand, this is also the recession cone of $ \slack(G) $.

Thus, in order to show $ \slack(G) \subseteq \pi(\bar{Q}) $ it suffices to show that every vertex of $ \slack(G) $ is contained in $ \pi(\bar{Q}) $. To this end, let $ y $ be a vertex of $ \slack(G) $. By Lemma~\ref{lem:decomposition} there exist $ y_1,\dotsc,y_t $ with $ t \in [\ell] $ and $ y = y_1 + \dotsb + y_t $ such that each $ y_i $ is the characteristic vector of a strongly connected (directed) Eulerian subgraph of $ D $ that is non-homologous to $ \zerovec $. By Lemma~\ref{lemLift} we have that for every $ i \in [t] $ there exist $ (f_i, b_i) \in V(\bar{D}) $ and a vector $ \bar{y}_i \in \bar{Q}_{f_i,b_i} $ with $ \pi(\bar{y}_i) = y_i $. Furthermore, by Observation~\ref{obsProjection} we know that
\[
    b_1 + \dotsb + b_t = \omega(y_1) + \dotsb + \omega(y_t) = \omega(y) = (1,\zerovec)\,.
\]
This implies that $ \bar{y}_1 + \dotsb + \bar{y}_t $ is contained in $ \bar{Q} $. Thus, we obtain $ y = \pi(\bar{y}_1 + \dotsb + \bar{y}_t) \in \pi(\bar{Q}) $, as claimed.

\section*{Acknowledgements}

We thank Paul Wollan for insightful discussions regarding the Erd\H{o}s-P\'osa results in Section~\ref{secEP}, and Ahmad Abdi for discussions that led to the example depicted in Figure~\ref{figCounterExample}.  We also thank Erin Chambers for valuable feedback on Conjecture~\ref{conj:FPThomology}.   

Samuel Fiorini and Tony Huynh acknowledge support from ERC grant \emph{FOREFRONT} (grant agreement no. 615640) funded by the European Research Council under the EU's 7th Framework Programme (FP7/2007-2013). 
Gwena\"el Joret acknowledges support from an ARC grant from the Wallonia-Brussels Federation of Belgium. 

\bibliographystyle{abbrv}
\bibliography{references}{}

\end{document}